\renewcommand{\cite}[1]{\citep{#1}}
\newtheorem{theorem}{Theorem}
\newtheorem{proposition}{Proposition}
\newtheorem{lemma}{Lemma}
\newtheorem{definition}{Definition}
\newtheorem{example}{Example}
\newenvironment{proof}{\noindent{\em Proof}.\ }{\vskip\parskip\noindent}
\long\def\comment#1{}
\newcommand{\caI}{{\cal I}}
\newcommand{\caL}{{\cal L}}
\newcommand{\caP}{{\cal P}}
\newcommand{\caR}{{\cal R}}
\newcommand{\caS}{{\cal S}}
\newcommand{\caT}{{\cal T}}
\newcommand{\caX}{{\cal X}}
\newcommand{\sort}[1]{\ensuremath{\mathsf{#1}}}
\newcommand{\occ}[1]{\mathit{Pos}(#1)}
\newcommand{\funocc}[1]{\mathit{Pos}_{\Symbols}(#1)}
\newcommand{\dom}{{\cal D}om}
\newcommand{\ran}{{\cal R}an}
\newcommand{\var}{{\cal V}ar}
\newcommand{\Variables}{\caX}
\newcommand{\Symbols}{\Sigma}
\newcommand{\subterm}[2]{#1|_{#2}} 
\newcommand{\replace}[3]{#1[#3]_{#2}}
\newcommand{\idsubst}{\textit{id}}
\newcommand{\TermsOn}[5]{{\caT^{#4}_{#1}(#2)}{}_{#3}^{#5}}
\newcommand{\Terms}{\TermsOn{\Symbols}{\Variables}{}{}{}}
\newcommand{\ETerms}{\TermsOn{\Symbols\!/\!E}{\Variables}{}{}{}}
\newcommand{\TermsS}[1]{\TermsOn{\Symbols}{\Variables}{\sort{#1}}{}{}}
\newcommand{\ETermsS}[1]{\TermsOn{\Symbols\!/\!E}{\Variables}{\sort{#1}}{}{}}
\newcommand{\GTermsOn}[2]{\caT^{#2}_{#1}}
\newcommand{\GTerms}{\GTermsOn{\Symbols}{}}
\newcommand{\GTermsS}[1]{\GTermsOn{\Symbols,\sort{#1}}{}}
\newcommand{\TermsSt}{\TermsS{State}}
\newcommand{\composeRel}{;}
\newcommand{\rewrite}[1]{\rightarrow_{#1}}
\newcommand{\rewrites}[1]{\rightarrow^*_{#1}}
\newcommand{\rewritePos}[2]{\stackrel{#2}{\rightarrow}_{#1}}
\newcommand{\narrow}[2]{\stackrel{#1}{\rightsquigarrow}_{#2}}
\newcommand{\csu}[3]{\textit{CSU}_{#3}({#1})}
\newcommand{\restrict}[1]{{|_{#1}}}
\newcommand{\tuple}[1]{\ensuremath{\langle}\ensuremath{#1}\ensuremath{\rangle}}
\newcommand{\congr}[1]{=_{#1}}
\newcommand{\nI}[1]{\ensuremath{#1{\notin}\caI}}
\newcommand{\inI}[1]{\ensuremath{#1{\in}\caI}}
\newcommand{\caRP}{R_{\caP}}
\newcommand{\narrto}{\leadsto}
\newcommand{\sendto}{\hookrightarrow}
\title{State Space Reduction in\\ the Maude-NRL Protocol Analyzer
}
\author[escobar]{
Santiago Escobar}
\ead{sescobar@dsic.upv.es}
\address[escobar]{
DSIC-ELP, Universidad Polit\'ecnica de Valencia, 
Valencia, Spain
}
\author[meadows]{
Catherine Meadows}
\ead{meadows@itd.nrl.navy.mil}
\address[meadows]{
Naval Research Laboratory, 
Washington, DC, USA
}
\author[meseguer]{
Jos\'e Meseguer}
\ead{meseguer@cs.uiuc.edu}
\address[meseguer]{
University of Illinois at Urbana-Champaign, 
Urbana, IL, USA
}
\begin{document}

\pagestyle{plain}
\markright{}



\pagestyle{plain}


\begin{abstract}
The Maude-NRL Protocol Analyzer (Maude-NPA) is a tool and inference system for reasoning about the security of cryptographic protocols in which the cryptosystems satisfy different equational properties.  
It both extends and provides a formal framework for  
the original NRL Protocol Analyzer, 
which supported equational reasoning in a more limited way.
Maude-NPA
supports a wide variety of
 algebraic properties that includes many crypto-systems of interest
 such as, for example, one-time pads and Diffie-Hellman.
Maude-NPA, like the original NPA, looks for attacks by searching backwards from an insecure attack state, and assumes an unbounded number of sessions.  Because of the unbounded number of sessions and the support for different equational theories, it is necessary to develop ways of reducing 
the search space and avoiding infinite search paths.  
In order for the techniques to prove useful, they need not only to speed up the search, but should not violate completeness, so that failure to find attacks still guarantees security.  
In this paper we describe some state space reduction techniques that we have implemented in Maude-NPA.  
We also provide completeness proofs, and experimental evaluations of their 
effect on the performance of Maude-NPA.
\end{abstract}

\maketitle

\section{Introduction}

The Maude-NPA \cite{EscMeaMes-tcs06,FOSAD07} is a tool and inference system for reasoning about the security
of cryptographic protocols in which the cryptosystems satisfy different equational properties.
The tool handles searches in the unbounded session model, and thus can be 
used to provide proofs of security as well as to search for attacks.
It is the next generation of the NRL Protocol Analyzer
\cite{meadows-NRL94}, a tool that supported limited equational reasoning
and was successfully applied to the analysis of many different protocols.  
In Maude-NPA we improve on the original NPA in three ways.  
First of all, unlike NPA, which required considerable interaction with the user, Maude-NPA
is
completely automated (see \cite{FOSAD07}).
Secondly, its inference system has a formal basis in terms of rewriting logic and narrowing, which
allows us to provide proofs of soundness and completeness 
(see \cite{EscMeaMes-tcs06}).
Finally, the 
tool's inference system
supports reasoning modulo
the algebraic properties of
cryptographic and other functions
(see 
\cite{escobar-meadows-meseguer-secret06,escobar-hendrix-meadows-meseguer-secret07,STM10}).
Such algebraic properties are 
expressed as 
equational theories $E=E' \uplus Ax$ whose equations $E'$ are
confluent, coherent, and 
terminating rewrite rules modulo 
equational axioms $Ax$ such as commutativity ($C$), 
associativity-commutativity ($AC$),
or associativity-commutativity plus identity ($ACU$)
of some function symbols.
The Maude-NPA has then both dedicated and generic methods 
for solving unification problems in such theories $E=E' \uplus Ax$
\cite{escobar-meseguer-sasse-wrla08,Escobar-WRLA10,Escobar-JLAP},
which under appropriate checkable conditions 
\cite{EscobarMeseguerSasseRTA08}
yield finitary unification
algorithms.

Since Maude-NPA allows reasoning in the unbounded session model, and because it allows
reasoning about different equational theories 
(which typically generate many more solutions to unification problems
than syntactic unification, leading to bigger state spaces), 
it is necessary to find ways of pruning the search space in order to prevent 
infinite or overwhelmingly large search spaces.  
One technique for preventing 
infinite searches is  the generation of formal grammars describing terms 
unreachable by the intruder
(see \cite{meadows-NRL94,EscMeaMes-tcs06}
and Section~\ref{sec:grammars}).  However, grammars do not prune out all infinite searches, since unbounded session security is undecidable,
and there is a need for
other techniques.  Moreover, even when a search space is finite it may still be necessary to reduce it to a manageable size, and 
state space reduction
techniques for doing that will be necessary.  In this paper
we describe some of the major 
state space reduction
techniques that we have implemented in Maude-NPA, and provide completeness proofs and experimental evaluations
demonstrating an average state-space size reduction of
$99\%$ (i.e., the average size of the reduced state space is $1\%$ of that of
the original one) in the examples we have evaluated.
Furthermore, we show our combined techniques effective in obtaining
a \emph{finite} state space for all protocols in our experiments.

The optimizations we describe in this paper were designed specifically for Maude-NPA, and work within
the context of Maude-NPA search techniques.  However, although different tools use different models and search algorithms,
they all have a commonality in their syntax and semantics that means that, with some adaptations, optimization
techniques developed for one tool or type of tools can be applied to different tools as well.  Indeed, we have
already seen such common techniques arise, for example the technique of giving priority to input or output messages
respectively when backwards or forwards search is used (used by us and by Shmatikov and Stern in [20]) and the
use of the lazy intruder (used by us and, in a different form, by the On-the-Fly Model Checker [1]).  One of our motivations
of publishing our work on optimizations is to encourage the further interaction and adaptation of the techniques for use
in different tools.


The rest of the paper is organized as follows.  
After some preliminaries in Section \ref{sec:preliminaries},
we describe in Section \ref{sec:maude-npa} 
the model of computation used by the Maude-NPA. 
In Section
\ref{sec:optimizations},
 we 
describe 
the various state space reduction techniques 
that have been introduced to control state explosion, 
and give proofs of their completeness
as well as
showing their relations to other optimization techniques in the literature.  
We first 
briefly describe how automatically generated grammars
provide the main reduction that
cuts down the search space.
Then, we describe
how we 
obtain a second important state-space reduction 
by reducing the number of logical variables present in a state.
The additional state space reduction techniques presented in this paper are:
(i) giving priority to input messages in strands,
(ii) early detection of inconsistent states (that will never reach an initial state),
(iii) a relation of transition subsumption (to discard transitions and states
already being processed in another part of the search space),
and
(iv) the super-lazy intruder 
(to delay the generation of substitution instances as much as possible).
In Section \ref{sec:experiments} we describe 
our experimental evaluation of these state-space reduction techniques.  
In Section \ref{sec:conclusions} we describe future
work and conclude the paper.
This is an extended and improved version of 
\cite{EscobarMeadowsMeseguerESORICS08},
including proofs of all the results,
a refinement of the interaction between the transition subsumption
and the super-lazy intruder (Section~\ref{sec:interaction}), 
more examples and explanations,
as well as more benchmarked protocols.

%
\section{Background on Term Rewriting}
\label{sec:preliminaries}

We follow the classical notation and terminology from
\cite{Terese03} for term rewriting
and from
\cite{Meseguer92-tcs,tarquinia} for rewriting logic and order-sorted notions.
We assume an \textit{order-sorted signature} $\Symbols$
with a finite poset of sorts $(\sort{S},\leq)$ and a finite 
number of function symbols.
We assume an $\sort{S}$-sorted family 
$\Variables=\{\Variables_\sort{s}\}_{\sort{s} \in \sort{S}}$
of mutually disjoint variable sets with each $\Variables_\sort{s}$
countably infinite.
$\TermsS{s}$
denotes the set of terms of sort \sort{s},
and
$\GTermsS{s}$ the set of ground terms of sort \sort{s}.
We write 
$\Terms$ and $\GTerms$ for the corresponding term algebras.
We write $\var(t)$ for the set of variables present in a term $t$.
The set of positions of a term $t$ is written $\occ{t}$,
and
the set of non-variable positions $\funocc{t}$.
The subterm of $t$ 
at position $p$
is $\subterm{t}{p}$, and $\replace{t}{p}{u}$ is 
the result of replacing $\subterm{t}{p}$ by $u$ in $t$.
A \textit{substitution} $\sigma$ is a sort-preserving mapping
from a finite subset of $\Variables$, written $\dom(\sigma)$,
to $\Terms$.
The set of variables
introduced by $\sigma$ is $\ran(\sigma)$.
The identity
substitution is $\idsubst$.
Substitutions are homomorphically extended to $\Terms$.
The restriction of $\sigma$ to a set of variables $V$ is 
${\sigma}\restrict{V}$.
The composition of two substitutions is 
$(\sigma\circ\theta)(X)=\theta(\sigma(X))$ for $X\in\Variables$.

A \textit{$\Symbols$-equation} is an unoriented pair $t = t'$,
where $t \in \TermsS{s}$,
$t' \in \TermsS{s'}$,
and $s$ and $s'$ are sorts in the same connected
component of the poset  $(\sort{S},\leq)$.
Given a set $E$ of $\Symbols$-equations,
order-sorted equational logic
induces 
a congruence relation $\congr{E}$ on terms $t,t' \in \Terms$
(see \cite{tarquinia}).   Throughout this
paper we assume that $\GTermsS{s}\neq\emptyset$ for every sort \sort{s}.
We denote the $E$-equivalence class of a term $t\in\Terms$
as $[t]_E$ and the $E$-equivalence classes of all terms $\Terms$
and $\TermsS{s}$
as $\ETerms$ and $\ETermsS{s}$, respectively.

For a set $E$ of $\Symbols$-equations, an \textit{$E$-unifier} for a $\Symbols$-equation $t = t'$ is a
substitution $\sigma$ s.t. $\sigma(t) \congr{E} \sigma(t')$. 
A \textit{complete} set of $E$-unifiers of an
equation $t = t'$ is written
$\csu{t = t'}{W}{E}$. 
We say $\csu{t = t'}{W}{E}$ is \emph{finitary} if it contains a finite
number of $E$-unifiers.
$\csu{t = t'}{W}{}$
denotes a complete set of 
syntactic order-sorted unifiers between terms $t$ and $t'$,
i.e., without any equational property.

A \textit{rewrite rule} is an oriented pair $l \to r$, where
$l \not\in \Variables$
and
$l,r \in \TermsS{s}$ for some sort $\sort{s}\in\sort{S}$. 
An \textit{(unconditional)
  order-sorted rewrite theory} is a triple $\caR = (\Symbols,E,R)$
with $\Symbols$ an order-sorted signature, $E$ a set of
$\Symbols$-equations, and $R$ a set of rewrite rules.  
A \emph{topmost rewrite theory} $(\Symbols,E,R)$ is a rewrite theory 
s.t.
for each $l \to r \in R$, $l,r\in\TermsSt$ for a top sort \sort{State},
$r \not\in \Variables$,
and no operator in $\Symbols$ has \sort{State} as an argument sort.

The rewriting relation $\rewrite{R}$ on
$\Terms$ is 
$t \rewritePos{R}{p} t'$ 
(or $\rewrite{R}$) 
if 
$p \in \funocc{t}$,
$l \to r\in R$, 
$\subterm{t}{p} = \sigma(l)$, 
and $t' =
\replace{t}{p}{\sigma(r)}$
for some $\sigma$.
The relation $\rewrite{R/E}$
on $\Terms$ is 
$\congr{E} \composeRel\rewrite{R}\composeRel\congr{E}$,
i.e.,
$t \rewrite{R/E} s$
iff
$\exists u_1,u_2\in\Terms$ s.t.
$t \congr{E} u_1$,
$u_1 \rewrite{R} u_2$,
and
$u_2 \congr{E} s$.
Note that
$\rewrite{R/E}$ on $\Terms$
induces a relation 
$\rewrite{R/E}$ on $\ETerms$
by
$[t]_{E} \rewrite{R/E} [t']_{E}$ iff $t \rewrite{R/E} t'$.  

When $\caR = (\Symbols,E,R)$ is a topmost rewrite theory,
we can safely restrict ourselves to 
the general rewriting relation $\rewrite{R,E}$ on
$\Terms$, where
the rewriting relation $\rewrite{R,E}$ on
$\Terms$ is 
$t \rewritePos{R,E}{p} t'$ 
(or $\rewrite{R,E}$) 
if 
$p \in \funocc{t}$,
$l \to r\in R$, 
$\subterm{t}{p} \congr{E} \sigma(l)$, 
and $t' =
\replace{t}{p}{\sigma(r)}$
for some $\sigma$.
Note that
$\rewrite{R,E}$ on $\Terms$
induces 
a relation
$\rewrite{R,E}$ on $\ETerms$
by
$[t]_{E} \rewrite{R,E} [t']_{E}$ iff 
$\exists w\in\Terms$ s.t. $t \rewrite{R,E} w$ and $w \congr{E} t'$.  
We say that a term $t$ is \emph{$R,E$-irreducible} 
if there is no term $t'$ such that $t \rewrite{R,E} t'$;
this is extended to substitutions in the obvious way.

The narrowing relation $\narrow{}{R}$ on
$\Terms$ is 
$t \narrow{p}{\sigma,R} t'$ 
(or $\narrow{}{\sigma,R}$, $\narrow{}{R}$) 
if 
$p \in \funocc{t}$,
$l \to r\in R$, 
$\sigma \in \csu{\subterm{t}{p} = l}{W}{}$, 
and $t' = \sigma(\replace{t}{p}{r})$.  
Assuming that $E$ has a finitary and complete unification algorithm,
the narrowing relation $\narrow{}{R,E}$ on
$\Terms$ is
$t \narrow{p}{\sigma,R,E} t'$ 
(or $\narrow{}{\sigma,R,E}$, $\narrow{}{R,E}$) 
if 
$p \in \funocc{t}$,
$l \to r\in R$, 
$\sigma\in\csu{t|_p = l}{V}{E}$, 
and $t' = \sigma(\replace{t}{p}{r})$.

The use of topmost rewrite theories 
is entirely natural for communication protocols, since all state transitions can be viewed as changes of the global distributed state.
It also 
provides several advantages (see \cite{narrowing-hosc06}):
(i) as pointed out above the relation 
$\rewrite{R,E}$ achieves the same effect as the relation $\rewrite{R/E}$,
and
(ii) 
we obtain a completeness result between narrowing ($\narrow{}{R,E}$) and 
rewriting ($\rewrite{R/E}$). 


\begin{theorem}[Topmost Completeness]{\rm\cite{narrowing-hosc06}}\label{thm:hosc06}
Let $\caR=(\Symbols,E,R)$ be a topmost rewrite theory,
$t,t'\in\Terms$, and 
let $\sigma$ be a substitution such that
$\sigma(t) \to^*_{R,E} t'$.
Then, there are substitutions $\theta,\tau$ 
and a term $t''$ such that 
$t \narrto^*_{\theta,R,E} t''$,
$\sigma(t)\congr{E}\tau(\theta(t))$, and $t'\congr{E}\tau(t'')$.
\end{theorem}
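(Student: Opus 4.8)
The plan is to prove, by induction on the length $n$ of the derivation $\sigma(t)\rewrites{R,E}t'$, a statement slightly stronger than the one in the theorem, so that the inductive hypothesis carries enough to compose across steps. Concretely, I would fix a finite set of \emph{protected} variables $V\supseteq\var(t)$ and show: whenever $\sigma(t)\rewrites{R,E}t'$, there are $\theta,\tau$ and a term $t''$ with $t\narrto^*_{\theta,R,E}t''$, with $\ran(\theta)$ fresh away from $V$, with the residual equality $\sigma\restrict{V}\congr{E}(\theta\composeSubst\tau)\restrict{V}$ holding on \emph{all} of $V$ (not merely after applying both sides to $t$), and with $t'\congr{E}\tau(t'')$. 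The theorem is then the special case $V=\var(t)$: applying $\sigma\restrict{V}\congr{E}(\theta\composeSubst\tau)\restrict{V}$ to $t$ yields $\sigma(t)\congr{E}\tau(\theta(t))$. I would also assume $t\notin\Variables$, so that the root is a legitimate narrowing position; when $t$ is a bare variable no root redex can be lifted, so this degenerate case is excluded (as it is in the intended use, where $t$ is a genuine state term).

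The decisive use of the topmost hypothesis comes first. Since every $l\to r\in R$ has $l\in\TermsSt$ with \sort{State} a top sort and no operator of $\Symbols$ takes \sort{State} as an argument sort, no term can possess a proper subterm of sort \sort{State}. Hence every $\rewrite{R,E}$-redex sits at the root position $\rootpos$, and since $r\notin\Variables$ every term reached after at least one step is again a non-variable \sort{State}-term. This eliminates the usual obstacle in lifting lemmas, where one must require $\sigma$ to be $R,E$-irreducible to forbid rewriting inside the substitution part: here the redex is always the entire term, which lies at $\rootpos\in\funocc{t}$ and is therefore narrowable.

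For the base case $n=0$ I take $\theta=\idsubst$, $\tau=\sigma$, $t''=t$. For the inductive step, write $\sigma(t)\rewrite{R,E}s_1\rewrites{R,E}t'$ where the first step uses a freshly renamed rule $l\to r$ at $\rootpos$, so there is $\rho$ with $\sigma(t)\congr{E}\rho(l)$ and $s_1=\rho(r)$. As $l$ is variable-disjoint from $V$ and from $\sigma$, the union $\mu=\sigma\uplus\rho\restrict{\var(l)}$ is well defined and $E$-unifies $t$ and $l$. By completeness of $\csu{t=l}{W}{E}$ with $W=V\cup\var(l)$, there exist $\theta_1\in\csu{t=l}{W}{E}$ and $\gamma$ with $\mu\restrict{W}\congr{E}(\theta_1\composeSubst\gamma)\restrict{W}$ and $\ran(\theta_1)$ fresh. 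The matching narrowing step is $t\narrow{\rootpos}{\theta_1,R,E}t''_1$ with $t''_1=\theta_1(r)$; since $\var(r)\subseteq\var(l)\subseteq W$ (any extra variables of $r$ are carried into $t''_1$ and handled by extending $\gamma$ to agree with $\rho$ on them), we get $s_1=\mu(r)\congr{E}\gamma(\theta_1(r))=\gamma(t''_1)$. Because $\rewrite{R,E}$ is well defined on $E$-classes, $\gamma(t''_1)\rewrites{R,E}t'$ in $n-1$ steps.

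Finally I apply the strengthened inductive hypothesis to $\gamma(t''_1)\rewrites{R,E}t'$ with protected set $V_1\supseteq\var(\theta_1(t))\cup\var(t''_1)$, obtaining $\theta_2,\tau,t''$ with $t''_1\narrto^*_{\theta_2,R,E}t''$, $\gamma\restrict{V_1}\congr{E}(\theta_2\composeSubst\tau)\restrict{V_1}$, and $t'\congr{E}\tau(t'')$. Concatenating gives $t\narrto^*_{\theta,R,E}t''$ with $\theta=\theta_1\composeSubst\theta_2$, and for each $X\in V$ one computes $\sigma(X)=\mu(X)\congr{E}\gamma(\theta_1(X))\congr{E}\tau(\theta_2(\theta_1(X)))=(\theta\composeSubst\tau)(X)$, the middle step using $\var(\theta_1(X))\subseteq V_1$. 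This establishes the residual equality on $V$, while $t'\congr{E}\tau(t'')$ is inherited directly. I expect the main obstacle to be precisely this variable bookkeeping: choosing $V$ and $V_1$, guaranteeing freshness of the renamed rule and of $\ran(\theta_1)$, and propagating the relation over whole variable sets rather than over a single term, so that the two substitution relationships compose cleanly through $\composeSubst$. The rewriting-to-narrowing lift of a single step is then routine, once topmostness has pinned every redex at $\rootpos$.
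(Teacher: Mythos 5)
This theorem is not proved in the paper at all: it is imported from \cite{narrowing-hosc06}, so there is no in-paper argument to compare yours against. Your proof is, in essence, the standard proof of that cited result: topmostness pins every $\rewrite{R,E}$-redex at $\rootpos$ (so the usual requirement that $\sigma$ be $R,E$-normalized disappears), and then a Hullot-style induction with a strengthened hypothesis over a protected variable set lifts each rewrite step to a narrowing step via completeness of $\csu{t=l}{W}{E}$. The argument is correct, with three points worth tightening. First, the protected set $V_1$ passed to the induction hypothesis must contain $\bigcup_{X\in V}\var(\theta_1(X))$, not merely $\var(\theta_1(t))\cup\var(t''_1)$: your final computation uses $\var(\theta_1(X))\subseteq V_1$ for \emph{every} $X\in V$, and $V$ may contain variables outside $\var(t)$; since you only require ``$\supseteq$'', the fix is simply to state the correct lower bound. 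Second, your parenthetical treatment of extra variables in $r$ is needed in earnest in this paper's setting---the reversed rules of type \eqref{eq:newstrand} used for backwards narrowing do have right-hand-side variables not occurring in the left-hand side---and your fix works, but then $s_1=\rho(r)$ rather than $\mu(r)$, so the displayed chain of equalities should be routed through the extended $\gamma$ rather than through $\mu$. Third, your exclusion of $t\in\Variables$ is not a defect of your proof but a genuine quirk of the statement: when $t$ is a variable and the rewrite sequence is nonempty, $\funocc{t}=\emptyset$ forbids any narrowing step and the claim fails as literally stated; the intended reading, and every use made of the theorem in this paper, has $t$ a non-variable \sort{State}-pattern, so flagging and excluding that case is the right call.
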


In this paper, we consider only equational theories
$E=E' \uplus Ax$ such that the rewrite rules $E'$ are
confluent, coherent, and 
terminating modulo 
axioms $Ax$ such as commutativity ($C$), 
associativity-commutativity ($AC$),
or associativity-commutativity plus identity ($ACU$)
of some function symbols.
We also require axioms $Ax$ to be regular, i.e.,
for each equation $l = r \in Ax$, $\var(l) = \var(r)$.
Note that axioms such as
commutativity ($C$), 
associativity-commutativity ($AC$),
or associativity-commutativity plus identity ($ACU$) are regular.
The Maude-NPA has then both dedicated and generic methods 
for solving unification problems in such theories $E' \uplus Ax$
\cite{escobar-meseguer-sasse-wrla08,Escobar-WRLA10,Escobar-JLAP}.


\section{Maude-NPA's Execution Model}\label{sec:maude-npa}

Given a protocol $\mathcal{P}$, we first explain how its states
are modeled algebraically.  The key idea is to model protocol states as
elements of an initial algebra $T_{\Sigma_{\mathcal{P}}/
  E_{\mathcal{P}}}$, where $\Sigma_{\mathcal{P}}$ is the signature
defining the sorts and function symbols for the cryptographic functions and for
all the state constructor symbols,
and
$E_{\mathcal{P}}$ is a set of equations specifying the
\emph{algebraic properties} of the cryptographic functions and the state constructors.
Therefore, a state is an $E_{\mathcal{P}}$-equivalence class $[t]\in
T_{\Sigma_{\mathcal{P}}/ E_{\mathcal{P}}}$ with $t$ a ground
$\Sigma_{\mathcal{P}}$-term.  
However,
since the number of 
states $T_{\Sigma_{\mathcal{P}}/ E_{\mathcal{P}}}$
is in general infinite, rather than exploring concrete protocol states
$[t]\in T_{\Sigma_{\mathcal{P}}/ E_{\mathcal{P}}}$ we 
explore
\emph{symbolic state patterns} $[t(x_{1},\ldots,x_{n})] \in
T_{\Sigma_{\mathcal{P}}/ E_{\mathcal{P}}}(\Variables)$ on the free
$(\Sigma_{\mathcal{P}},E_{\mathcal{P}})$-algebra over a set of
variables $\Variables$.  
In this way, a state pattern $[t(x_{1},\ldots,x_{n})]$
represents not a single concrete state but a possibly infinite set of
such states, namely all the instances of the pattern
$[t(x_{1},\ldots,x_{n})]$ where the variables $x_{1},\ldots,x_{n}$
have been instantiated by concrete ground terms.

In the Maude-NPA \cite{EscMeaMes-tcs06,FOSAD07}, a \emph{state} in the protocol execution 
is a term $t$ of sort \sort{State},
$t \in T_{\Sigma_{\cal P}/ E_{\cal P}}(X)_{\sort{State}}$. 
A state is then a multiset built by an associative and commutative union
operator $\_\&\_$ with identity operator $\emptyset$.  Each element in
the multiset is either a strand or the intruder's knowledge at that state,
both explained below.

A \emph{strand}
\cite{strands} represents the sequence of messages sent and received
by a principal executing the protocol or by the intruder.
A principal sending (resp. receiving) a message $msg$ is represented by 
$msg^+$ (resp. $msg^-$).
We write ${m}^{\pm}$ to denote $m^{+}$ or $m^{-}$,
indistinctively.
We often write $+(m)$ and $-(m)$ instead of $m^{+}$ and $m^{-}$, respectively.
A strand 
is then a list 
$[msg_1^\pm,\ msg_2^\pm,\ msg_3^\pm,\linebreak[2]
  \ldots,\ msg_{k-1}^\pm,\ msg_k^\pm]$
describing the sequence of send and receive actions of a principal role in a protocol,   
  where each $msg_{i}$ is a term of
a special sort \textsf{Msg} described below,
i.e., $msg_{i}\in T_{\Sigma_{\mathcal{P}}/
  E_{\mathcal{P}}}(X)_{\textsf{Msg}}$. 
In Maude-NPA, strands evolve over time
as the send and receive actions take place,
and thus we use the symbol $|$ to divide past and future in a strand,
i.e.,
$[nil, \linebreak[2]msg_1^\pm, \linebreak[2]\ldots, \linebreak[2]msg_{j-1}^\pm \linebreak[2]\mid \linebreak[2]msg_j^\pm, \linebreak[2]msg_{j+1}^\pm, \linebreak[2]\ldots, \linebreak[2]msg_k^\pm, \linebreak[2]nil  ]$
where $msg_1^\pm,\linebreak[2] \ldots,\linebreak[2] msg_{j-1}^\pm$ are
the past messages, and $msg_{j}^\pm, msg_{j+1}^\pm, \ldots, msg_k^\pm$
are the future messages ($msg_{j}^\pm$ is the immediate future
message).
The nils are present so that 
the bar may be placed at the beginning or end of the strand if necessary.
A strand 
$[msg_1^\pm, \linebreak[2]\ldots, \linebreak[2] msg_k^\pm]$ 
is a shorthand for 
$[nil ~|~ msg_1^\pm,\linebreak[2] \ldots,\linebreak[2] msg_k^\pm , nil ]$.
We often remove the nils for clarity, except when there is nothing
else between the vertical bar and the beginning or end of a strand.
We write ${\cal S}_{\cal P}$ for the set of strands in 
the specification of the protocol $\caP$, including
the strands that describe the intruder's behavior.

The \emph{intruder's knowledge} is represented as a multiset of facts unioned
together with an associative and commutative union operator
\verb!_,_!  with identity operator $\emptyset$.  There are two kinds
of intruder facts: positive knowledge facts (the intruder knows message expression $m$,
i.e., $\inI{m}$), and negative knowledge facts (the intruder \emph{does not
yet know} $m$ but \emph{will know it in a future state}, i.e., $\nI{m}$).

Maude-NPA uses a special sort \sort{Msg} of messages
that allows the protocol specifier to describe 
other sorts 
as subsorts of the top sort \sort{Msg}.
The specifier can make use of another special sort $\sort{Fresh}$
in the protocol-specific signature $\Symbols$
for representing fresh unguessable values,
e.g., nonces.
The meaning of a variable of sort 
$\sort{Fresh}$ is that it will never be instantiated
by an $E$-unifier generated during the 
protocol
analysis.
This ensures that if two nonces are represented using 
different variables of sort 
$\sort{Fresh}$, they will never be identified 
and no approximation for nonces is necessary. 
We make 
explicit
the $\sort{Fresh}$ variables 
$r_1,\ldots,r_k (k \geq 0)$
generated by a strand 
by writing 
${:: r_1,\ldots,r_k::}\ [msg_1^\pm,\ldots,msg_n^\pm]$,
where 
each $r_i$ appears first in an output message $msg_{j_i}^+$ and can later be used
in any input and output message of $msg_{j_i+1}^\pm,\ldots,msg_n^\pm$.
Fresh variables generated by a strand are unique to that strand.

Let us introduce 
the well-known Diffie-Hellman protocol as a motivating example.
\begin{example}\label{ExProtocolStrands}\label{ExDH}
The Diffie-Hellman protocol 
uses exponentiation 
to share a secret between two parties, Alice and Bob.
There is a public constant, denoted by $g$, which will be the base of 
the exponentiations.
We represent the product of exponents by using the symbol $*$. 
Nonces 
are represented by $N_X$, denoting a nonce created by principal $X$.
Raising message $M$ to the power of exponent $X$ 
is denoted by $(M)^{X}$.
Encryption of message $M$ using the key $K$ is denoted by $\{M\}_K$.
The protocol description is as follows. 
\begin{enumerate}
\item 
$A \sendto B: \{ A\ ;\ B\ ;\ g^{N_{A}}\}$\\
Alice sends her name,
Bob's name,
and an exponentiation of a new nonce $N_{A}$ created by her 
to Bob.
\item
$B \sendto A: \{ A\ ;\ B\ ;\ g^{N_{B}}\}$\\
Bob sends his name,
Alice's name,
and an exponentiation of a new nonce $N_{B}$ created by him
to Alice.
\item
$A \sendto B : \{secret\}_{{g^{N_{A}}}^{N_B}}$\\
Bob receives $g^{N_{A}}$ and he raises it to the $N_B$ to obtain
the key 
${g^{N_A}}^{N_B}$. He sends a secret to Alice encrypted using the key.
 Likewise, when 
Alice receives  $g^{N_{B}}$, she raises it to the $N_A$, to obtain the key
${g^{N_B}}^{N_A}$.
We assume 
that exponentiation satisfies the equation
$ {g^{N_A}}^{N_B} = g^{N_A * N_B}$
and that the product operation \verb!_*_! is associative and commutative, so that
$${g^{N_B}}^{N_A} 
= {g^{N_A}}^{N_B}
= g^{N_B * N_A} $$ 
and 
therefore 
both 
Alice and Bob
share the same key.
\end{enumerate}

In the Maude-NPA's formalization of the protocol, 
we explicitly specify the signature $\Symbols$ describing 
the sorts and operations for 
messages, nonces, etc.
A nonce $N_A$ is denoted by $n(A,r)$, 
where $r$ is a unique variable of sort $\sort{Fresh}$.
Concatenation of two messages, e.g., $N_A$ and $N_B$, is denoted by the operator
$\_{;}\_$, e.g., $n(A,r)\ ;\ n(B,r')$.
Encryption of a message $M$ 
is denoted by $e(A,M)$, e.g., 
$\{N_B\}_{K_B}$ is denoted by
$e(K_{B},n(B,r'))$.
Decryption is similarly denoted by $d(A,M)$.
Raising a message $M$ to the power of an exponent $E$ (i.e., $M^{E}$)
is denoted by $exp(M,E)$, e.g., 
$g^{N_{B}}$ is denoted by
$exp(g,n(B,r'))$.
Associative-commutative multiplication  of nonces
is denoted by $\_{*}\_$. 
A secret generated by a principal 
is denoted by $sec(A,r)$,
where $r$ is a unique variable of sort $\sort{Fresh}$.
The protocol-specific signature $\Symbols$ 
contains 
the following subsort relations 
$(\sort{Name}, \sort{Nonce}, \sort{Secret}, \sort{Enc}, \sort{Exp} < \sort{Msg})$ 
and
$(\sort{Gen}, \sort{Exp} < \sort{GenvExp})$
and the following operators:
%
\[
\begin{array}{c@{\ \ \ \ \ \ \ \ }c}
a, b, i : \ \rightarrow \sort{Name}
&
g : \rightarrow \sort{Gen} 
\\
n : \sort{Name} \times \sort{Fresh} \rightarrow \sort{Nonce}
&
sec : \sort{Name} \times \sort{Fresh} \rightarrow \sort{Secret}
\\
\_\,{;}\,\_\ : \sort{Msg} \times \sort{Msg} \rightarrow \sort{Msg}
&
e, d : \sort{Key} \times \sort{Msg} \rightarrow \sort{Enc}
\\
exp : \sort{GenvExp} \times \sort{Nonce} \rightarrow \sort{Exp}
&
\verb!_*_! : \sort{Nonce} \times \sort{Nonce} \rightarrow \sort{Nonce}
\end{array}
\]
\noindent
In the following we will use letters $A,B$ for variables of sort $\sort{Name}$,
letters $r,r',r''$ for variables of sort $\sort{Fresh}$,
and
letters $M,M_1,M_2,Z$ for variables of sort $\sort{Msg}$;
whereas letters $X,Y$ will also represent variables, but their sort
will depend on the concrete position in a term.
The encryption/decryption cancellation properties are described using 
the 
equations 
$$e(X,d(X,Z)) = Z \mbox{ and } d(X,e(X,Z)) = Z$$ 
in $E_{\caP}$.
The key algebraic property of exponentiation, $z^{x^y} = z^{x*y}$,
is described using 
the 
equation 
$$exp(exp(W,Y),Z) = exp(W,Y * Z)$$
in $E_{\caP}$
(where $W$ is of sort \sort{Gen} instead of the more general sort
\sort{GenvExp} in order to provide a finitary 
narrowing-based unification procedure
modulo $E_{\caP}$, see \cite{escobar-hendrix-meadows-meseguer-secret07}
for details on this concrete equational theory). 
Although
multiplication modulo a prime number has a unit and inverses, 
we have only included the algebraic properties that are necessary for Diffie-Hellman to work.
The two strands $\caP$ 
associated to the protocol roles, Alice and Bob, shown above 
are:%
%
$$
:: r,r' ::
[\ (A ; B ; exp(g,n(A,r)))^+,\ (B ; A ; X)^-,\  (e(exp(X,n(A,r)), sec(A,r')))^+]$$
$$
:: r'' ::
[\  (A ; B ; Y)^-,\ (B ; A ; exp(g,n(B,r'')))^+,\ 
(e(exp(Y,n(B,r'')),\textit{SR})^- ]$$

\noindent
The following strands describe the intruder abilities 
according
to the Dolev-Yao attacker's capabilities \cite{dolev-yao}.

\begin{itemize}
\item\label{s3a} $[M_1^-, M_2^-, (M_1 ; M_2)^+]$ Concatenation \\[-.3cm]
\item\label{s3b} $[(M_1 ; M_2)^-, M_1^+]$ Left-deconcatenation \\[-.3cm]
\item\label{s3c} $[(M_1 ; M_2)^-, M_2^+]$ Right-deconcatenation \\[-.3cm]
\item\label{s4e} $[\ K^-, M^-, e(K,M)^+ \ ]$ Encryption \\[-.3cm]
\item\label{s4d} $[\ K^-, M^-, d(K,M)^+ \ ]$ Decryption \\[-.3cm]
\item\label{s4} $[\ M_1^-, M_2^-, (M_1 * M_2)^+ \ ]$ Multiplication \\[-.3cm]
\item\label{s5} $[\  M_1^-, M_2^-, exp(M_1,M_2)^+  \ ]$ Exponentiation\\[-.3cm]
\item\label{s6} $[\   g^+  \ ]$ Generator \\[-.3cm]
\item\label{s7} $[  \   A^+  \ ]$ All names are public \\[-.3cm]
\item\label{s8} $::r'''::\;[\ n(i,r''')^+\  ]$ Generation of intruder nonces
\end{itemize}
Note that the intruder cannot extract information from either an exponentiation
or a product of exponents, but can only compose them. 
Also, the intruder cannot extract information directly
from an encryption but it can indirectly by using a decryption
and the cancellation of encryption and decryption,
which is an algebraic property,
i.e.,
$[K^-, e(K,M)^-, M^+ ] \congr{E_\caP} [ K^-, e(K,M)^-, d(K,e(K,m))^+  ]$.
\end{example}


\subsection{Backwards Reachability Analysis}\label{sec:rules}

Our protocol analysis methodology is then based on the idea of
\emph{backwards reachability analysis}, where we begin with one or more
state patterns corresponding to \emph{attack states}, and want to
prove or disprove that they are \emph{unreachable} from the set of
initial protocol states.  In order to perform such a reachability
analysis we must describe how states change as a consequence of
principals performing protocol steps and of intruder actions.
This can be done by describing such state changes by means of a set
$R_{\mathcal{P}}$ of \emph{rewrite rules}, so that the rewrite theory
$(\Sigma_{\mathcal{P}},E_{\mathcal{P}},R_{\mathcal{P}})$ 
characterizes the behavior of protocol $\mathcal{P}$ modulo the equations $E_{\mathcal{P}}$.  
In the case where new strands are not introduced into the state, 
the 
corresponding
rewrite rules in $R_\caP$ 
 are 
as follows\footnote{To simplify the exposition, we omit the fresh variables at the beginning of each strand in a rewrite rule.}, 
where 
$L,L_1,L_2$ denote lists of input and output messages
($+m$,$-m$),
$IK,IK'$ denote sets of intruder facts 
(\inI{m},\nI{m}),
and
$SS,SS'$ denote sets of strands:

\noindent
\begin{small}%
\begin{align}%
[L ~|~ M^-, L']\  \&\ SS\ \&\ (\inI{M},IK)
  &\to 
  [L, M^- ~|~ L']\  \&\ SS\ \&\ (\inI{M},IK)
  \label{eq:negative-1}\\
 [L ~|~ M^+, L']\  \&\ SS\ \&\ IK 
 \hspace{8.6ex}
  &\to 
  [L, M^+ ~|~ L']\  \&\ SS\ \&\ IK
  \label{eq:positiveNoLearn-2}\\
 [L ~|~ M^+, L']\  \&\ SS\ \&\ (\nI{M},IK) 
  &\to 
  [L, M^+ ~|~ L']\  \&\ SS\ \&\ (\inI{M},IK)
  \label{eq:positiveLearn-4}
\end{align}%
\end{small}%

In a \emph{forward execution} of the protocol strands,
Rule \eqref{eq:negative-1} 
describes a message reception event in which an input message
is received from the intruder; 
the intruder's knowledge acts
in fact as the only \emph{channel} through which all communication takes place.
Rule \eqref{eq:positiveNoLearn-2} 
describes a message send in which
the intruder's knowledge is not increased;
it is irrelevant where the message goes.
Rule \eqref{eq:positiveLearn-4} 
describes the alternative case of a send event such that
the intruder's knowledge is positively increased.
%
Note that
Rule \eqref{eq:positiveLearn-4} 
makes explicit \emph{when} the intruder learned
a message $M$, which was recorded in the previous state by the negative fact \nI{M}.
A fact $\nI{M}$ can be paraphrased as: 
``the intruder does not yet know $M$, but will learn it in the future''.
This enables a very important restriction of the tool, expressed by
saying that the intruder \emph{learns} a term \emph{only once} \cite{EscMeaMes-tcs06}:
if the intruder needs to use a term twice, then he must learn it the first time it is needed;
if he learns a term and needs to learn it again in a previous state, 
found later during the backwards search, then the state
will be discarded as unreachable.  
Note that 
Rules \eqref{eq:negative-1}--\eqref{eq:positiveLearn-4}
are \emph{generic}: they belong to $\caR_\caP$ for \emph{any}
protocol $\caP$.

It is also the case that when we are performing a backwards search, only the strands that we are searching for are listed explicitly: 
extra strands necessary to reach an initial state are dynamically added
to the state by explicit introduction through 
protocol-specific rewrite rules
(one for each output message $u^+$ in an honest or intruder strand in $\caS_\caP$)
as follows:

\noindent
\begin{small}%
\begin{align}%
\mbox{for each }[~ l_1,\ u^+,\ l_2~] \in \caS_{\caP}:
[~ l_1 ~|~ u^+, l_2~] \, \&\, SS\, \,\&\, (\nI{u},IK)
\to
SS \,\&\, (\inI{u},IK)
\label{eq:newstrand}%
\end{align}%
\end{small}%
\noindent
where 
$u$ denotes a message,
$l_1,l_2$ denote lists of input and output messages
($+m$,$-m$),
$IK$ denotes a set of intruder facts 
(\inI{m},\nI{m}),
and
$SS$ denotes a set of strands.
For example, intruder concatenation of two learned messages,
as well as the learning of such a concatenation by the intruder,
is described
as follows:

\noindent
\begin{small}%
\begin{align}
& [M_{1}^-, M_{2}^- ~|~ (M_{1} ; M_{2})^+ ]\  \&\ SS\ \&\ (\nI{(M_{1} ; M_{2})},IK) 
\to 
SS\ \&\ (\inI{(M_{1} ; M_{2})},IK)
  \nonumber
\end{align}%
\end{small}%

\noindent
This rewrite rule can be understood, in a backwards search, as
``in the current state
the intruder is able to learn a message that matches the pattern 
$M_{1} ; M_{2}$ 
if he is able to learn message $M_1$ and message $M_2$
in prior states".
In summary, 
for a protocol $\caP$,
the set $R_\caP$ of rewrite rules 
obtained from the protocol strands $\caS_\caP$
that are
used
for backwards narrowing reachability analysis
\emph{modulo} the equational properties $E_{\caP}$
is 
$R_{\caP} = \{ \eqref{eq:negative-1},\eqref{eq:positiveNoLearn-2},\eqref{eq:positiveLearn-4} \}
\cup\eqref{eq:newstrand}$.
These rewrite rules give the basic execution model of Maude-NPA.  However, as we shall see, it
will later be necessary to modify them in order to optimize the search.  In later sections of this paper
we will show how these rules can be modified to optimize the search while still maintaining completeness.

On the other hand, the assumption 
that algebraic properties are 
expressed as 
equational theories $E=E' \uplus Ax$ whose equations $E'$ are
confluent, coherent, and 
terminating rewrite rules modulo 
regular equational axioms $Ax$ such as commutativity ($C$), 
associativity-commutativity ($AC$),
or associativity-commutativity plus identity ($ACU$)
of some function symbols, 
implies some extra conditions on the rewrite theory $R_\caP$
(see \cite{EscMeaMes-tcs06}).
Namely, 
for any term $\inI{m}$ (resp. term $m^-$) and any 
$E'{,}Ax$-irreducible substitution $\sigma$,
$\inI{\sigma(m)}$ (resp. $(\sigma(m))^-$) must be 
$E'{,}Ax$-irreducible.
This is because many of our optimization techniques rely on the assumption that terms have a unique normal form modulo a regular
equational theory,
and achieve their results by reasoning about the normal forms of terms.

Finally, states have, in practice, another component containing 
the actual message exchange sequence between principal or intruder strands
(i..e, all the expressions $m^\pm$ exchanged between the honest and intruder strands). 
We do not make use of the message exchange sequence until 
Section~\ref{sec:interaction}, 
so we delay its introduction until there.


The way to
analyze \emph{backwards} reachability is then relatively easy, namely,
to run the protocol ``in reverse.''  This can be achieved by using the
set of rules $R^{-1}_{\mathcal{P}}$, where $v \longrightarrow u$ is in
$R^{-1}_{\mathcal{P}}$ iff $u \longrightarrow v$ is in
$R_{\mathcal{P}}$.  Reachability analysis can be performed
\emph{symbolically}, not on concrete states but on symbolic state
patterns $[t(x_{1},\ldots,x_{n})]_{E_\caP}$ by means of 
\emph{narrowing
  modulo} $E_{\mathcal{P}}$ (see Section~\ref{sec:preliminaries}).
We call \emph{attack patterns}
those states patterns (i.e., terms with logical variables) 
used to start the narrowing-based
backwards reachability analysis. 
An \emph{initial state} is a state where all strands have their vertical bar at the beginning
and there is no positive fact of 
the form $\inI{u}$ for a message term $u$ in the intruder's knowledge.
If no initial state is found during the backwards reachability analysis
from an attack pattern, the protocol has been proved secure
for that attack pattern with respect to the assumed intruder capabilities
and the algebraic properties.
If an initial state is found, then we conclude that
the attack pattern is possible
and
a concrete attack can be inferred from the exchange sequence stored in the initial state.
Note that an initial state may be generic, in the sense of having logical variables for those elements that are not relevant for the attack.

\begin{example}(Example \ref{ExDH} continued)\label{exDH:cont}
The attack pattern that we are looking for is one in which 
Bob completes the protocol and the intruder is able to learn the secret.  
The attack state pattern to be given as input to Maude-NPA is:%

{\small
$$
\begin{array}{@{}l@{\ }r@{}}
\begin{array}{@{}l@{}}
::r':: 
[\, (A ; B ; Y)^-, (B ; A ; exp(g,n(B,r')))^+, 
(e(exp(Y,n(B,r')),sec(a,r'')))^-\, |\, nil\, ]\\[1ex]
\&\ SS\ \&\ ( \inI{sec(a,r'')},\ IK)
\end{array}
&
(\dagger)
\end{array}
$$
}

\noindent
Using the above attack 
pattern
Maude-NPA 
is able to find an initial state of the protocol,
showing that the attack state is possible. 
Note that this initial state is generalized to two sessions in parallel:
one session where 
Alice (i.e., principal named $a$) is talking to
another principal $B'$
---in this session the intruder gets a nonce $n(a,r)$ originated from $a$---
and 
another session where
Bob (i.e., principal named $b$) 
is trying to talk to Alice.
If we instantiate $B'$ to be $b$, then one session is enough,
although the tool returns the most general attack.
The strands associated to the initial state 
found by the backwards search
are as follows:

{\small
$$
\begin{array}{@{}l@{}}
[nil\mid exp(g, n(a, r)))^-, Z^-, exp(g, Z * n(a, r))^+]\ \& \\[.5ex]
[nil\mid exp(g, Z * n(a, r))^-, e(exp(g, Z * n(a, r)), sec(a, r''))^-, sec(a, r'')^+]\ \& \\[.5ex]
[nil\mid exp(g, n(b, r')))^-, W^-, exp(g, W * n(b, r'))^+]\ \& \\[.5ex]
[nil\mid exp(g, W * n(b, r'))^-, sec(a, r'')^-, e(exp(g, W * n(b, r')), sec(a, r''))^+]\ \& \\[.5ex]
[nil\mid (a ; b ; exp(g, n(b, r')))^-, (b ; exp(g, n(b, r')))^+]\ \& \\[.5ex]
[nil\mid (b ; exp(g, n(b, r')))^-, exp(g, n(b, r'))^+]\ \& \\[.5ex]
[nil\mid (a ; B' ; exp(g, n(a, r)))^-, (B' ; exp(g, n(a, r)))^+]\ \& \\[.5ex] 
[nil\mid (B' ; exp(g, n(a, r)))^-, exp(g, n(a, r))^+]\ \& \\[.5ex]
::r'::\\{}
[nil \,{\mid}\, (a ; b ; exp(g, W))^-, (a ; b ; exp(g, n(b, r')))^+, e(exp(g, W*n(b, r')), sec(a, r''))^-]\, \&\\[.5ex]
::r'',r::\\{}
[nil\mid (a ; B' ; exp(g, n(a, r)))^+, (a ; B' ; exp(g, Z))^-, e(exp(g, Z * n(a, r)), sec(a, r''))^+]\ 
\end{array}
$$
}

\noindent
Note that the last two strands,
generating fresh variables $r,r',r''$, 
are protocol strands and the others are intruder strands.

The concrete 
message exchange sequence
obtained by the 
reachability analysis is the following:

{\small
$$
\begin{array}{@{}c@{\:}c@{\:}c@{}}
\begin{array}{@{}l@{}}
1. (a ; b ; exp(g, W))^- \\ 
2. (a ; b ; exp(g, n(b, r')))^+ \\
3.  (a ; b ; exp(g, n(b, r')))^- \\ 
4.  (b ; exp(g, n(b, r')))^+ \\
5. (b ; exp(g, n(b, r')))^- \\
6. (exp(g, n(b, r')))^+ \\  
7.  (exp(g, n(b, r')))^- \\
8. W^- \\
9. exp(g, W * n(b, r'))^+ \\
\end{array}
&
\begin{array}{@{}l@{}}
10.  (a ; B' ; exp(g, n(a, r)))^+ \\ 
11.  (a ; B' ; exp(g, n(a, r)))^-  \\ 
12. (B' ; exp(g, n(a, r)))^+\\
13. (B' ; exp(g, n(a, r)))^- \\ 
14.  (exp(g, n(a, r)))^+ \\ 
15.  (exp(g, n(a, r)))^- \\
16.  Z^- \\ 
17. exp(g, Z * n(a, r))^+ \\
\end{array}
&
\begin{array}{@{}l@{}}
18.  (a ; B' ; exp(g, Z))^- \\ 
19.  e(exp(g, Z * n(a, r)), sec(a, r''))^+  \\
20.  e(exp(g, Z * n(a,r)), sec(a, r''))^- \\ 
21.  exp(g, Z * n(a,r))^- \\ 
22. sec(a, r'')^+ \\
23.  exp(g, W * n(b, r'))^- \\
24.  sec(a, r'')^- \\ 
25. e(exp(g, W * n(b, r'),sec(a, r''))^+\\
26. e(exp(g, W * n(b, r')).sec(a,r''))^- 
\end{array}
\end{array}
$$
}

\noindent
Step 1) describes Bob (i.e., principal named $b$) receiving an initiating message from the intruder impersonating Alice.  Step 2) describes Bob sending the response, and Step 3) describes the intruder receiving it. 
Steps 4) through 9) describe the intruder computing the key 
$exp(g, W * n(b, r'))$
she will use to communicate with Bob.  Step 10) describes Alice initiating the protocol with a principal $B'$.  Step 11) describes the intruder receiving it, and steps 11) through 17) describe the intruder constructing the key 
$exp(g, Z * n(a, r))$
she will use to communicate with Alice.  Steps 18) and 19) describe Alice receiving the response from the intruder impersonating $B'$ and Alice sending the encrypted message.  Steps 20) through 22) describe the intruder decrypting the message to get the secret.  In steps 23) through 25) the intruder re-encrypts the secret with the key she shares with Bob and sends it, and in Step 26) Bob receives the message.

Note that
there are some intruder strands  missing in the initial state
because certain terms are assumed to be trivially generable by the intruder, and so not searched for;
namely, intruder strands generating 
variable $Z$,
variable $W$,
term
$(a ; b ; exp(g, W))$, and
term
$(a ; B' ; exp(g, Z))$.
Variables $Z$ and $W$ can be filled in with any nonce,
for instance nonces generated by the intruder,
such as
$W = n(i,r''')$ and $Z = n(i,r'''')$
in the following way:

{\small
$$
\begin{array}{l}
::r''':: [nil \mid (n(i,r'''))^+]\ \& 
::r'''':: [nil \mid (n(i,r''''))^+]\ 
\end{array}
$$
}

\noindent
Also, note that nonces $W$ and $Z$ are used by the intruder
to generate messages
$(a ; b ; exp(g, W))$ and
$(a ; B' ; exp(g, Z))$ 
in the following way:

{\small
$$
\begin{array}{l}
[nil \mid (a)^+]\ \&\ 
[nil \mid (b)^+]\ \&\ 
[nil \mid (B')^+]\ \&\ \\[.5ex]
[nil \mid (g)^+]\ \&\ 
[nil \mid (g)^-, W^-, exp(g,W)^+]\ \&\ 
[nil \mid (g)^-, Z^-, exp(g,Z)^+]\ \&\ \\[.5ex]
[nil \mid (a)^-, (b)^-, (a ; b)^+]\ \&\ 
[nil \mid (a ; b)^-, (exp(g,W))^-, (a ; b ; exp(g,W))^+]\ \& \\[.5ex]
[nil \mid (a)^-, (B')^-, (a ; B')^+]\ \&\ 
[nil \mid (a ; B')^-, (exp(g,Z))^-, (a ; B' ; exp(g,Z))^+]\ 
\end{array}
$$
}
\end{example}


\section{State Space Reduction Techniques}\label{sec:optimizations}

In this section we present Maude-NPA's state space reduction techniques.  
Before presenting them,
we formally identify two classes of states that can be safely removed: \emph{unreachable} and \emph{redundant} states.
We begin the presentation with the notion of grammars, and its associated state space reduction technique, which is the oldest Maude-NPA technique and does much to identify and remove non-terminating search paths.  In many cases (although not all) this is enough to turn 
an infinite search space into a finite one.  
We then describe a number of simple techniques which remove states that can be shown to be unreachable, thus saving the cost of searching for them.  We conclude by describing two powerful techniques for eliminating redundant states: subsumption partial order reduction and the super-lazy intruder, and we prove their completeness.

First, the Maude-NPA satisfies a very general completeness result.

\begin{theorem}[Completeness]{\rm\cite{EscMeaMes-tcs06}}\label{thm:EscMeaMes-tcs06}
Given a topmost rewrite theory $\caR_\caP = \linebreak[4](\Symbols_\caP,E_{\caP},R_{\caP})$
representing protocol $\caP$,
and a non-initial state $St$ (with logical variables),
if there is a substitution $\sigma$
and an initial state $St_{ini}$
such that
$\sigma(St) \to^*_{R_{\caP}^{-1},E_\caP} St_{ini}$,
then
there are substitutions $\sigma',\rho$ and an initial state $St'_{ini}$
such that
$St \narrto^*_{\sigma',R_{\caP}^{-1},E_\caP} St'_{ini}$,
$\sigma \congr{E_\caP} \sigma'\circ\rho$,
and
$St_{ini} \congr{E_\caP} \rho(St'_{ini})$.
\end{theorem}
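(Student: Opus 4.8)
The plan is to obtain this completeness statement as an instance of the generic Topmost Completeness result, Theorem~\ref{thm:hosc06}, applied to the \emph{reversed} theory, and then to strengthen its conclusion so that the reached narrowing term is guaranteed to be \emph{initial}. First I would verify that $(\Symbols_\caP, E_\caP, R_{\caP}^{-1})$ is again a topmost rewrite theory, which is what licenses the use of Theorem~\ref{thm:hosc06}. Reversing a rule $l \to r$ into $r \to l$ keeps both sides in $\TermsSt$ and keeps them non-variable (topmostness of $R_\caP$ gives $r \notin \Variables$, and the rewrite-rule condition gives $l \notin \Variables$), while the clause ``no operator has \sort{State} as an argument sort'' depends on $\Symbols_\caP$ alone and so survives reversal. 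I would also record that the standing hypotheses on $E_\caP = E' \uplus Ax$ (confluent, coherent and terminating modulo regular $Ax$) yield a finitary and complete $E_\caP$-unification algorithm, so that narrowing modulo $E_\caP$ is well defined and the preconditions of Theorem~\ref{thm:hosc06} hold.

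Next I would invoke Theorem~\ref{thm:hosc06} with $t := St$, $t' := St_{ini}$, $R := R_{\caP}^{-1}$, $E := E_\caP$ and the given $\sigma$ satisfying $\sigma(St) \to^*_{R_{\caP}^{-1},E_\caP} St_{ini}$. This returns substitutions $\theta, \tau$ and a term $t''$ with $St \narrto^*_{\theta,R_{\caP}^{-1},E_\caP} t''$, $\sigma(St) \congr{E_\caP} \tau(\theta(St))$ and $St_{ini} \congr{E_\caP} \tau(t'')$. Setting $\sigma' := \theta$, $\rho := \tau$ and $St'_{ini} := t''$ gives the narrowing derivation and the identity $St_{ini} \congr{E_\caP} \rho(St'_{ini})$ at once, and the term equation reads $\sigma(St) \congr{E_\caP} (\sigma' \composeSubst \rho)(St)$. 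The slightly stronger substitution identity $\sigma \congr{E_\caP} \sigma' \composeSubst \rho$ (on $\var(St)$) is the inductive invariant maintained throughout the lifting argument behind Theorem~\ref{thm:hosc06} --- where $\theta$ is assembled as the composition of the $\csu{\cdot}{}{E_\caP}$ unifiers chosen along the derivation --- so I would read it off from that induction rather than try to recover it from the term-level shadow, since non-injective equations such as $exp(exp(W,Y),Z) = exp(W,Y*Z)$ would in general block such an extraction.

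The hard part will be the one thing Theorem~\ref{thm:hosc06} does not supply for free: that the narrowing endpoint $St'_{ini} = t''$ is itself \emph{initial} --- every explicit strand has its bar at the front and no intruder fact has the form $\inI{u}$. I would exploit that initiality is a purely structural property, fixed by the placement of the bar ``$\mid$'' in each strand and by the sign ($\inI{m}$ versus $\nI{m}$) of each intruder fact, and that in a Maude-NPA state each strand carries a \emph{concrete} message list, so these markers lie outside the scope of any substitution; the only genuine state variables are the strand-set placeholder $SS$ and the knowledge placeholder $IK$, both of which are compatible with generic initiality. Consequently initiality is invariant under $\congr{E_\caP}$ --- the equations of $E_\caP$ rewrite message subterms and the $AC$/$ACU$ axioms merely reorder the strand and fact multisets --- and is reflected under instantiation. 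From $St_{ini} \congr{E_\caP} \tau(t'')$ together with the initiality of $St_{ini}$ I conclude that $\tau(t'')$ is initial, and since $\tau$ can neither relocate a bar in an explicit strand nor flip a fact's sign, the explicit part of $t''$ already meets both conditions, so $t''$ is initial. A direct alternative I would keep in reserve is to run the lifting step by step: each topmost narrowing step applies the same reversed rule at the root as the corresponding rewrite step, so the bar configuration and fact signs of $t''$ evolve in lockstep with those of $St_{ini}$, and initiality transfers immediately.
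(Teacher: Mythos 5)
The paper offers no proof of this theorem to compare against: it is imported verbatim from \cite{EscMeaMes-tcs06} (hence the citation in its header), so your proposal can only be judged on its own terms. Its skeleton --- check that $(\Symbols_\caP,E_\caP,R_{\caP}^{-1})$ is again topmost (correct: $l\not\in\Variables$ comes from the definition of rewrite rule, $r\not\in\Variables$ from topmostness, and the signature condition is unaffected by reversal), apply Theorem~\ref{thm:hosc06} to the reversed theory, then upgrade its conclusion --- is the natural reconstruction. Your treatment of the substitution-level identity is candid but worth flagging: Theorem~\ref{thm:hosc06} as stated yields only the term-level equation $\sigma(St)\congr{E_\caP}\tau(\theta(St))$, and $\sigma\congr{E_\caP}\sigma'\circ\rho$ must indeed be extracted from the lifting induction inside the cited proof, so your argument is not self-contained relative to the \emph{statement} of Theorem~\ref{thm:hosc06}; that is tolerable here, but it should be presented as a strengthening of the cited lemma rather than as an application of it.

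The genuine gap is in the initiality step. It rests on your claim that in a Maude-NPA state every strand carries a concrete message list, so that bar positions ``lie outside the scope of any substitution.'' That claim is false for the states this theorem quantifies over, and the paper itself supplies the counterexample: in Section~\ref{sec:limiting}, a backwards narrowing step from $St = SS\,\&\,IK$ uses the unifier $SS\mapsto SS'\,\&\,[L, M^- ~|~ L']$ and yields a state containing the strand $[L ~|~ M^-, L']$, whose past is the \emph{variable} $L$. Under further narrowing with most general unifiers such a variable prefix never disappears (unifying the past $L$ against a rule pattern $[L_1, M_1^{\pm} ~|~ \cdots]$ merely replaces it by the fresh variable $L_1$), so the lifted endpoint $t''$ may contain strands of the form $[L_1 ~|~ \ldots]$ while $\tau$ maps $L_1$ to $nil$. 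Then $\tau(t'')$ is initial but $t''$ is not --- a strand whose past is an uninstantiated list variable does not have its bar ``at the beginning'' --- and both your reflection argument and the choice $St'_{ini}:=t''$ collapse. Your fallback (step-by-step lifting keeps bar configurations ``in lockstep'') fails for the same reason: the lockstep holds only up to instantiation of these list variables. The argument does go through for states in which every strand has concrete list structure and there are no $SS$/$IK$/list variables --- exactly the states produced by the restriction of Section~\ref{sec:limiting} --- so the repair is either to restrict the statement to such states, or to prove and use a notion of initiality closed under instantiating list variables by $nil$. As written, the step from ``$\tau(t'')$ is initial'' to ``$t''$ is initial'' is unjustified.
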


Our optimizations are able to identify 
two kinds of unproductive states:
\emph{unreachable} and \emph{redundant} states.
\begin{definition}[Unreachable States]\label{def:unreachable}
Given a topmost rewrite theory $\caR_\caP = (\Symbols_\caP,E_{\caP},R_{\caP})$
representing protocol $\caP$,
a state $St$ (with logical variables) is \emph{unreachable}
if there is no 
sequence 
$St \narrto^*_{\sigma,R_{\caP}^{-1},E_\caP} St_{ini}$
leading to an initial state $St_{ini}$.
\end{definition}
\begin{definition}[Redundant States]
Given a topmost rewrite theory $\caR_\caP = (\Symbols_\caP,E_{\caP},R_{\caP})$
representing protocol $\caP$
and a state $St$ (with logical variables),
a backwards narrowing step 
$St \narrto_{\sigma_1,R_{\caP}^{-1},E_\caP} St_1$
is called \emph{redundant}
(or just state $St_1$ is identified as \emph{redundant})
if 
for any initial state $St_{ini1}$ reachable from $St_1$,
i.e., ${St_1 \narrto^*_{\theta_1,R_{\caP}^{-1},E_\caP} St_{ini1}}$,
there are states $St_2$ and $St_{ini2}$,
a narrowing step
$St \narrto_{\sigma_2,R_{\caP}^{-1},E_\caP} St_2$,
a narrowing sequence
$St_2 \narrto^*_{\theta_2,R_{\caP}^{-1},E_\caP} St_{ini2}$,
and a substitution $\rho$
such that 
$\sigma_1\circ\theta_1 \congr{E_\caP} \sigma_2\circ\theta_2\circ\rho$
and
$St_{ini1} \congr{E_\caP} \rho(St_{ini2})$.
\end{definition}
There are three reasons for wanting to detect
unproductive backwards narrowing reachability steps. 
One is to reduce, if possible, the initially infinite
search space to a finite one, as 
it is sometimes possible to do with the use of grammars, by removing unreachable states.
Another is to reduce the size
of a (possibly finite) search space by 
eliminating unreachable states early, 
i.e., 
before
they are eliminated by exhaustive search.
This elimination of unreachable states 
can have an effect far beyond eliminating a single node in the search space, 
since a single
unreachable state 
may appear multiple times and/or have multiple descendants. 
Finally, 
if there are several steps leading to the same initial state, as for redundant states,
then
it is also possible to use various partial order reduction techniques
that can further shrink the number of states that need to be explored.

\subsection{Grammars}\label{sec:grammars}

The   
Maude-NPA's ability to reason effectively 
about a protocol's algebraic properties is 
a result of 
its combination of symbolic reachability analysis using narrowing
modulo equational properties (see Section~\ref{sec:preliminaries}),
together with its grammar-based techniques  
for reducing the size of the search space.
The key idea of grammars is to detect terms
$t$ in positive facts $\inI{t}$
of the intruder's knowledge of a state $St$
that will never be transformed into a negative fact $\nI{\theta(t)}$
in any initial state $St'$ backwards reachable from $St$.
This means that $St$ can never reach an initial state and therefore
it 
can be safely discarded.
Here we briefly explain how grammars work 
as a state space reduction technique
and refer the reader to 
\cite{meadows96,EscMeaMes-tcs06}
for further details.
%
\emph{Automatically generated grammars }
$\tuple{G_1,\ldots,G_m}$
represent unreachability information (or co-invariants),
i.e., typically infinite sets of states unreachable 
from an initial state. 
These automatically generated grammars are very important in our framework,
since in the best case they 
can reduce the infinite search space to a finite one,  
or, at least, can drastically reduce the search space.

\begin{example}
Consider again the attack pattern $(\dagger)$ in Example~\ref{exDH:cont}. 
%
%
After a couple of backwards narrowing steps, the Maude-NPA finds
the following state:

{\small
$$
\begin{array}{l}%
[\ nil \mid (M ; sec(a,r''))^-,\ (sec(a,r''))^+\ ]\ \&\\[1ex]
::r':: 
[(A ; B ; Y)^-, (B ; A ; exp(g,n(B,r')))^+ \mid  
(e(exp(Y,n(B,r')),sec(a,r'')))^-\ ]\ \&\\[1ex]
(\ 
\inI{(M ; sec(a,r''))},\ 
\inI{e(exp(Y,n(B,r')),sec(a,r''))},\ 
\nI{sec(a,r'')}
\ )
\end{array}
$$
}

\noindent
which corresponds to the intruder obtaining (i.e., learning)
the message
$sec(a,r'')$ from a bigger message $(M; sec(a,r''))$, 
although the contents of variable $M$ have not yet been found
by the backwards reachability analysis.
This process of adding
more and more intruder strands that look for terms
$(M' ; M ; sec(a,r''))$
$(M'' ; M' ; M ; sec(a,r''))$, 
$\ldots$
can go on forever. 
Note that if we carefully check the strands
for the protocol,
we can see that the honest strands either 
never produce a message with normal form 
``$M ; secret$'' or 
such a message is under a public key encryption (and thus the intruder
cannot get the contents), 
so the previous state is clearly unreachable and can be 
discarded.
The grammar, which is generated by Maude-NPA, 
capturing the previous state as unreachable, is as follows:

{\small
\begin{verbatim}
grl M inL => e(K, M) inL . ; 
grl M inL => d(K, M) inL . ; 
grl M inL => (M ; M') inL . ; 
grl M inL => (M' ; M) inL . ; 
grl M notInI, 
    M notLeq exp(g, n(A, r)), 
    M notLeq B ; exp(g, n(A, r')) => (M' ; M) inL .)
\end{verbatim}
}

\noindent
where all the productions and exceptions refer to normal forms of messages
w.r.t. the equational theory $E_\caP$.

Intuitively, the last production rule in the grammar above says
that any term with normal form $M' ; M$
cannot be learned by the intruder if the subterm $M$ is different from 
$exp(g, n(A, r))$ and $B ; exp(g, n(A, r'))$
(i.e., it does not match such patterns)
and the constraint \nI{M} appears explicitly 
in the intruder's knowledge of the current state being checked
for unreachability.
Moreover, any term of any of the following normal forms: 
$e(A, M)$,
$d(A, M)$, 
$(M' ; M)$,
or $(M ; M')$ cannot be learned by the intruder
if subterm $M$ is also not learnable by the intruder.
\end{example}


\subsection{Public data}\label{sec:public}

The simplest optimization possible 
is one that can be provided explicitly by the user.
When we are searching for some data
that we know is easy to learn by the intruder,
the tool can avoid this by assuming 
that such data is \emph{public}.
Such data is considered public 
by using a special sort \sort{Public}
and a subsort definition, e.g.
``\texttt{subsort Name < Public}''.
That is, 
given a state $St$ that
contains an expression \inI{t} in the intruder's knowledge
where $t$ is of sort \texttt{Public}, we can remove the expression 
\inI{t} from the intruder's knowledge, since the backwards reachability steps
taken care of such a \inI{t} are necessary in order to lead to an initial state
but their inclusion in the message sequence is superfluous.
The completeness proof for this optimization is trivial and thus omitted.

\subsection{Limiting Dynamic Introduction of New Strands}\label{sec:limiting}


As pointed out in Section~\ref{sec:rules}, rules of type \eqref{eq:newstrand} 
allow the dynamic introduction of new strands.
However, new strands can also be introduced by unification
of a state containing a variable $SS$ denoting a set of strands
and one of the rules of \eqref{eq:negative-1}, \eqref{eq:positiveNoLearn-2}, and \eqref{eq:positiveLearn-4}, where 
variables $L$ and $L'$ denoting lists of input/output messages
will be introduced by instantiation of $SS$.
The same can happen with new intruder facts of the form
$\inI{X}$, where $X$ is a variable, by instantiation of a variable $IK$
denoting the rest of the intruder knowledge.

\begin{example}
Consider a state $St$ 
of the form $SS \,\&\, IK$
where 
$SS$ denotes a set of strands
and 
$IK$ denotes a set of facts in the intruder's knowledge.
Now, consider
Rule \eqref{eq:negative-1}:
$$SS'\,\&\,[L ~|~ M^-, L']\,\&\,(\inI{M},IK') \to SS'\,\&\,[L, M^- ~|~ L']\,\&\,(\inI{M},IK')$$
\noindent
The following backwards narrowing step applying such a rule
can be performed from 
$St = SS \,\&\, IK$ 
using the unifier
$\sigma=\{SS \mapsto SS'\,\&\,[L, M^- ~|~ L'], IK \mapsto (\inI{M},IK')\}$
$$
SS\ \&\ IK
\narrow{\sigma}{R,E}
SS'\,\&\,[L ~|~ M^-, L']\,\&\,(\inI{M},IK')
$$
but this backwards narrowing step is unproductive, since it is not guided by the information in the attack state.
Indeed, the same rule can be applied again
using variables $SS'$ and $IK'$
and this can be repeated many times.
\end{example}

In order to avoid a huge number of unproductive narrowing steps
by useless instantiation,
we allow the introduction of new strands and/or new intruder facts 
\emph{only by rule application} instead of just by unification. 
For this, we do two things:
\begin{enumerate}
\item we remove any of the following variables from attack patterns: 
$SS$ denoting a set of strands, $IK$ denoting a set of intruder facts, and $L,L'$ denoting a set of input/output messages;
and 
\item we replace Rule
 \eqref{eq:negative-1} by the following Rule \eqref{eq:negative:back},
since we do no longer have a variable denoting a set of intruder facts
that has to be instantiated:
\begin{align}
&SS\,\&\,[L ~|~ M^-, L']\,\&\,(\inI{M},IK) \,{\to}\, SS\,\&\,[L, M^- ~|~ L']\,\&\,IK 
  \label{eq:negative:back}
\end{align}%
\end{enumerate}

\noindent
Note that in order to replace Rule
\eqref{eq:negative-1} by Rule \eqref{eq:negative:back}
we have to assume that the intruder's knowledge is a set of intruder facts
without repeated elements, i.e., the union operator \verb!_,_!
is $ACUI$ 
(associative-commutative-identity-idempotent).  This is completeness-preserving, since it is in line with the
restriction in \cite{EscMeaMes-tcs06} that the intruder learns a term only once. 

Furthermore, one might imagine that Rule \eqref{eq:positiveLearn-4}
and rules of type \eqref{eq:newstrand}
must also be modified in order to remove the \inI{M} expression from the 
intruder's knowledge of the right-hand side of each rule. However, this is not so,
since, by keeping the expresion \inI{M}, we force the backwards application of the rule
only when there is indeed a message for the intruder to be learned.
This provides some form of on-demand evaluation of the protocol.

The completeness proof for this optimization is trivial and thus omitted.
However, since we have modified the set of rules used for backwards reachability, we prove that such modification has the same reachability capabilities.
%
The set of rewrite rules actually
used for backwards narrowing 
is
$\overline{\caRP} = \{ \eqref{eq:negative:back},\eqref{eq:positiveNoLearn-2},\eqref{eq:positiveLearn-4} \}
\cup\eqref{eq:newstrand}$.
The following result ensures that
$\caRP$ and $\overline{\caRP}$ compute similar initial states
by backwards reachability analysis.
Its proof
is straightforward.

\begin{definition}[Inclusion]\label{def:inclusion}
Given a topmost rewrite theory $\caR_\caP = (\Symbols_\caP,E_{\caP},R_{\caP})$
representing protocol $\caP$,
and
two states $St_1,St_2$, 
we abuse notation and 
write $St_1 \subseteq_{E_\caP} St_2$
to denote that every state element (i.e., strand or intruder fact)
in $St_1$ 
appears in $St_2$ (modulo $E_\caP$).
\end{definition}

\begin{proposition}
Let $\caR_\caP = (\Symbols_\caP,E_{\caP},R_{\caP})$
be a topmost rewrite theory 
representing protocol $\caP$.
Let $St = ss\, \&\, SS\, \&\, (ik,IK)$
where $ss$ 
is a term representing a 
set of strands,
$ik$ 
is a term representing a set of intruder facts,
$SS$ is a variable for strands,
and 
$IK$ is a variable for intruder knowledge.
Let $St' = ss \,\&\, ik$.
If there is an initial state $St_{ini}$
and a substitution $\sigma$
such that
$St \narrto^*_{\sigma,R_{\caP}^{-1},E_\caP} St_{ini}$,
then
there is an initial state $St'_{ini}$
and two substitutions $\sigma'$, $\rho$
such that
$St' \narrto^*_{\sigma',\overline{R_{\caP}}^{-1},E_\caP} St'_{ini}$,
$\sigma \congr{E_\caP} \sigma'\circ\rho$,
and
$\rho(St'_{ini}) \subseteq_{E_\caP} St_{ini} $.
\end{proposition}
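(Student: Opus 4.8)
The plan is to argue by induction on the length $n$ of the given backwards narrowing sequence $St \narrto^*_{\sigma,R_{\caP}^{-1},E_\caP} St_{ini}$, building in parallel a sequence $St' \narrto^*_{\sigma',\overline{R_{\caP}}^{-1},E_\caP} St'_{ini}$ together with the reconciling substitution $\rho$. I would carry along a simulation invariant stating that the state $T'_i$ reached from $St'$ is, modulo the accumulated $\rho_i$, contained in the state $T_i$ reached from $St$, i.e. $\rho_i(T'_i) \subseteq_{E_\caP} T_i$, together with the compatibility condition $\sigma_i \congr{E_\caP} \sigma'_i \circ \rho_i$ on the variables of $St'$. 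The theories $\caRP$ and $\overline{\caRP}$ differ in only two respects, and the invariant is designed to absorb each: (i) $St$ carries the catch-all variables $SS$ and $IK$ that let backwards unification conjure new strands and intruder facts ``for free,'' whereas $St'$ does not; and (ii) Rule \eqref{eq:negative-1}, which requires $\inI{M}$ to be present and preserves it, is replaced by Rule \eqref{eq:negative:back}, which instead regenerates $\inI{M}$ when the bar is moved back over an input $M^-$.

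\textbf{The simulation step.} At a single induction step I would case-split on the rule fired in the $R_{\caP}^{-1}$ sequence. If it is a reverse of Rule \eqref{eq:positiveNoLearn-2}, \eqref{eq:positiveLearn-4}, or \eqref{eq:newstrand} acting on an element descended from the explicit parts $ss$ or $ik$, I mirror it with the identical rule of $\overline{\caRP}$, since these three rules are common to both theories; this keeps the bars of tracked strands synchronised and mirrors every learn event. If the step is a reverse of Rule \eqref{eq:negative-1} on a tracked strand, I apply instead a reverse of Rule \eqref{eq:negative:back}: both move the bar back over the same $M^-$, the only difference being that Rule \eqref{eq:negative-1} requires a copy of $\inI{M}$ already present while Rule \eqref{eq:negative:back} produces one. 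Here the assumption that the intruder's knowledge is built with an $ACUI$ union operator is essential: the several $\inI{M}$ facts regenerated by repeated uses of Rule \eqref{eq:negative:back} collapse by idempotency to the single copy of $\inI{M}$ that, in the $R_{\caP}$ derivation, served all uses of $M$ and was ultimately removed by one learn event. Finally, any $R_{\caP}^{-1}$ step that acts solely on material introduced by instantiating $SS$ or $IK$ has no counterpart in $St'$ and is simply skipped; such steps only enlarge $T_i$, and the resulting surplus elements of $St_{ini}$ are exactly what the inclusion $\subseteq_{E_\caP}$ permits. The substitution $\rho$ is assembled from the per-step reconciliations, mapping the fresh variables introduced by the mirrored rules to the corresponding subterms of $St_{ini}$ and sending $SS$ and $IK$ to the conjured surplus, and one checks inductively that $\sigma \congr{E_\caP} \sigma' \circ \rho$.

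\textbf{Endpoint.} It then remains to verify that $St'_{ini}$ is a genuine initial state. All tracked strands have their bar at the beginning because every bar-moving step on a tracked strand was mirrored, and no positive fact $\inI{u}$ survives because every $\inI{M}$ regenerated by Rule \eqref{eq:negative:back} is consumed by the mirrored learn event that consumed the corresponding fact in the $R_{\caP}$ derivation. Together with $\rho(St'_{ini}) \subseteq_{E_\caP} St_{ini}$ and $\sigma \congr{E_\caP} \sigma' \circ \rho$, this gives the claim.

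\textbf{Main obstacle.} The delicate part is the bookkeeping for the positive facts, i.e. showing that the regenerate-on-demand discipline of Rule \eqref{eq:negative:back} reproduces exactly the facts that Rule \eqref{eq:negative-1} merely carried along. Concretely, one must ensure that every $\inI{M}$ produced by a reverse Rule \eqref{eq:negative:back} step can still be consumed by a learn event available to $\overline{\caRP}$, namely a reverse of Rule \eqref{eq:positiveLearn-4} on a tracked producing strand or a reverse of a Rule \eqref{eq:newstrand} introducing a protocol producer from $\caS_\caP$. When the $R_{\caP}$ derivation learned such an $M$ through a producer conjured by instantiating $SS$ rather than through Rule \eqref{eq:newstrand}, this learn event must be re-routed through Rule \eqref{eq:newstrand} in the $\overline{\caRP}$ derivation; this is legitimate because Rule \eqref{eq:newstrand} provides one instance for every protocol output (including all Dolev-Yao intruder outputs), so any message the intruder can be forced to learn has such a producer, and choosing $\rho$ to identify the introduced producer with the corresponding strand of $St_{ini}$ preserves the inclusion. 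It is precisely this step that uses both the $ACUI$ idempotency assumption and the learn-once discipline, and verifying it in full is the only non-routine part of an otherwise mechanical induction.
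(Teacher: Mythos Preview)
The paper does not actually prove this proposition: it simply states ``Its proof is straightforward'' and moves on. Your proposal is therefore not competing with a paper proof but supplying the details the authors omitted, and the induction-with-simulation-invariant you outline is the natural way to unpack ``straightforward'' here. The case split on rules, the skipping of steps that act only on material conjured from $SS$ or $IK$, and the re-routing of learn events through rules of type \eqref{eq:newstrand} when the producer was conjured rather than explicitly introduced are all the right moves, and your identification of the $ACUI$ assumption as what makes the regenerate-on-demand discipline of Rule \eqref{eq:negative:back} harmless is exactly the point the paper alludes to when it says that replacing Rule \eqref{eq:negative-1} by Rule \eqref{eq:negative:back} requires taking the intruder's knowledge as an idempotent set.

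One small caution on the ``main obstacle'': your re-routing argument silently assumes that any strand conjured via $SS$-instantiation and later used to produce a learn event is an instance of some strand in $\caS_\caP$, so that a matching rule of type \eqref{eq:newstrand} exists. This is true in the Maude-NPA setting because the only strands that can appear in a well-formed reachable state are instances of protocol or intruder strands from $\caS_\caP$, but it is worth making that appeal explicit; otherwise the re-routing step reads as an unjustified existence claim. With that made precise, your argument is a correct fleshing-out of what the paper leaves as an exercise.
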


\subsection{Partial Order Reduction Giving Priority to Input Messages}\label{sec:input-first}

The different rewrite rules 
on which the backwards narrowing search from an attack pattern is based
are in general executed non-deterministically.
This is because the order of execution 
can
make a difference as to what subsequent rules can be executed.  
For example, an intruder cannot receive a term until 
it is sent by somebody, and that send action within a strand
may depend upon other receives in the past. 
There is one exception, 
Rule \eqref{eq:negative:back} (originally Rule \eqref{eq:negative-1}), 
which, in a backwards search, only
moves a negative term appearing right before the bar into the intruder's knowledge.  

\begin{example}
For instance, consider the attack pattern $(\dagger)$ in Example~\ref{exDH:cont}.
	\linebreak
Since the strand in the attack pattern has the input message
	\linebreak
$(e(exp(Y,n(B,r')),sec(a,r'')))^-$
but also has the intruder challenge
$\inI{sec(a,r'')}$,
there are several possible backwards narrowing steps: 
some processing the intruder challenge, and 
Rule \eqref{eq:negative:back} processing the input message.
\end{example}

The execution of 
Rule \eqref{eq:negative:back}
in a backwards search does not disable any other transitions; indeed, it only enables send transitions.  
Thus, it is safe to execute it at each stage \emph{before} any other transition.  
For the same reason,  if several applications of Rule~\ref{eq:negative:back} are possible, it is safe to execute them all at once before any other transition.
Requiring all executions of Rule~\ref{eq:negative:back} to execute first thus eliminates interleavings of
Rule~\ref{eq:negative:back} with send and receive transitions, which are equivalent to the case in which
Rule~\ref{eq:negative:back} executes first. 
In practice, this typically 
cuts down in half the search space size. 
The completeness proof for this optimization is trivial and is thus omitted.

Similar strategies have been employed by other tools in forward searches.  For example,  in
\cite{shmatikov-stern-csfw98}, a strategy is introduced that always executes 
send transitions  first whenever they are enabled.
Since a send transition does not depend on any other component 
of the state in order to take place, it can safely be executed first.  The original NPA also used this strategy; it had
a receive transition (similar to the input message in Maude-NPA)
which had the effect of adding new terms to the intruder's knowledge, and which always was executed before any other transition 
once it was enabled.

\subsection{Early Detection of Inconsistent States}\label{sec:inconsistent}

There are several types of states that are always unreachable or inconsistent.
%

\begin{example}\label{ex:ddagger}
Consider again the attack pattern $(\dagger)$ in Example~\ref{exDH:cont}. 
%
%
After a couple of backwards narrowing steps, the Maude-NPA finds
the following state, 
where the intruder learns
$e(exp(Y,n(B,r')),sec(a,r''))$
by assuming she can learn
$exp(Y,n(B,r'))$
and
$sec(a,r'')$ and combines them:

{\small
$$
\begin{array}{ll}
\begin{array}{@{}l@{}}
[\ nil \ |\ 
(exp(Y,n(B,r')))^-, 
(sec(a,r''))^-, 
(e(exp(Y,n(B,r')),sec(a,r'')))^+\ ]\ \&\\[1ex]
::r':: \\[1ex]
[\ (A ; B ; Y)^-, (B ; A ; exp(g,n(B,r')))^+ \ |\ 
(e(exp(Y,n(B,r')),sec(a,r'')))^-\ ]\ \&\\[1ex]
(\inI{sec(a,r'')},\ 
\inI{exp(Y,n(B,r'))},\  
\nI{e(exp(Y,n(B,r')),sec(a,r''))})
\end{array}
& (\ddagger)
\end{array}
$$
}

\noindent
From this state, the intruder tries to learn $sec(a,r'')$
by assuming she can learn
messages
$(e(exp(Y,n(B,r')),sec(a,r'')))$
and 
$exp(Y,n(B,r'))$ and combines them in a decryption:

{\small
$$
\begin{array}{l}
[\ nil \ |\ 
(exp(Y,n(B,r')))^-, 
(e(exp(Y,n(B,r')),sec(a,r'')))^-,
(sec(a,r''))^+\ ]\ \&\\[1ex]
[\ nil \ |\ 
(exp(Y,n(B,r')))^-, 
(sec(a,r''))^-, 
(e(exp(Y,n(B,r')),sec(a,r'')))^+\ ]\ \&\\[1ex]
::r':: \\[1ex]
[\ (A ; B ; Y)^-, (B ; A ; exp(g,n(B,r')))^+ \ |\ 
(e(exp(Y,n(B,r')),sec(a,r'')))^-\ ]\ \&\\[1ex]
(\inI{sec(a,r'')},\ 
\inI{exp(Y,n(B,r'))}, \\[1ex]
\inI{e(exp(Y,n(B,r')),sec(a,r''))},\ 
\nI{e(exp(Y,n(B,r')),sec(a,r''))})
\end{array}
$$
}

\noindent
But then this state is inconsistent,
since we have both
the challenge
	\linebreak
$\inI{e(exp(Y,n(B,r')),sec(a,r''))}$
and
the already learned message
	\linebreak
$\nI{e(exp(Y,n(B,r')),sec(a,r''))})$ at the same time,
violating the \emph{learn-only-once} condition in Maude-NPA.
\end{example}

If the Maude-NPA attempts to search beyond an inconsistent state, 
it will never find an initial state.  
For this reason, the Maude-NPA search strategy
always marks the following types of states as unreachable, and 
does not search beyond them any further:
\begin{enumerate}
\item\label{failure1} A state $St$  containing
two contradictory facts $\inI{t}$ and $\nI{t}$ (modulo $E_\caP$)
for a term $t$.
\item\label{failure2} A state $St$ whose intruder's knowledge contains
the fact $\nI{t}$ and 
a strand of the form
$[m_1^\pm,\linebreak[2] \ldots,\linebreak[2] t^-,\linebreak[2] \ldots,\linebreak[2] m_{j-1}^\pm \mid\linebreak[2] m_{j}^\pm,\linebreak[2] \ldots,\linebreak[2] m_k^\pm]$
(modulo $E_\caP$).
\item\label{failure3} A state $St$ containing
a fact $\inI{t}$ such that $t$ contains a fresh variable $r$
and the strand in $St$ indexed by $r$,
i.e., 
$::r_1,\ldots,r,\ldots,r_k::\;
[m_1^\pm,\linebreak[2] \ldots,\linebreak[2] m_{j-1}^\pm \mid\linebreak[2] m_{j}^\pm,\linebreak[2] \ldots,\linebreak[2] m_k^\pm]$,
cannot produce $r$,
i.e.,
$r$ is not a subterm of any output message in $m_1^\pm, \ldots, m_{j-1}^\pm$.
\item\label{failure4} A state $St$ 
containing 
a strand of the form
$[m_1^\pm,\linebreak[2] \ldots,\linebreak[2] t^-,\linebreak[2] \ldots,\linebreak[2] m_{j-1}^\pm \mid\linebreak[2] m_{j}^\pm,\linebreak[2] \ldots,\linebreak[2] m_k^\pm]$
for some term $t$ such that $t$ contains a fresh variable $r$
and the strand in $St$ indexed by $r$
cannot produce $r$.
\end{enumerate}
Note that case \ref{failure2} will become an instance of 
case \ref{failure1} after some backwards narrowing steps, 
and the same happens with cases
\ref{failure4} and \ref{failure3}.
The proof of inconsistency of cases \ref{failure1} and \ref{failure3}
is straightforward. 

\subsection{Transition Subsumption}\label{sec:folding}

Partial order reduction (POR) techniques  
are common in state exploration.
However, POR techniques for narrowing-based state exploration
do not seem to 
have been explored in detail, although they may be extremely relevant
and 
may 
afford greater reductions
than in standard state exploration
based on ground terms rather than on terms with logical variables.
For instance, the simple concept of two states being equivalent modulo
renaming of variables does not apply to standard state exploration,
whereas it does apply to narrowing-based state exploration.
In \cite{escobar-meseguer-rta07}, 
Escobar and Meseguer studied 
narrowing-based state exploration and POR techniques,
which may transform 
an infinite-state system into a finite one.
However, the Maude-NPA needs a dedicated POR technique
applicable to its
specific execution model. 

Let us motivate this POR technique with an example before giving a more detailed explanation.

\begin{example}
Consider again the attack pattern $(\dagger)$ in Example~\ref{exDH:cont}. 
%
%
After a couple of backwards narrowing steps, the Maude-NPA finds
the state $(\ddagger)$ of Example~\ref{ex:ddagger}: 

{\small
$$
\begin{array}{l}
 [\ nil \mid exp(Y,n(B,r'))^-, sec(a,r'')^-, (e(exp(Y,n(B,r')),sec(a,r'')))^+\ ]\ \& \\[1ex]
::r':: \\[1ex]
[\ (A ; B ; Y)^-, (B ; A ; exp(g,n(B,r')))^+ \ |\ 
(e(exp(Y,n(B,r')),sec(a,r'')))^-\ ]\ \& \\[1ex]
( \inI{sec(a,r'')},\ 
\inI{exp(Y,n(B,r'))},\ 
\nI{e(exp(Y,n(B,r')),sec(a,r''))})
\end{array}
$$
}

\noindent
However, the following state is also generated after a couple of narrowing steps from the attack pattern,
where,
thanks to the equational theory,
variable $Y$ is instantiated 
to $exp(G,N)$ for $G$ a generator --indeed the constant $g$---
and $N$ a nonce variable:

{\small
$$
\begin{array}{l}
 [\ nil \mid exp(G,n(B,r'))^-, N^-, exp(G,N * n(B,r'))^+\ ]\ \& \\[1ex]
 [\ nil \mid exp(G,N * n(B,r'))^-, sec(a,r'')^-, (e(exp(G,N * n(B,r')),sec(a,r'')))^+\ ]\ \& \\[1ex]
::r':: 
[\ (A ; B ; exp(G,N))^-, (B ; A ; exp(g,n(B,r')))^+ \\[1ex]
\hspace{5.5cm}
\ |\ 
(e(exp(G,N * n(B,r')),sec(a,r'')))^-\ ]\ \& \\[1ex]
( \inI{sec(a,r'')},\ 
\inI{exp(G,n(B,r')},\ 
\inI{N},\\[1ex] 
\nI{exp(G,N * n(B,r')},\ 
\nI{e(exp(G,N * n(B,r')),sec(a,r''))})
\end{array}
$$
}

\noindent
However, 
the unreachability of the second state 
is implied (modulo $E_\caP$) 
by the unreachability of the first state;
unreachability in the sense of Definition~\ref{def:unreachable}.
Intuitively,
the challenges present in the first state that are relevant for  backwards reachability are included in the second state, namely, the 
challenges \inI{sec(a,r'')} and \inI{exp(Y,n(B,r')}. 
Indeed, 
the unreachability of the following ``kernel'' state 
implies the unreachability of both states, 
although this kernel state is never computed by the Maude-NPA:

{\small
$$
\begin{array}{l}
::r':: 
[\ (A ; B ; Y)^-, (B ; A ; exp(g,n(B,r')))^+ \ |\ 
(e(exp(Y,n(B,r')),sec(a,r'')))^-\ ]\ \& \\[1ex]
(\inI{sec(a,r'')},\ \inI{exp(Y,n(B,r')})
\end{array}
$$
}

\noindent
Note that the converse is not true, i.e., the second state does not
imply the first one, since it contains one more intruder item
relevant for backwards reachability purposes, namely 
\inI{N}.
\end{example}

Let us now formalize this state space reduction 
and prove its completeness.
First, an auxiliary relation $St_1 \triangleright St_2$ identifying
whether $St_1$ is smaller than $St_2$ in terms of messages to be learned by the intruder.

\begin{definition}
Given a topmost rewrite theory $\caR_\caP = (\Symbols_\caP,E_{\caP},R_{\caP})$
representing protocol $\caP$,
and two non-initial states 
$St_{1}$
and
$St_{2}$,
we write 
$St_{1} \triangleright St_{2}$ 
(or $St_{2} \triangleleft St_{1}$)
if each intruder fact of the form $\inI{t}$
in $St_1$ appears in $St_2$ (modulo $E_\caP$)
and each non-initial strand
in $St_1$ appears in $St_2$ (modulo $E_\caP$ 
and 
with the vertical bar at the same position).
\end{definition}

Then, we define the relation $St_1 \blacktriangleright St_2$ 
which extends $St_1 \triangleright St_2$ to the case where $St_1$
is more general than $St_2$ w.r.t. variable instantiation.

\begin{definition}[$\caP$-subsumption relation]\label{def:subsumption}
Given a topmost rewrite theory 
	\linebreak
$\caR_\caP = (\Symbols_\caP,E_{\caP},R_{\caP})$
representing protocol $\caP$
and two non-initial states $St_{1}, St_{2}$,
we write $St_{1} \blacktriangleright St_{2}$
(or $St_{2} \blacktriangleleft St_{1}$)
and say that $St_{2}$ is \emph{$\caP$-subsumed} by $St_{1}$
if there is a substitution $\theta$ s.t. 
$\theta(St_{1}) \triangleright St_{2}$.
\end{definition}

\noindent
Note that we restrict the relation $\blacktriangleright$
to non-initial states because, otherwise, an initial state will imply any other state, erroneously making the search space finite after an initial state has been found. 

The following results provide the appropriate connection
between 
	\linebreak
$\caP$-subsumption and narrowing transitions.
%
%
%
First, we consider the simplest case where,
given two non-initial states $St_1,St_2$
such that
$St_{1} \blacktriangleright St_{2}$, 
a narrowing step on $St_2$, yielding state $St'_2$,
does not affect the transition subsumption property $\blacktriangleright$
and thus
$St_{1} \blacktriangleright St'_{2}$.
The proof is straightforward.

\begin{lemma}\label{lem:subsumption-1}
Given a topmost rewrite theory $\caR_\caP = (\Symbols_\caP,E_{\caP},R_{\caP})$
representing protocol $\caP$
and two non-initial states $St_{1}, St_{2}$. 
If 
(i) there is a substitution $\theta$ s.t.
$\theta(St_{1}) \triangleright St_{2}$,
i.e., $St_{1} \blacktriangleright St_{2}$,
(ii) there is a narrowing step 
$St_{2} \narrto_{\sigma_{2},R_{\caP}^{-1},E_\caP} St'_{2}$,
(iii)
each intruder fact of the form $\inI{t}$
in $\sigma_2(\theta(St_1))$ 
appears in $St'_2$ (modulo $E_\caP$)
and 
(iv)
each non-initial strand
in $\sigma_2(\theta(St_1))$ 
appears in $St'_2$ (modulo $E_\caP$),
then 
$\sigma_2(\theta(St_{1})) \triangleright St'_{2}$,
i.e.,
$St_{1} \blacktriangleright St'_{2}$.
\end{lemma}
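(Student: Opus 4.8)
The plan is to prove the conclusion by direct appeal to the definitions of $\triangleright$ and $\blacktriangleright$, pushing the narrowing unifier $\sigma_2$ through the subsumption witness $\theta$ supplied by hypothesis (i). By Definition~\ref{def:subsumption}, to establish $St_1 \blacktriangleright St'_2$ it suffices to exhibit a single substitution $\mu$ with $\mu(St_1) \triangleright St'_2$. The natural candidate is the composite $\mu = \theta \circ \sigma_2$, which, under the paper's convention $(\sigma \circ \theta)(X) = \theta(\sigma(X))$, satisfies $\mu(St_1) = \sigma_2(\theta(St_1))$. Thus the entire problem reduces to checking the single relation $\sigma_2(\theta(St_1)) \triangleright St'_2$, and the witness is essentially forced: it is just the original witness $\theta$ instantiated by the unifier generated along the narrowing step (ii).

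Next I would verify the two clauses required by the definition of $\triangleright$. That definition demands, first, that every intruder fact $\inI{t}$ occurring in $\sigma_2(\theta(St_1))$ appear in $St'_2$ modulo $E_\caP$, and, second, that every non-initial strand of $\sigma_2(\theta(St_1))$ appear in $St'_2$ modulo $E_\caP$ with the bar in the same position. The first clause is exactly hypothesis (iii). For the second, I would observe that the vertical bar is a constructor of the strand (list) structure on which $E_\caP$ imposes no equations, so $E_\caP$-equality of strands already preserves bar position; hence ``appears modulo $E_\caP$'' in hypothesis (iv) carries the same-position requirement, and (iv) supplies the second clause verbatim. Taken together these two facts give $\sigma_2(\theta(St_1)) \triangleright St'_2$, whence $St_1 \blacktriangleright St'_2$ with the witness $\mu$.

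I do not expect any serious obstacle, since the substantive content has been front-loaded into hypotheses (iii) and (iv): the lemma deliberately isolates the simplest case, in which the narrowing step on $St_2$ leaves the image of $St_1$ undisturbed, so that $\caP$-subsumption is preserved for free. The only points needing care are bookkeeping ones: getting the composition order right so that $\mu = \theta \circ \sigma_2$ genuinely denotes ``$\theta$ first, then $\sigma_2$'', and noting that $\blacktriangleright$ is defined only between non-initial states, so one must either read $St'_2$ as non-initial in this regime or remark that a single backwards step from the non-initial $St_2$ stays within that regime. The genuinely delicate situations---where the narrowing step acts on the image under $\theta$ of a fact or strand of $St_1$, so that (iii) or (iv) could fail and a bar could move---simply do not arise here by assumption, and are precisely what the companion lemmas are needed to handle.
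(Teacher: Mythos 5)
Your proposal is correct and is essentially the paper's argument: the paper omits the proof as ``straightforward,'' and that straightforward argument is exactly your definitional unpacking, taking $\mu=\theta\circ\sigma_2$ (so that $\mu(St_1)=\sigma_2(\theta(St_1))$ under the paper's composition convention) and reading hypotheses (iii) and (iv) as the two clauses of $\triangleright$. Your side remarks are also sound: $E_\caP$ does not equate strands with different bar positions, and under (iii)--(iv) the state $St'_2$ is forced to be non-initial, since the non-initial $St_1$ contributes a positive intruder fact or a non-initial strand whose image must appear in $St'_2$.
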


Second, we consider what happens when,
given two non-initial states $St_1,St_2$
such that
$St_{1} \blacktriangleright St_{2}$, 
a narrowing step on $St_2$, yielding state $St'_2$,
does affect the transition subsumption property $\blacktriangleright$
and thus
$St_{1} \not\blacktriangleright St'_{2}$.
The proof is straightforward.

\begin{lemma}\label{lem:subsumption-2}
Given a topmost rewrite theory $\caR_\caP = (\Symbols_\caP,E_{\caP},R_{\caP})$
representing protocol $\caP$
and two non-initial states $St_{1}, St_{2}$. 
If 
(i) there is a substitution $\theta$ s.t.
$\theta(St_{1}) \triangleright St_{2}$,
i.e., $St_{1} \blacktriangleright St'_{2}$,
(ii) there is a narrowing step 
$St_{2} \narrto_{\sigma_{2},R_{\caP}^{-1},E_\caP} St'_{2}$,
and 
(iii)
$\sigma_2(\theta(St_{1})) \,{\not\triangleright}\, St'_{2}$,
i.e., $St_{1} \not\blacktriangleright St'_{2}$,
then 
either
(a)
there is an intruder fact of the form $\inI{t}$
in $\sigma_2(\theta(St_1))$ 
that 
does not appear in $St'_2$ (modulo $E_\caP$),
or
(b)
there is a non-initial strand
in $\sigma_2(\theta(St_1))$ 
that
does not appear in $St'_2$ (modulo $E_\caP$).
\end{lemma}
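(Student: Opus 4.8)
The plan is to read the conclusion off as the De~Morgan dual of the definition of the relation $\triangleright$. Recall that for any two non-initial states $A$ and $B$, the relation $A \triangleright B$ holds \emph{if and only if} both of the following clauses hold: (A) every intruder fact $\inI{t}$ occurring in $A$ also occurs in $B$ (modulo $E_\caP$); and (B) every non-initial strand occurring in $A$ also occurs in $B$ (modulo $E_\caP$, with the vertical bar in the same position). Since $\triangleright$ is thus the conjunction of these two clauses, its negation is, by De~Morgan, precisely $\neg(A) \vee \neg(B)$.

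I would then instantiate this equivalence at $A := \sigma_2(\theta(St_1))$ and $B := St'_2$. Hypotheses (i) and (ii) are not used in deriving the conclusion; they merely fix the context, guaranteeing that $\theta$ witnesses $St_1 \blacktriangleright St_2$ and that $St'_2$ is the result of the backwards narrowing step $St_2 \narrto_{\sigma_2,R_{\caP}^{-1},E_\caP} St'_2$ under consideration, so that $\sigma_2(\theta(St_1))$ and $St'_2$ are the two well-defined non-initial states being compared. Hypothesis (iii) is exactly $\sigma_2(\theta(St_1)) \not\triangleright St'_2$, i.e. $\neg(A) \vee \neg(B)$. The left disjunct $\neg(A)$ states that some intruder fact $\inI{t}$ in $\sigma_2(\theta(St_1))$ fails to appear in $St'_2$ modulo $E_\caP$, which is verbatim conclusion (a); the right disjunct $\neg(B)$ states that some non-initial strand in $\sigma_2(\theta(St_1))$ fails to appear in $St'_2$, which is conclusion (b). A case split on which disjunct holds then yields the claim, completing the proof.

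There is no substantive obstacle here: the statement is the purely propositional contrapositive of Lemma~\ref{lem:subsumption-1}, and its entire content is unfolding $\triangleright$ as a two-clause conjunction and negating it. The one point deserving care is notational: the abbreviated phrase ``appears in $St'_2$ (modulo $E_\caP$)'' used for strands in conclusion (b) must be read with the \emph{same} convention as in the definition of $\triangleright$, namely modulo $E_\caP$ \emph{and} with the vertical bar in the same position. Under that intended reading, $\neg(B)$ coincides exactly with (b); were one to read (b) without the bar-position qualifier, it would become strictly stronger than $\neg(B)$ (non-appearance ignoring the bar implies non-appearance with the bar, but not conversely) and would therefore not follow. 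Keeping the strand-appearance notion uniform between the definition and the lemma is thus the only thing one must verify.
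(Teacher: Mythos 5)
Your proof is correct and takes essentially the same approach as the paper, which simply declares this lemma's proof straightforward: the conclusion is nothing more than the De~Morgan negation of the two-clause definition of $\triangleright$ instantiated at $\sigma_2(\theta(St_1))$ and $St'_2$, with hypotheses (i) and (ii) only fixing the context, and this is exactly how the paper uses the lemma inside the proof of Lemma~\ref{lem:subsumption}. Your closing caveat---that ``appears in $St'_2$ (modulo $E_\caP$)'' for strands in conclusion (b) must carry the same vertical-bar-position convention as in the definition of $\triangleright$---is the right reading and the only point of care.
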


Now, we can consider both cases 
of Lemma~\ref{lem:subsumption-2} separately:
either an expression $\inI{t}$ in $St'_2$ 
or a non-initial strand in $St'_2$,
not appearing in the instantiated version of $St_1$.
First, 
the case where 
an expression $\inI{t}$ in $St'_2$ 
does not appear in the instantiated version of $St_1$.

\begin{lemma}\label{lem:subsumption-2-inI}
Given a topmost rewrite theory $\caR_\caP = (\Symbols_\caP,E_{\caP},R_{\caP})$
representing protocol $\caP$
and two non-initial states $St_{1}, St_{2}$. 
If 
(i) there is a substitution $\theta$ s.t.
$\theta(St_{1}) \triangleright St_{2}$,
i.e., $St_{1} \blacktriangleright St'_{2}$,
(ii) there is a narrowing step 
$St_{2} \narrto_{\sigma_{2},R_{\caP}^{-1},E_\caP} St'_{2}$,
and 
(iii)
there is an intruder fact of the form $\inI{t}$
in $\sigma_2(\theta(St_1))$ 
that 
does not appear in $St'_2$ (modulo $E_\caP$),
then
(a)
$\nI{t}$
does appear in $St'_2$ (modulo $E_\caP$)
and
(b)
there is a state $St'_{1}$ and a substitution $\sigma_{1}$
such that
$St_{1} \narrto_{\sigma_{1},R_{\caP}^{-1},E_\caP} St'_{1}$
and 
either 
$St'_1$ is an initial state
or
there is a substitution $\rho$ s.t.
$\rho(St'_{1}) \triangleright St'_{2}$,
i.e., $St'_{1} \blacktriangleright St'_{2}$,
\end{lemma}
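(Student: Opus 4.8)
The plan is to read off from hypotheses (ii)--(iii) which backwards rule was fired and then to replay that same \emph{learning} step on $St_1$. I would prove (a) first. Since $\theta(St_1)\triangleright St_2$ and the relation $\triangleright$ is stable under instantiation (a substitution neither moves a vertical bar nor deletes a fact or a strand), we have $\sigma_2(\theta(St_1))\triangleright\sigma_2(St_2)$; in particular the fact $\inI{t}$ of (iii) already occurs in $\sigma_2(St_2)$. Because $\caR_\caP$ is topmost, $St'_2$ is obtained from $\sigma_2(St_2)$ by rewriting the whole state according to one of the backwards rules $\overline{\caRP}^{-1}$. Inspecting the four possibilities, the reverses of \eqref{eq:negative:back} and \eqref{eq:positiveNoLearn-2} never delete a positive intruder fact, and --- since the knowledge operator is $ACUI$ --- a value can leave the knowledge only by being the fact consumed by the rule, not by collapsing with another. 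Hence the fired rule is the reverse of \eqref{eq:positiveLearn-4} or of \eqref{eq:newstrand}, each of which consumes $\inI{M}$ (resp.\ $\inI{u}$) with $\sigma_2(M)\congr{E_\caP}t$ and inserts $\nI{M}$ (resp.\ $\nI{u}$); this is exactly $\nI{t}$ in $St'_2$, proving (a).

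For (b) I would transfer this step to $St_1$. By (iii) there is a fact $\inI{s}$ in $St_1$ with $\sigma_2(\theta(s))\congr{E_\caP}t$, and, as just seen, $t$ is an instance of the output message $u^+$ of some strand schema $G\in\caS_\caP$. I would then fire on $St_1$ either the reverse of \eqref{eq:positiveLearn-4} on $St_1$'s own copy of the producing strand, when that strand already occurs in $St_1$ (modulo $E_\caP$, with the bar in the matching place), or otherwise the reverse of \eqref{eq:newstrand} with schema $G$. In either case the fired rule consumes $\inI{s}$, yielding a most general unifier $\sigma_1$ and a step $St_1\narrto_{\sigma_1,R_{\caP}^{-1},E_\caP}St'_1$.

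It then remains to relate the two steps. The substitution $X\mapsto\sigma_2(\theta(X))$, extended on the fresh variables of the fired rule so that its consumed output message is sent to $t$, is a unifier of the very equation solved by $\sigma_1$; hence by completeness of the unification procedure modulo $E_\caP$ there is $\rho$ with $\rho(\sigma_1(X))\congr{E_\caP}\sigma_2(\theta(X))$ on all relevant variables. Consequently $\rho(St'_1)$ is, modulo $E_\caP$, the image under $\sigma_2\circ\theta$ of $St_1$ after its single modification (turning $\inI{s}$ into $\nI{s}$ and, if needed, inserting $G$). I would then check the two clauses of $\triangleright$: every remaining $\inI{\cdot}$ fact of $\rho(St'_1)$ is an $\inI{\cdot}$ fact of $\sigma_2(\theta(St_1))$ distinct from $\inI{t}$, hence still present in $St'_2$ because the step removed only $\inI{t}$; and every non-initial strand of $\rho(St'_1)$ other than the one carrying the learned message is an untouched strand of $St_1$, and therefore still present with the same bar in $St'_2$, while the learned-message strand sits in $St'_2$ with its bar in the same position by the matching choice of rule. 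Thus $\rho(St'_1)\triangleright St'_2$, unless the step already makes $St_1$ initial, which is the alternative of the stated disjunction.

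The delicate point is this last strand-alignment check, and it is what forces the case split on the rule chosen for $St_1$. If $St_2$ learns $t$ by advancing the bar of a strand that $St_1$ also owns, then introducing a fresh copy via \eqref{eq:newstrand} on $St_1$ would leave that shared strand's bar behind and violate the ``same bar position'' requirement of $\triangleright$; re-using \eqref{eq:positiveLearn-4} on the shared strand, and resorting to \eqref{eq:newstrand} only otherwise, is precisely what keeps the bars synchronized. The accompanying substitution bookkeeping is routine once completeness of $E_\caP$-unification is invoked.
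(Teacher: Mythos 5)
Your proof follows the same route as the paper's: for (a), only the reverses of Rule \eqref{eq:positiveLearn-4} and of the rules \eqref{eq:newstrand} can make a positive fact $\inI{t}$ disappear, and each replaces it by $\nI{t}$; for (b), you replay a learning step on $St_1$, using the reverse of \eqref{eq:positiveLearn-4} when $St_1$ owns a compatible copy of the producing strand and a rule of type \eqref{eq:newstrand} otherwise, then lift the instantiation through the most general unifier to obtain $\rho$ with $\theta\circ\sigma_2 \congr{E_\caP} \sigma_1\circ\rho$ and check the two clauses of $\triangleright$. This is exactly the paper's case analysis, merely reorganized: the paper splits first on which rule fired on $St_2$, and only then on whether the producing strand occurs in $\sigma_2(\theta(St_1))$.

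There is, however, one genuine gap, in your ``otherwise'' branch. You write that ``as just seen, $t$ is an instance of the output message $u^+$ of some strand schema $G \in \caS_\caP$,'' attributing this to part (a). But part (a) yields this only when the step on $St_2$ was by a rule of type \eqref{eq:newstrand}. When the step was by the reverse of \eqref{eq:positiveLearn-4}, the producing strand $[L, M^+ \mid L']$ is a strand already present in $St_2$, and nothing you have proved says it is an instance of a protocol strand; if it were not, there would be no rule of type \eqref{eq:newstrand} available to fire on $St_1$, and $St_1$ might have no way to consume $\inI{s}$ at all. The paper closes this hole with an argument you omit: $St_1$ and $St_2$ are assumed to be generated by backwards narrowing from the same attack pattern, so a strand of $St_2$ that does not appear in $\sigma_2(\theta(St_1))$ cannot come from the attack pattern and must have been introduced dynamically by a rule of type \eqref{eq:newstrand}; hence it is an instance of a strand of $\caS_\caP$, with its bar moved left from where it was introduced. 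This is what guarantees both that your schema $G$ exists and that the bar of the strand introduced into $St'_1$ lines up with the bar of $\sigma_2([L \mid M^+, L'])$ in $St'_2$, as required by $\triangleright$. Note that this uses an ancestry hypothesis that is not literally in the lemma statement; your proof, like the paper's, cannot do without it.
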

\begin{proof}
We prove the result by considering the different rules applicable to $St_{2}$ (remember that in $\caR$, rewriting and narrowing steps always happen 
at the top position).
Note that property (a) is immediate because rules in $R_\caP$ do not remove expressions of the form $\inI{m}$.
Note also that if 
$\inI{t}$
does appear in $St_2$ (modulo $E_\caP$)
and
$\nI{t}$
does appear in $St'_2$ (modulo $E_\caP$),
then only Rule
\eqref{eq:positiveLearn-4}
or rules of type
\eqref{eq:newstrand}
have been applied to $St_2$ as follows:

\begin{itemize}

\item Reversed version of Rule \eqref{eq:positiveLearn-4}, i.e.,
$St_{2} \narrto_{\sigma_{2},R_{\caP}^{-1},E_\caP} St'_{2}$ using the following rule
$$
[L, M^+ ~|~ L']{\,\&\,}\linebreak[2]SS{\,\&\,}\linebreak[2](\inI{M},IK) 
\to 
[L ~|~ M^+, L']{\,\&\,}\linebreak[2]SS{\,\&\,}\linebreak[2](\nI{M},IK)
.$$ 
Recall that 
there is an intruder fact 
in $\sigma_2(\theta(St_1))$ 
of the form $\inI{t}$
for $t$ a message term
that 
does not appear in $St'_2$ (modulo $E_\caP$)
and 
$t \congr{E_\caP} \sigma_2(M)$. 
Thus,
$\inI{\sigma_2(M)}$
does appear in $\sigma_2(\theta(St_1))$ (modulo $E_\caP$).
Here we have several cases:

\begin{itemize}

\item
If the strand $\sigma_2([L, M^+ ~|~ L'])$
appears in $\sigma_2(\theta(St_{1}))$,
then the very same narrowing step can be performed on $St_{1}$,
i.e.,
there exist $\sigma_1,\rho$ such that
$St_{1} \narrto_{\sigma_{1},R_{\caP}^{-1},E_\caP} St'_{1}$
with the same rule and 
$\theta \circ \sigma_2 \congr{E_\caP} \sigma_1\circ\rho$.
Thus, 
either $St'_{1}$ is an initial state
or
$\rho(St'_{1}) \triangleright St'_{2}$,
since:
(i)
each positive intruder fact 
in $\sigma_2(\theta(St_1))$ of the form $\inI{u}$ for $u$ a message term,
except $\inI{\sigma_2(M)}$,
appears in $\rho(St'_1)$ (modulo $E_\caP$),
(ii)
$\nI{\sigma_2(M)}$
appears in $\rho(St'_1)$ (modulo $E_\caP$),
(iii)
each non-initial strand
in $\sigma_2(\theta(St_1))$, 
except $\sigma_2([L, M^+ {~|~} L'])$,
has not been modified and
appears in $\rho(St'_1)$ as well (modulo $E_\caP$),
and 
(iv) for 
$\sigma_2([L, M^+ {~|~} L'])$ in $\sigma_2(\theta(St_1))$,
$\rho'([L \mid M^+, L'])$ appears in $\rho(St'_1)$
and in $St'_2$.

\item
If the strand $\sigma_2([Lm M^+ ~|~ L'])$
does not appear in $\sigma_2(\theta(St_{1}))$,
then the strand $\sigma_2([L, M^+ ~|~ L'])$ 
corresponds to a strand $\caS_\caP$ in the protocol specification
that had been introduced 
via a rule of the set \eqref{eq:newstrand}, where 
the strand's bar was clearly more to the right than in $\sigma_2([L, M^+ ~|~ L'])$. 
Note that it cannot correspond to a strand included originally in the attack pattern, because we assume that $St_{1}$ and $St_{2}$ are states generated 
by backwards narrowing from the same attack state and then both $St_1$ and $St_2$ should have the strand.
Therefore, 
since the strand $\sigma_2([L, M^+ ~|~ L'])$
corresponds to a strand in $\caS_\caP$
and 
the set \eqref{eq:newstrand}
contains a rewrite rule for each strand 
of the form $[~ l_1,\ u^+,\ l_2~]$
in $\caS_\caP$,
there must be a rule $\alpha$ in \eqref{eq:newstrand}
introducing a strand of the form $[~ l_1,\ u^+,\ l_2~]$ 
and
there must be substitutions $\sigma_1,\rho$ such that
$St_{1} \narrto_{\sigma_{1},R_{\caP}^{-1},E_\caP} St'_{1}$
using the rule $\alpha$ 
and 
$\theta \circ \sigma_2 \congr{E_\caP} \sigma_1\circ\rho$.
Thus, 
either $St'_{1}$ is an initial state
or
$\rho(St'_{1}) \triangleright St'_{2}$,
since:
(i)
each positive intruder fact 
in $\sigma_2(\theta(St_1))$
of the form $\inI{u}$
for $u$ a message term,
except $\inI{\sigma_2(M)}$,
appears in $\rho(St'_1)$ (modulo $E_\caP$),
(ii)
$\nI{\sigma_2(M)}$
appears in $\rho(St'_1)$ (modulo $E_\caP$),
(iii)
each non-initial strand
in $\sigma_2(\theta(St_1))$
has not been modified and
appears in $\rho(St'_1)$ as well (modulo $E_\caP$),
and 
(iv) 
$\sigma_2([~ l_1 \mid u^+,\ l_2~])$ appears in $\rho(St'_1)$
and in $St'_2$.

\end{itemize}

\item Rules in \eqref{eq:newstrand}, i.e.,
$St_{2} \narrto_{\sigma_{2},R_{\caP}^{-1},E_\caP} St'_{2}$ using a
rule of the form
$$
\{
SS{\,\&\,}\linebreak[2](\inI{u},IK) \to 
[l_1 \mid u^+, l_2]{\,\&\,}\linebreak[2]SS{\,\&\,}\linebreak[2](\nI{u},IK)
  \mid\linebreak[2] [l_1, u^+, l_2] \in \caP\}
.$$
Recall that 
there is an intruder fact 
in $\sigma_2(\theta(St_1))$ 
of the form $\inI{t}$
for $t$ a message term
that 
does not appear in $St'_2$ (modulo $E_\caP$)
and 
$t \congr{E_\caP} \sigma_2(u)$, where $u$ is the message term
used by the rewrite rule.
Thus, 
$\inI{\sigma_2(u)}$
does appear in $\sigma_2(\theta(St_1))$ (modulo $E_\caP$).
That is, the same narrowing step is available from  $\sigma_2(\theta(St_1))$
and 
there exist $\sigma_1,\rho$ such that
$St_{1} \narrto_{\sigma_{1},R_{\caP}^{-1},E_\caP} St'_{1}$
with the same rule and 
$\theta \circ \sigma_2 \congr{E_\caP} \sigma_1\circ\rho$.
Thus, 
either $St'_{1}$ is an initial state
or
$\rho(St'_{1}) \triangleright St'_{2}$.
\end{itemize}
This concludes the proof.
\qed
\end{proof}

Second, 
the case where 
a non-initial strand in $St'_2$
does not appear in the instantiated version of $St_1$.

\begin{lemma}\label{lem:subsumption-2-strand}
Given a topmost rewrite theory $\caR_\caP = (\Symbols_\caP,E_{\caP},R_{\caP})$
representing protocol $\caP$
and two non-initial states $St_{1}, St_{2}$. 
If 
(i) there is a substitution $\theta$ s.t.
$\theta(St_{1}) \triangleright St_{2}$,
(ii) there is a narrowing step 
$St_{2} \narrto_{\sigma_{2},R_{\caP}^{-1},E_\caP} St'_{2}$,
and 
(iii)
there is a non-initial strand 
$[m_1^\pm,\ldots,m_i^\pm\mid m_{i+1}^\pm,\ldots,m_n^\pm]$
in $\sigma_2(\theta(St_1))$ 
that
does not appear in $St'_2$ (modulo $E_\caP$),
then
(a) $\sigma_2\restrict{\var(St_2)} = \idsubst$,
(b)
$[m_1^\pm,\ldots,m_{i-1}^\pm\mid m_{i}^\pm,\ldots,m_n^\pm]$
does appear in $St'_2$ (modulo $E_\caP$)
and
(c)
there is a state $St'_{1}$
such that
$St_{1} \narrto_{\idsubst,R_{\caP}^{-1},E_\caP} St'_{1}$
and 
either 
$St'_1$ is an initial state
or
$St'_{1} \triangleright St'_{2}$.
\end{lemma}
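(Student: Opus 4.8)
The plan is to mirror the rule-based case analysis used in the proof of Lemma~\ref{lem:subsumption-2-inI}, but driven by the vanishing strand rather than a vanishing intruder fact. Write $s$ for the non-initial strand $[m_1^\pm,\ldots,m_i^\pm\mid m_{i+1}^\pm,\ldots,m_n^\pm]$ of hypothesis (iii), so that $s=\sigma_2(\theta(s_1))$ for some strand $s_1$ of $St_1$. Since $\theta(St_1)\triangleright St_2$, there is a matching strand $s_2$ in $St_2$ with the bar at the same position $i$ and $\theta(s_1)\congr{E_\caP} s_2$. The first step is to show that the narrowing rule must act precisely on $s_2$: if instead it rewrote a different strand or the intruder knowledge, then $s_2$ would survive inside the preserved context, so $\sigma_2(s_2)\congr{E_\caP}\sigma_2(\theta(s_1))$ would occur in $St'_2$ at bar position $i$ (substitution and $\congr{E_\caP}$ preserve both polarities and bar positions), contradicting (iii). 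Hence the step moves the bar of $s_2$, necessarily one message to the left, which already yields part (b): $St'_2$ contains $s_2$ with the bar at $i-1$, i.e. (modulo $E_\caP$) the strand $[m_1^\pm,\ldots,m_{i-1}^\pm\mid m_i^\pm,\ldots,m_n^\pm]$.

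Next I would identify the responsible rule. The only rules that move the bar of a pre-existing strand are the reversed forms of \eqref{eq:negative:back}, \eqref{eq:positiveNoLearn-2} and \eqref{eq:positiveLearn-4} (rules of type \eqref{eq:newstrand} introduce a fresh strand and so cannot be at issue). If the message $m_i$ at the bar is negative, only the reverse of \eqref{eq:negative:back} applies; if it is positive, the reverse of \eqref{eq:positiveNoLearn-2} or of \eqref{eq:positiveLearn-4} applies. In the reversed forms of \eqref{eq:negative:back} and \eqref{eq:positiveNoLearn-2} the left-hand side shares no variable between the strand pattern $[L, M^\pm \mid L']$ and the remaining components $SS$ and $IK$; consequently unifying $St_2$ with the left-hand side reduces to matching, binding only the fresh rule variables $L,M,L',SS,IK$ and never constraining a variable of $St_2$. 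This gives part (a), $\sigma_2\restrict{\var(St_2)}=\idsubst$.

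The reverse of \eqref{eq:positiveLearn-4} is the delicate case and the main obstacle: its left-hand side places the same variable $M$ in both the strand ($M^+$) and the intruder fact ($\inI{M}$), so its unifier can in principle instantiate a variable of $St_2$, and it simultaneously consumes the fact $\inI{m_i}$. This is exactly the setting of hypothesis (iii) of Lemma~\ref{lem:subsumption-2-inI}, which governs the disappearance of a positive intruder fact. In the combined use of the two lemmas this case is dispatched there first, so that the residual bar-moves left for the present lemma are precisely \eqref{eq:negative:back} and \eqref{eq:positiveNoLearn-2}, for which (a) genuinely holds. Making this dichotomy airtight — arguing that a bar-move forced by \eqref{eq:positiveLearn-4} always exposes a matching intruder-fact disappearance and hence never reaches this lemma with a non-trivial $\sigma_2$ — is where the real care is needed.

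Finally, with (a) in hand I would establish (c) by replaying the very same rule on $St_1$. Because $\theta(s_1)\congr{E_\caP} s_2$ and substitution preserves polarities and bar positions, $s_1$ carries a message of the same polarity at position $i$, so the chosen rule applies to $St_1$ as a pure match, giving a step $St_1 \narrto_{\idsubst,R_{\caP}^{-1},E_\caP} St'_1$. It then remains to verify $\theta(St'_1)\triangleright St'_2$ (equivalently $St'_1 \blacktriangleright St'_2$, carrying the witness $\theta$ from (i)): the intruder facts are unchanged by \eqref{eq:positiveNoLearn-2}, and for \eqref{eq:negative:back} the newly added fact $\inI{m_i}$ appears on both sides; every non-initial strand other than the moved one is untouched and hence still matches, while the moved strand matches at bar position $i-1$ by part (b). If moving the bar brings it to the beginning of $s_1$, that strand becomes initial and drops out of the $\triangleright$ requirement; should this, together with the absence of positive intruder facts, turn $St'_1$ into an initial state, the alternative disjunct ``$St'_1$ is an initial state'' of (c) is taken instead.
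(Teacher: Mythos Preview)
Your approach matches the paper's: case analysis on the rule driving the step from $St_2$, then replaying that rule on $St_1$ with the identity substitution. The paper's proof is considerably terser than yours---it simply asserts that only the reversed forms of Rules~\eqref{eq:positiveNoLearn-2} and~\eqref{eq:negative:back} can be responsible for the bar move, notes that for both of these $\sigma_2\restrict{\var(St_2)}=\idsubst$ (since neither left-hand side shares a variable between the strand pattern and the remaining components), and concludes (c) by replaying the same rule on $St_1$.

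Your explicit discussion of the reversed Rule~\eqref{eq:positiveLearn-4} actually goes beyond what the paper supplies: the paper does not mention that rule in this proof at all, and hence does not engage with the concern you raise about the shared $M$ variable possibly forcing a non-trivial $\sigma_2$. Your proposed resolution---that a bar move via~\eqref{eq:positiveLearn-4} should be dispatched by Lemma~\ref{lem:subsumption-2-inI} instead---is the natural reading of how Lemma~\ref{lem:subsumption} uses the two lemmas in tandem, and your caveat that making this airtight ``is where the real care is needed'' is fair; the paper simply leaves this point implicit. In short, your sketch is at least as complete as the paper's own argument, and more candid about the one delicate case.
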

\begin{proof}
We prove the result by considering the different rules applicable to $St_{2}$ (remember that in $\caR$, rewriting and narrowing steps always happen 
at the top position).
Note that property (a) is immediate because rules in $R_\caP$ do not remove strands, only move the vertical bar to the left of the sequences of messages in the strands.
Note also that if 
$[m_1^\pm,\ldots,m_i^\pm\mid m_{i+1}^\pm,\ldots,m_n^\pm]$
appears in $\sigma_2(\theta(St_1))$ 
and
$[m_1^\pm,\ldots,m_{i-1}^\pm\mid m_{i}^\pm,\ldots,m_n^\pm]$
appears in $St'_2$,
then only Rule
\eqref{eq:positiveNoLearn-2}
or Rule
\eqref{eq:negative:back}
have been applied to $St_2$ as follows:

\begin{itemize}

\item\label{proof:eq:positiveNoLearn} 
Reversed version of Rule \eqref{eq:positiveNoLearn-2}, i.e.,
$St_{2} \narrto_{\sigma_{2},R_{\caP}^{-1},E_\caP} St'_{2}$ using the following rule
$$
[L, M^+ {~|~} L']{\,\&\,}\linebreak[2]SS{\,\&\,}\linebreak[2]IK 
\to 
[L {~|~} M^+, L']{\,\&\,}\linebreak[2]SS{\,\&\,}\linebreak[2]IK
.$$

\item Reversed version of Rule \eqref{eq:negative:back}, i.e.,
$St_{2} \narrto_{\sigma_{2},R_{\caP}^{-1},E_\caP} St'_{2}$ using the following rule
$$
[L, M^- ~|~ L']{\,\&\,}\linebreak[2]SS{\,\&\,}\linebreak[2]IK
\to
[L ~|~ M^-, L']{\,\&\,}\linebreak[2]SS{\,\&\,}\linebreak[2](\inI{M},IK) 
. $$

\end{itemize}
However, 
note that $\sigma_2\restrict{\var(St_2)} = \idsubst$ in both possible rewrite steps.
Then, 
there is a state $St'_1$ such that
$St_{1} \narrto_{\idsubst,R_{\caP}^{-1},E_\caP} St'_{1}$
with the same rule 
and it is straightforward that
either 
$St'_1$ is an initial state
or
$St'_{1} \blacktriangleright St'_{2}$, 
since 
only 
the vertical bar has been moved.
\qed
\end{proof}

Now we can formally define
the relation
between 
$\caP$-subsumption and one narrowing step.
In the following,
$\narrto^{\{0,1\}}_{\sigma,R_{\caP}^{-1},E_\caP}$ 
denotes zero or one narrowing steps.

\begin{lemma}\label{lem:subsumption}
Given a topmost rewrite theory $\caR_\caP = (\Symbols_\caP,E_{\caP},R_{\caP})$
representing protocol $\caP$
and two non-initial states $St_{1}, St_{2}$. 
If $St_{1} \blacktriangleright St_{2}$ 
and
	\linebreak
$St_{2} \narrto_{\sigma_{2},R_{\caP}^{-1},E_\caP} St'_{2}$,
then there is a state $St'_{1}$ and a substitution $\sigma_{1}$
such that
$St_{1} \narrto^{\{0,1\}}_{\sigma_{1},R_{\caP}^{-1},E_\caP} St'_{1}$
and 
either 
$St'_1$ is an initial state
or
$St'_{1} \blacktriangleright St'_{2}$.
\end{lemma}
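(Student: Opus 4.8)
The plan is to reduce the statement to the three preceding lemmas by a single case distinction on whether the given step $St_2 \narrto_{\sigma_2,R_{\caP}^{-1},E_\caP} St'_2$ destroys the subsumption witness. Let $\theta$ be a substitution with $\theta(St_1) \triangleright St_2$, which exists since $St_1 \blacktriangleright St_2$. The whole argument turns on comparing the instantiated state $\sigma_2(\theta(St_1))$ with $St'_2$, so the first step I would take is to form $\sigma_2(\theta(St_1))$ and ask whether $\sigma_2(\theta(St_1)) \triangleright St'_2$ still holds.

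If $\sigma_2(\theta(St_1)) \triangleright St'_2$, then no step on $St_1$ is needed: taking $St'_1 = St_1$ (zero narrowing steps), the composed substitution $\theta \composeSubst \sigma_2$ witnesses $St_1 \blacktriangleright St'_2$ directly, which is exactly Lemma~\ref{lem:subsumption-1}. If instead $\sigma_2(\theta(St_1)) \not\triangleright St'_2$, then Lemma~\ref{lem:subsumption-2} localizes the failure in one of only two ways: either (a) some intruder fact $\inI{t}$ of $\sigma_2(\theta(St_1))$ is absent from $St'_2$, or (b) some non-initial strand of $\sigma_2(\theta(St_1))$ is absent from $St'_2$. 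In subcase (a) I would invoke Lemma~\ref{lem:subsumption-2-inI}, which yields a matching single step $St_1 \narrto_{\sigma_1,R_{\caP}^{-1},E_\caP} St'_1$ whose target is either initial or again satisfies $St'_1 \blacktriangleright St'_2$; in subcase (b) I would invoke Lemma~\ref{lem:subsumption-2-strand}, which yields a matching single step $St_1 \narrto_{\idsubst,R_{\caP}^{-1},E_\caP} St'_1$ with target either initial or satisfying $St'_1 \triangleright St'_2$ (hence $St'_1 \blacktriangleright St'_2$). In every branch at most one step is taken on $St_1$, giving the required $\narrto^{\{0,1\}}_{\sigma_1,R_{\caP}^{-1},E_\caP}$ relation with $St'_1$ initial or $\caP$-subsuming $St'_2$.

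The only genuinely delicate point, and the one I would check most carefully, is that this dispatch is exhaustive and unambiguous. A single backwards step can touch the intruder knowledge \emph{and} a strand bar at once; notably the reversed Rule~\eqref{eq:positiveLearn-4} turns an $\inI{M}$ into a $\nI{M}$ while sliding the bar past an $M^+$, so superficially both (a) and (b) could seem to fire. I do not expect this to force two responses, because Lemma~\ref{lem:subsumption-2-inI} is already engineered to absorb the coupled strand movement: its internal split on whether $\sigma_2([L, M^+ \mid L'])$ lies in $\sigma_2(\theta(St_1))$ (handled by the same rule) or was introduced by a rule of type~\eqref{eq:newstrand} (handled by the corresponding new-strand step) accounts for exactly the strand side-effect of the very step that dropped the intruder fact. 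Hence I would give subcase (a) priority and fall through to Lemma~\ref{lem:subsumption-2-strand} only when no intruder fact is lost, which is precisely the situation of the reversed Rules~\eqref{eq:positiveNoLearn-2} and~\eqref{eq:negative:back}, where $\sigma_2$ restricted to $\var(St_2)$ is the identity and a single mirror step on $St_1$ suffices.
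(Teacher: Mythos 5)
Your proposal is correct and follows essentially the same route as the paper's own proof: both use the witness $\theta$, split on whether $\sigma_2(\theta(St_1)) \triangleright St'_2$ survives the step (the zero-step case, via Lemma~\ref{lem:subsumption-1}), and otherwise dispatch through Lemma~\ref{lem:subsumption-2} to Lemma~\ref{lem:subsumption-2-inI} in case (a) and Lemma~\ref{lem:subsumption-2-strand} in case (b). Your closing observation that case (a) should take priority when both failures occur is a sound reading of the inclusive disjunction in Lemma~\ref{lem:subsumption-2}, and is consistent with (though not spelled out in) the paper's argument.
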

\begin{proof}
Since $St_{1} \blacktriangleright St_{2}$,
there is a substitution $\theta$ s.t.
$\theta(St_{1}) \triangleright St_{2}$.
If 
each intruder fact of the form $\inI{t}$
in $\sigma_2(\theta(St_1))$ 
appears in $St'_2$ (modulo $E_\caP$)
and 
each non-initial strand
in $\sigma_2(\theta(St_1))$ 
appears in $St'_2$ (modulo $E_\caP$),
then, by Lemma~\ref{lem:subsumption-1},
$\sigma_2(\theta(St_{1})) \triangleright St'_{2}$,
i.e., $St_{1} \blacktriangleright St'_{2}$.
Otherwise,
Lemma~\ref{lem:subsumption-2} states that 
either
(a)
there is an intruder fact of the form $\inI{t}$
in $\sigma_2(\theta(St_1))$ 
that 
does not appear in $St'_2$ (modulo $E_\caP$),
or
(b)
there is a non-initial strand
in $\sigma_2(\theta(St_1))$ 
that
does not appear in $St'_2$ (modulo $E_\caP$).
For case (a), 
by Lemma~\ref{lem:subsumption-2-inI}, 
there is a state $St'_{1}$ and a substitution $\sigma_{1}$
such that
$St_{1} \narrto_{\sigma_{1},R_{\caP}^{-1},E_\caP} St'_{1}$
and 
either 
$St'_1$ is an initial state
or
there is a substitution $\rho$ s.t.
$\rho(St'_{1}) \triangleright St'_{2}$.
For case (b), 
by Lemma~\ref{lem:subsumption-2-strand}, 
$\sigma_2\restrict{\var(St_2)} = \idsubst$,
and
there is a state $St'_{1}$
such that
$St_{1} \narrto_{\idsubst,R_{\caP}^{-1},E_\caP} St'_{1}$
and 
either 
$St'_1$ is an initial state
or
$St'_{1} \triangleright St'_{2}$,
i.e., $St'_{1} \blacktriangleright St'_{2}$.
\qed
\end{proof}

Preservation of reachability follows from the following main theorem.
Note 
that the relation $\blacktriangleright$ is applicable only
to non-initial states, whereas 
the relation $\subseteq_{E_\caP}$ of Definition~\ref{def:inclusion} is applicable
to both initial and non-initial states.

\begin{theorem}\label{thm:subsumption}
Given a topmost rewrite theory $\caR_\caP = (\Symbols_\caP,E_{\caP},R_{\caP})$
representing protocol $\caP$
and two states $St_{1}, St_{2}$. 
If $St_{1} \blacktriangleright St_{2}$,
$St^{ini}_2$ is an initial state, 
and
$St_{2} \narrto^*_{\sigma_{2},R_{\caP}^{-1},E_\caP} St^{ini}_{2}$,
then there is an initial state $St^{ini}_{1}$ and substitutions $\sigma_{1}$
and $\theta$
such that
$St_{1} \narrto^*_{\sigma_{1},R_{\caP}^{-1},E_\caP} St^{ini}_{1}$,
and $\theta(St^{ini}_{1}) \subseteq_{E_\caP} St^{ini}_{2}$.
\end{theorem}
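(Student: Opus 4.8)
The plan is to prove the statement by induction on the length $n$ of the backwards narrowing sequence $St_{2} \narrto^*_{\sigma_{2},R_{\caP}^{-1},E_\caP} St^{ini}_{2}$, with Lemma~\ref{lem:subsumption} as the single-step engine. Since the relation $\blacktriangleright$ relates only \emph{non-initial} states, the hypothesis $St_{1}\blacktriangleright St_{2}$ forces $St_{2}$ to be non-initial; as $St^{ini}_{2}$ is initial, the sequence is non-empty, so $n\geq 1$ and I may split it into a first step $St_{2}\narrto_{\tau,R_{\caP}^{-1},E_\caP}St'_{2}$ followed by $St'_{2}\narrto^*_{\tau',R_{\caP}^{-1},E_\caP}St^{ini}_{2}$ of length $n-1$.

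First I would apply Lemma~\ref{lem:subsumption} to the step $St_{2}\narrto_{\tau}St'_{2}$. Using $St_{1}\blacktriangleright St_{2}$ it produces a state $St'_{1}$ with $St_{1}\narrto^{\{0,1\}}_{\sigma_{1},R_{\caP}^{-1},E_\caP}St'_{1}$ such that either (I) $St'_{1}\blacktriangleright St'_{2}$, or (II) $St'_{1}$ is an initial state. In case (I) both $St'_{1}$ and $St'_{2}$ are non-initial and $St'_{2}$ reaches $St^{ini}_{2}$ in $n-1$ steps, so the induction hypothesis applied to the pair $(St'_{1},St'_{2})$ yields an initial state $St^{ini}_{1}$ with $St'_{1}\narrto^*_{R_{\caP}^{-1},E_\caP}St^{ini}_{1}$ and a substitution $\theta$ with $\theta(St^{ini}_{1})\subseteq_{E_\caP}St^{ini}_{2}$; prepending $St_{1}\narrto^{\{0,1\}}St'_{1}$ gives $St_{1}\narrto^*_{R_{\caP}^{-1},E_\caP}St^{ini}_{1}$ and closes this case. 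Observe that the base $n=1$ cannot fall into case (I): there $St'_{2}=St^{ini}_{2}$ is initial, so the disjunct $St'_{1}\blacktriangleright St'_{2}$ is impossible (it requires $St'_{2}$ non-initial), and we necessarily land in case (II) with an empty remaining sequence.

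The substantive case is (II), where the $St_{1}$-side reaches an initial state $St^{ini}_{1}:=St'_{1}$ at the first step while $St'_{2}$ still has $n-1$ steps to go. Here I would first extract an elementwise inclusion from the constructions behind Lemma~\ref{lem:subsumption}. The proofs of Lemmas~\ref{lem:subsumption-2-inI} and~\ref{lem:subsumption-2-strand} build $St'_{1}$ together with a substitution $\rho$ (the identity on $\var(St_{2})$ in the strand case) whose accompanying element-by-element verifications place every state element of $\rho(St'_{1})$ inside $St'_{2}$ modulo $E_\caP$; the disjunction ``initial \emph{or} $\triangleright$'' stated in those lemmas is forced only because $\triangleright$ is undefined on initial states and does not affect this containment. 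Hence $\rho(St'_{1})\subseteq_{E_\caP}St'_{2}$. It then remains to propagate this containment along $St'_{2}\narrto^*_{\tau',R_{\caP}^{-1},E_\caP}St^{ini}_{2}$ and to take $\theta$ to be the composite substitution applying $\rho$ first and $\tau'$ afterwards, so that $\theta(St^{ini}_{1})=\tau'(\rho(St'_{1}))$.

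I expect this propagation to be the main obstacle: I must show that the elements of an \emph{initial} state, once contained in an intermediate state, survive every further backwards narrowing step up to instantiation by $\tau'$. I would establish it by inspecting the four reversed rule schemas \eqref{eq:negative:back}, \eqref{eq:positiveNoLearn-2}, \eqref{eq:positiveLearn-4}, and \eqref{eq:newstrand}, using two monotonicity facts: (a) a backwards step only moves a strand's vertical bar to the \emph{left} and never deletes a strand, so a strand of $\rho(St'_{1})$ --- which, belonging to an initial state, already has its bar at the leftmost position --- admits no applicable reversed rule and is therefore frozen and persists (carrying only the instantiation of $\tau'$); and (b) no reversed rule ever removes a negative fact $\nI{t}$, since reversed \eqref{eq:positiveLearn-4} and \eqref{eq:newstrand} only \emph{create} such facts while \eqref{eq:negative:back} and \eqref{eq:positiveNoLearn-2} leave the negative knowledge untouched, and an initial state contributes exactly such negative facts together with the frozen strands. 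Consequently every element of $\rho(St'_{1})$ reappears, instantiated by $\tau'$, in $St^{ini}_{2}$, giving $\theta(St^{ini}_{1})=\tau'(\rho(St'_{1}))\subseteq_{E_\caP}St^{ini}_{2}$; and since the $\sort{Fresh}$-variable discipline guarantees $\tau'$ does not instantiate the fresh variables of $St'_{1}$, the reached state $St^{ini}_{1}$ is genuinely an initial state, completing the induction.
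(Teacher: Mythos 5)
Your proposal is correct and takes essentially the same route as the paper's own proof: the paper iterates Lemma~\ref{lem:subsumption} along the sequence $St_2 = U_0 \narrto \cdots \narrto U_n = St^{ini}_2$ until the $St_1$-side necessarily hits an initial state $U'_j$ (necessarily, because $\blacktriangleright$ cannot hold against the initial endpoint $U_n$), and then concludes via $\theta(U'_j) \subseteq_{E_\caP} U_j \subseteq_{E_\caP} U_n$ --- your induction on the length of the sequence is precisely this iteration packaged recursively, with your case (I) matching the paper's maintenance of $\blacktriangleright$ and your case (II) matching the paper's terminal index $j$. The one difference is a matter of rigor in your favor: the containment-propagation step that the paper simply asserts as ``$U_j \subseteq_{E_\caP} U_n$'' (which, read literally, even omits the instantiation by the remaining narrowing substitutions) is exactly what you identify as the main obstacle and justify by the two monotonicity facts --- initial strands are frozen under reversed rules and negative facts $\nI{t}$ are never removed --- together with the correct bookkeeping $\theta(St^{ini}_{1}) = \tau'(\rho(St'_{1}))$.
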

\begin{proof}
Consider
$St_{2} = U_0$,
$St^{ini}_{2} = U_n$,
$\sigma_2 = \rho_1\cdots\rho_n$,
and 
$U_0 \narrto^n_{\rho_{i},R_{\caP}^{-1},E_\caP} U_n$.
Note that $n\neq 0$, since 
$St_2$ cannot be an initial state because
$St_{1} \blacktriangleright St_{2}$
implies 
that
both $St_1$ and $St_2$ are not initial states.
Then, 
by Lemma~\ref{lem:subsumption},
there is $j \leq n$ such that
for each $i < j$,
$U_{i-1} \narrto_{\rho_{i},R_{\caP}^{-1},E_\caP} U_{i}$
and there is a step
$U'_{i-1} \narrto_{\rho'_{i},R_{\caP}^{-1},E_\caP} U'_{i}$
s.t.
$U'_i \blacktriangleright U_i$.
Note that $U'_j$ is an initial state
and
there is a substitution $\theta$ s.t.
$\theta(U'_j) \subseteq_{E_\caP} U_j \subseteq_{E_\caP} U_n$.
\qed
\end{proof}

This POR technique is used as follows:
we keep all the states of the backwards narrowing-based tree and compare each new leaf node
of the tree with all the previous states in the tree.
If a leaf node is $\caP$-subsumed by a previously generated node in the tree, 
we discard such leaf node.


\subsection{The Super-Lazy Intruder}\label{sec:super-lazy-intruder}

Sometimes terms appear in the intruder's knowledge that are trivially learnable by the intruder.  These
include terms initially available to the intruder (such as names) and variables.  
In the case of variables, specially,
the intruder can substitute any arbitrary term of the same sort as the variable,%
\footnote{This, of course, is subject to the assumption that the intruder can produce at least one term of that sort.  But since the intruder is assumed to have access to the network and to all the operations available to an honest principal, this is a reasonable restriction to make.} 
and so there is no need to try to determine all the ways in which the intruder can do this.  For this reason it is safe, at least temporarily, to drop these terms from the state.  We will refer to those terms as {\em (super) lazy intruder} terms.

\begin{example}\label{ex:super-lazy}
Consider again the attack pattern $(\dagger)$ in Example~\ref{exDH:cont}. 
%
%
After a couple of backwards narrowing steps, the Maude-NPA finds
the following state
that considers how the intruder can learn
$sec(a,r'')$ by assuming he can learn 
a message $e(K, sec(a, r''))$ and the key $K$:

{\small
$$
\begin{array}{ll}
\begin{array}{@{}l@{}}
[\ nil \mid K^-,\ e(K, sec(a, r'')))^-,\ sec(a, r'')^+\ ]\& \\[1ex]
::r':: \\[1ex]
[\ (A ; B ; Y)^-, (B ; A ; exp(g,n(B,r')))^+ \mid 
(e(exp(Y,n(B,r')),sec(a,r'')))^-\ ]\&\\[1ex]
(\inI{e(exp(Y,n(B,r')),sec(a,r''))},\ 
\inI{K},\ 
\inI{e(K, sec(a, r'')))},\ 
\nI{sec(a,r'')})
\end{array}
& (\natural)
\end{array}
$$
}

\noindent
Here variable $K$ is a super-lazy term and the tool wouldn't search for values.
The problem, of course, is that later on in the search the variable $K$
may become instantiated, in which case
the term then becomes relevant to the search.  
Indeed, after some more backwards narrowing steps,
the tool tries to unify message 
$e(K, sec(a, r'')))$
with an output message 
$e(exp(\overline{X},n(\overline{A},\overline{r})), sec(\overline{A},\overline{r_2}))$
of an explicitly added Bob's strand of the form

$$
\begin{array}{@{}l@{}}
:: \overline{r_1},\overline{r_2} ::\\[1ex]
[\ (\overline{A} ; \overline{B} ; exp(g,n(\overline{A},\overline{r_1})))^+,\ (\overline{B} ; \overline{A} ; \overline{X})^-,\  (e(exp(\overline{X},n(\overline{A},\overline{r})), sec(\overline{A},\overline{r_2})))^+]
\end{array}
$$

\noindent
thus getting an instantiation for the super-lazy term $K$,
namely 
	\linebreak
$\{K \mapsto exp(\overline{X},n(\overline{A},\overline{r}))\}$.
\end{example}

\noindent
Note that the tool might continue searching for an initial state
when a super lazy term is properly instantiated, and this would not cause the tool to prove an insecure protocol to be secure. However, it would lead to an unacceptably large number of false attacks
because the contents of variable $K$ are expected to be learned by the intruder too.

We take an approach similar to that of the lazy intruder of Basin et al. \cite{ofmc}
and extend it to a more general case, that we call 
\emph{super-lazy terms}.
We note that this use of what we here call
the super-lazy intruder was also present in the original NPA. 
%
%

The set $\caL(St)$ of super-lazy terms w.r.t. a state $St$ is inductively generated as a subset $\caL(St) \subseteq \caT_\Omega(Y \cup IK_0)$
where $IK_0$ is the basic set of terms known by the intruder at the beginning of a protocol execution,
$Y$ is a subset of the variables of $St$,
and $\Omega$ is the set of operations available to the intruder.
The idea of super-lazy terms is that we also want to exclude from
$\caL(St)$ the set $IK^{\not\in}(St)$ of terms that the intruder does not know and all its possible combinations with symbols in $\Omega$. 

\begin{definition}[Super-lazy terms]\label{def:superlazy}
Let 
$\caR_\caP = (\Symbols_\caP,E_{\caP},R_{\caP})$
be a topmost 
	\linebreak
rewrite theory 
representing protocol $\caP$.
Let $IK_0$ be the basic set of terms known by the intruder at the beginning of a protocol execution, defined as
	\linebreak
$IK_0=\{t' \mid [t^+] \in \caS_\caP,\ t' \congr{E_\caP} t\}$.
Let $\Omega$ be the set of operations available to the intruder,
defined as
$$\Omega = \{f: \sort{s_1} \cdots \sort{s_n} \to \sort{s} \mid [(X_1{:}\sort{s_1})^-,\ldots,(X_k{:}\sort{s_k})^-,(f(X_1{:}\sort{s_1},\ldots,X_k{:}\sort{s_k}))^+] \in \caS_\caP\}.$$
Let $St$ be a state (with logical variables).
Let $IK^{\not\in}(St)$ be
the set of terms that the intruder does not known at state $St$,
defined as $IK^{\not\in}(St) = \{m' \mid (\nI{m}) \in St,\ m' \congr{E_\caP} m\}$.
The
set $\caL(St)$ of \emph{super-lazy terms w.r.t. $St$}
(or simply super-lazy terms)
is defined as 
\begin{enumerate}
\item
$IK_0 \subseteq \caL(St)$,
\item
$\var(St) - IK^{\not\in}(St) \subseteq \caL(St)$,
\item
for each 
$f: \sort{s_1} \cdots \sort{s_n} \to \sort{s} \in \Omega$
and 
for all
$t_1{:}\sort{s_1},\ldots,t_k{:}\sort{s_k} \in \caL(St)$,
if 
$f(t_1{:}\sort{s_1},\ldots,t_k{:}\sort{s_k}) \not\in IK^{\not\in}(St)$,
then
$f(t_1{:}\sort{s_1},\ldots,t_k{:}\sort{s_k}) \in \caL(St)$.
\end{enumerate}
\end{definition}
\noindent 
The idea behind the super-lazy intruder is that, 
given a term made out of lazy intruder terms,
such as ``$a;e(K,Y)$'', where $a$ is a public name and $K$ and $Y$ are variables, 
the term ``$a;e(K,Y)$'' is also a (super) lazy intruder term
by applying the operations $e$ and $\_{;}\_$.

Let us first briefly explain how the (super) lazy intruder mechanism works before formally describing it.
A \emph{ghost state} is a state extended to allow expressions
of the form $\textnormal{ghost}(m)$ in the intruder's knowledge,
where $m$ is a super-lazy term.
When, during the backwards reachability analysis,
 we detect a state $St$ having a super lazy term $t$ in an expression
 $\inI{t}$ in the intruder's knowledge,
we replace the intruder fact $\inI{t}$ in $St$
by $ghost(t)$ and keep the ghost version of $St$ 
in the history
of states used by the transition subsumption of Section~\ref{sec:folding}.
For instance, the state 
$(\natural)$ of Example~\ref{ex:super-lazy} with a super-lazy intruder term $K$
would be represented as follows, where we have just replaced $\inI{K}$
by $ghost(K)$:

{\small
$$
\begin{array}{l}
[\ nil \mid K^-,\ e(K, sec(a, r'')))^-,\ sec(a, r'')^+\ ]\& \\[1ex]
::r':: 
[\ (A ; B ; Y)^-, (B ; A ; exp(g,n(B,r')))^+ \mid 
(e(exp(Y,n(B,r')),sec(a,r'')))^-\ ]\&\\[1ex]
(\textnormal{ghost}(K),\ 
\inI{e(exp(Y,n(B,r')),sec(a,r''))},\ 
\inI{e(K, sec(a, r'')))},\ 
\nI{sec(a,r'')})
\end{array}
$$
}

If later in the search tree we detect a ghost state $St'$ 
containing an expression $ghost(t)$ such that $t$ is no longer 
a super lazy intruder term, 
then 
there is a state $St$ 
with an expression $ghost(u)$
that precedes $St'$ in the narrowing tree
such that 
the message $u$
has been instantiated to $t$ in an appropriate way and 
we must reactivate such original state $St$.
That is, we ``roll back'' and replace the current state $St'$,
containing expression $ghost(t)$,
by 
an instantiated version of state $St$, namely $\theta(St)$,
where $t \congr{E_\caP} \theta(u)$.
This is explained in detail in Definition~\ref{def:resuscitation} below.

However, if the substitution $\theta$ binding variables in $u$ includes
variables of sort \sort{Fresh}, 
we have to keep them
in the reactivated version of $St$, since they are unique in our model.  Therefore, the strands indexed by these fresh variables
must also be included in the ``rolled back'' state, even if they were not there originally.
Moreover, they must have the bar at the place where it was when the strands were originally introduced.
We show below how this is accomplished.
Furthermore,  if any of the strands thus introduced have other variables of sort \sort{Fresh}
as subterms, then the strands indexed by those variables must be included too, and so on.
That is,
when a state $St'$ properly
instantiating a ghost expression $ghost(t)$ is found,
the procedure of rolling back
to the original state $St$
that gave rise to that ghost expression
implies not only applying the bindings for the variables of $t$ to $St$,
but also introducing in $St$ all the strands from $St'$ that produced
fresh variables and that either appear in the variables of $t$ or are recursively connected with them.

\begin{example}\label{ex:resuscitated}
For instance, after the tool finds an instantiation
for variable $K$, the tool rolls back to the state 
originating the super-lazy term $K$ as follows,
where 
we have copied the explicitly added Bob's strand with the vertical bar 
at the rightmost position because it is the strand generating
the \sort{Fresh} variable $r''$:

{\small
$$
\begin{array}{l}
[\ nil \mid exp(X,n(a,r))^-,\ e(exp(X,n(a,r), sec(a, r'')))^-,\ sec(a, r'')^+\ ]\& \\[1ex]
:: r,r'' ::\\[1ex]
[\ (a ; B' ; exp(g,n(a,r)))^+,\ (B' ; a ; X)^-,\  (e(exp(X,n(a,r)), sec(a,r'')))^+
\mid nil \ ]\&\\[1ex]
::r':: 
[\ (A ; B ; Y)^-, (B ; A ; exp(g,n(B,r')))^+ \mid 
(e(exp(Y,n(B,r')),sec(a,r'')))^-\ ]\&\\[1ex]
(\inI{e(exp(Y,n(B,r')),sec(a,r''))},\ 
\inI{exp(X,n(a,r))},\\[1ex]
\inI{e(exp(X,n(a,r)), sec(a, r'')))},\ 
\nI{sec(a,r'')})
\end{array}
$$
}
\end{example}

In order for the super-lazy intruder mechanism to be able to tell where the bar was when a strand was introduced,
we must modify the set of rules of type \eqref{eq:newstrand}
introducing new strands:
\begin{align}
&
\begin{array}[t]{@{}l@{}}
\{\ [\, l_1 \,|\, u^+] \,\&\, \{\nI{u},K\} \to \{\inI{u},K\}
\ \mid\   [\, l_1,\, u^+,\, l_2\,] \in \caS_\caP\}
\end{array}
\label{eq:newstrand:lazy}%
\end{align}

\noindent
Note that 
rules of type \eqref{eq:newstrand} introduce strands $[\,l_1 \mid u^+, l_2\,]$,
whereas here rules of type \eqref{eq:newstrand:lazy} 
introduce strands $[\,l_1 \mid u^+\,]$.
This slight modification makes it possible to safely move the position of the bar 
back to the place where the strand was introduced.
However, now the strands added may be \emph{partial}, since the whole
sequence of actions performed by the principal is not directly recorded
in the strand.
Therefore, the set of rewrite rules used by narrowing in reverse are now
$\widetilde{\caRP} = \{ \eqref{eq:negative:back},\eqref{eq:positiveNoLearn-2},\eqref{eq:positiveLearn-4} \}
\cup\eqref{eq:newstrand:lazy}$.

First, we define a new relation $\sqsubseteq_{E_\caP}$ between states,
which is similar to $\subseteq_{E_\caP}$ of Definition~\ref{def:inclusion}
but considers partial strands.
\begin{definition}[Partial Inclusion]
Given two states $St_1,St_2$, 
we abuse notation and 
write $St_1 \sqsubseteq_{E_\caP} St_2$
to denote that every intruder fact 
in $St_1$ 
appears in $St_2$ (modulo $E_\caP$)
and that every strand $[m_1^\pm, \ldots, m_k^\pm]$ in $St_1$,
either
appears in $St_2$ (modulo $E_\caP$)
or 
there is $i \in \{1,\ldots,k\}$ s.t.
$m_i^\pm = m_i^+$
and
$[m_1^\pm, \ldots, m_i^+]$ 
appears in $St_2$ (modulo $E_\caP$).
\end{definition}

\noindent
The following result ensures that if a state is reachable via backwards reachability analysis using $R_\caP$, then it is also reachable using $\widetilde{R_\caP}$.
Its proof
is straightforward.
\begin{proposition}
Let $\caR_\caP = (\Symbols_\caP,E_{\caP},R_{\caP})$
be a topmost rewrite theory 
representing protocol $\caP$.
Let $St = ss\, \&\, SS\, \&\, (ik,IK)$
where 
$ss$ is a term representing a set of strands,
$ik$ is a term representing a set of intruder facts,
$SS$ is a variable for strands,
and 
$IK$ is a variable for intruder knowledge.
	\linebreak
If there is an initial state $St_{ini}$
and a substitution $\sigma$
such that
	\linebreak
$St \narrto^*_{\sigma,R_{\caP}^{-1},E_\caP} St_{ini}$,
then
there is an initial state $St'_{ini}$
and two substitutions $\sigma'$, $\rho$
such that
$St \narrto^*_{\sigma',\widetilde{R_{\caP}}^{-1},E_\caP} St'_{ini}$,
$\sigma \congr{E_\caP} \sigma'\circ\rho$,
and
$\rho(St'_{ini}) \sqsubseteq_{E_\caP} St_{ini}$.
\end{proposition}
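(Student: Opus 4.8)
\noindent
The plan is to simulate the given $R_\caP$-derivation step by step by a $\widetilde{R_\caP}$-derivation, by induction on its length $n$, carrying along an invariant that relates the two current states at every stage. Write the hypothesis as $St = U_0 \narrto_{\rho_1,R_\caP^{-1},E_\caP} \cdots \narrto_{\rho_n,R_\caP^{-1},E_\caP} U_n = St_{ini}$. I would build in parallel a derivation $St = V_0 \narrto_{\rho'_1,\widetilde{R_\caP}^{-1},E_\caP} \cdots \narrto_{\rho'_n,\widetilde{R_\caP}^{-1},E_\caP} V_n = St'_{ini}$ together with substitutions $\tau_i$ such that $\tau_i(V_i) \sqsubseteq_{E_\caP} U_i$, the vertical bars of corresponding strands occupy the same position, the intruder facts of $\tau_i(V_i)$ and $U_i$ coincide modulo $E_\caP$, and $\rho_1\cdots\rho_i \congr{E_\caP} (\rho'_1\cdots\rho'_i)\circ\tau_i$. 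Informally, $V_i$ is $U_i$ except that each dynamically introduced strand may be truncated right after its generating output message, and $V_i$ may be more general because $\widetilde{R_\caP}$ imposes fewer constraints when consuming a negative message.

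The rule sets $R_\caP$ and $\widetilde{R_\caP}$ differ only in the negative-message rule (\eqref{eq:negative-1} versus \eqref{eq:negative:back}) and in the strand-introduction rule (\eqref{eq:newstrand} versus \eqref{eq:newstrand:lazy}). The first difference is handled exactly as in the preceding proposition relating $R_\caP$ and $\overline{\caRP}$: a backwards step by \eqref{eq:negative-1}, which requires and keeps a matching fact $\inI{M}$, is mimicked by \eqref{eq:negative:back}, which instead simply adds $\inI{M}$; the $\widetilde{R_\caP}$ step then uses a possibly more general unifier, and the extra instantiation is absorbed into $\tau_{i+1}$. The essential new ingredient is the second difference, and the observation that drives it is that narrowing in $\caR_\caP$ acts at the top and every rule only moves a vertical bar \emph{leftwards}; hence the suffix $l_2$ of a strand $[l_1 \mid u^+, l_2]$ introduced by \eqref{eq:newstrand} always lies to the right of the bar and is never read or rewritten by any later step. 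I would first make this bar-monotonicity, and the resulting inertness of $l_2$, precise.

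Granting the inertness of $l_2$, the inductive step splits on the rule applied in $U_i \narrto_{\rho_{i+1},R_\caP^{-1},E_\caP} U_{i+1}$. For \eqref{eq:positiveNoLearn-2}, \eqref{eq:positiveLearn-4} and the negative-message rule, the processed message lies in the left-of-bar part shared by $\tau_i(V_i)$ and $U_i$, so the matching rule of $\widetilde{R_\caP}$ applies to $V_i$ and the invariant is re-established (for the negative-message rule as noted above). For \eqref{eq:newstrand}, the matched challenge $\inI{u}$ is present modulo $E_\caP$ in $\tau_i(V_i)$, so \eqref{eq:newstrand:lazy} introduces the partial strand $[l_1 \mid u^+]$, which is precisely $[l_1 \mid u^+, l_2]$ with its inert right-of-bar suffix $l_2$ deleted. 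Since $\sqsubseteq_{E_\caP}$ is designed to let a strand be matched by a prefix ending in an output message, this preserves $\tau_{i+1}(V_{i+1}) \sqsubseteq_{E_\caP} U_{i+1}$; the fresh variables occurring only in $l_2$ are never forced, so they cause no obstruction.

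Finally, at $i = n$: since $U_n = St_{ini}$ is initial, all its bars are at the beginning and it contains no fact of the form $\inI{u}$; by the invariant $V_n = St'_{ini}$ has its bars at the same positions and its facts coincide with those of $U_n$ modulo $\tau_n$ and $E_\caP$, so $St'_{ini}$ is again initial. Taking $\sigma' = \rho'_1\cdots\rho'_n$ and $\rho = \tau_n$ then gives $\sigma \congr{E_\caP} \sigma'\circ\rho$ and $\rho(St'_{ini}) \sqsubseteq_{E_\caP} St_{ini}$, as required. The main obstacle is the bar-monotonicity claim underlying the inertness of $l_2$: one must check that no backwards step can move the bar of an introduced strand to the right of its introduction point, so that the messages of $l_2$ are never unified against any rule's left-hand side; once this is established, preservation of the $\inI{u}$ challenges, initiality of $St'_{ini}$, and the final $\sqsubseteq_{E_\caP}$ relation are all routine.
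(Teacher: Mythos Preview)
The paper does not actually give a proof of this proposition; it simply states that ``its proof is straightforward'' and moves on. Your detailed step-by-step simulation, carrying the invariant that the $\widetilde{R_\caP}$-state is (up to the accumulated substitution $\tau_i$) a $\sqsubseteq_{E_\caP}$-truncation of the corresponding $R_\caP$-state with vertical bars in identical positions, is a correct and natural way to flesh out what the paper leaves implicit. In particular, your key observation---that every backwards rule moves a bar only leftwards, so the suffix $l_2$ of a strand introduced by rule~\eqref{eq:newstrand} is never inspected and can therefore be dropped when simulating by rule~\eqref{eq:newstrand:lazy}---is precisely the content that justifies the partial-inclusion relation $\sqsubseteq_{E_\caP}$, and your handling of the \eqref{eq:negative-1}/\eqref{eq:negative:back} difference by absorbing the extra instantiation into $\tau_{i+1}$ mirrors the treatment in the preceding proposition for $\overline{R_\caP}$.
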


Now, we describe how to reactivate a state.
First, we formally define a ghost state.
\begin{definition}[Ghost State]
Given a topmost rewrite theory $\caR_\caP = \linebreak[4](\Symbols_\caP,E_{\caP},R_{\caP})$
representing protocol $\caP$
and a state $St$ containing an intruder fact $\inI{t}$ such that $t$ 
is a super-lazy term,
we define the \emph{ghost} version of $St$,
written 
$\overline{St}$,
by replacing $\inI{t}$ in $St$ by $ghost(t)$ in $\overline{St}$.
\end{definition}

Now, in order to resuscitate a state, we need to formally compute 
the strands that are generating \sort{Fresh} variables relevant
to the instantiation found for the super-lazy term.
\begin{definition}[Strand Reset]
Given a strand $s$ of the form
$::r_1,\ldots,r_k::\;[m_1^\pm,\ldots \mid \ldots,m_n^\pm]$,
when we want to move the bar to the rightmost position (denoting a final strand),
we write 
$s{\gg}=::r_1,\ldots,r_k::\;[m_1^\pm,\ldots,m_n^\pm \mid nil]$.
\end{definition}

\begin{definition}[Fresh Generating Strands]\label{def:strands:super-lazy-term}
Given a state $St$ containing an intruder fact 
$ghost(t)$ for some term $t$ with variables,
we define the set of strands associated to $t$, denoted 
$\textnormal{strands}_{St}(t)$, as follows:
\begin{itemize}
\item
for each strand $s$ in $St$ of the form
$::r_1,\ldots,r_k::\;[m_1^\pm,\ldots \mid \ldots,m_n^\pm]$,
if there is $i\in\{1,\ldots,k\}$ s.t. $r_{i}\in\var(t)$,
then $s{\gg}$ is included into $\textnormal{strands}_{St}(t)$;
or 
\item
for each strand $s$ in $St$ of the form
$::r_1,\ldots,r_k::\;[m_1^\pm,\ldots \mid \ldots,m_n^\pm]$,
if there is another strand $s'$
of the form 
$::r'_1,\ldots,r'_{k'}::\;[w_1^\pm,\ldots \mid \ldots,w_{n'}^\pm]$
in $\textnormal{strands}_{St}(t)$,
and
there are $i\in\{1,\ldots,k\}$ and $j\in\{1,\ldots,n'\}$
s.t. $r_{i}\in\var(w_{j})$,
then $s{\gg}$ is included into $\textnormal{strands}_{St}(t)$.
\end{itemize}
\end{definition}

Now, we formally define how to resuscitate a state.

\begin{definition}[Resuscitation]\label{def:resuscitation}
Given a topmost rewrite theory $\caR_\caP = \linebreak[4](\Symbols_\caP,E_{\caP},\widetilde{R_{\caP}})$
representing protocol $\caP$
and a state $St$ containing an intruder fact $\inI{t}$ such that $t$ 
is a super-lazy term, i.e., $St = ss \,\&\, (\inI{t},ik)$
where $ss$ is a term denoting a set of strands
and $ik$ is a term denoting the rest of the intruder knowledge.
Let $\overline{St}$ be the ghost version of $St$.
Let $St'$ be a state such that
$\overline{St} \narrto^{*}_{\sigma,\widetilde{R_{\caP}}^{-1},E_\caP} St'$
and
$\sigma(t)$ is not a super-lazy term.
Let $\sigma_t=\sigma\restrict{\var(t)}$.
The \emph{reactivated} (or \emph{resuscitated}) version of $St$ 
w.r.t. state $St'$ and substitution $\sigma_t$ 
is defined as 
$\widehat{St}=\sigma_t(ss) \,\&\, \sigma_t(ik) \,\&\, \textnormal{strands}_{St'}(\sigma_t(t))$.
\end{definition}

Let us now prove the completeness of this state space reduction technique.

\begin{theorem}
Given a topmost rewrite theory $\caR_\caP = (\Symbols_\caP,E_{\caP},\widetilde{R_{\caP}})$
representing protocol $\caP$
and a state $St$ containing an intruder fact $\inI{t}$ such that $t$ 
is a super-lazy term,
if there exist an initial state $St_\mathit{ini}$ 
and substitution $\theta$
such that
$St 
\narrto^{*}_{\theta,\widetilde{R_{\caP}}^{-1},E_\caP} St_\mathit{ini}$,
then
(i)
there exist a state $St'$
and substitutions $\tau,\tau'$
such that
$St 
\narrto^{*}_{\tau,\widetilde{R_{\caP}}^{-1},E_\caP} St'$,
$\theta \congr{E_\caP} \tau\circ\tau'$,
and
$\tau(t)$ is not a super-lazy term,
and
(ii)
there exist a reactivated version $\widehat{St}$ of $St$ w.r.t. $St'$ and $\tau$,
an initial state $St'_\mathit{ini}$,
and substitutions $\theta'$, $\rho$
such that
$\widehat{St} 
\narrto^{*}_{\theta',\widetilde{R_{\caP}}^{-1},E_\caP} St'_\mathit{ini}$,
$\theta \congr{E_\caP} \theta'\circ\rho$,
and
$\rho(St'_\mathit{ini}) \subseteq_{E_\caP} St_\mathit{ini}$.
\end{theorem}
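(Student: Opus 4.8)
\noindent
The plan is to take the assumed solution path out of $St$, cut it at the first moment the super-lazy term $t$ becomes relevant, and then show that the reactivated state can re-derive the remainder of the path while only dropping the bookkeeping that was used to learn $t$.

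\medskip

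\noindent\emph{Locating the cut (part (i)).} First I would spell out the backwards narrowing sequence one step at a time,
$St = W_0 \narrto_{\rho_1,\widetilde{R_{\caP}}^{-1},E_\caP} W_1 \narrto \cdots \narrto_{\rho_n,\widetilde{R_{\caP}}^{-1},E_\caP} W_n = St_\mathit{ini}$,
so that $\theta \congr{E_\caP} (\rho_1\circ\cdots\circ\rho_n)\restrict{\var(St)}$, and put $\tau_k = \rho_1\circ\cdots\circ\rho_k$. Because $St_\mathit{ini}$ is initial it has no positive fact $\inI{u}$, so the fact $\inI{t}$ of $W_0$ is consumed at some step; by inspection of $\widetilde{R_{\caP}}$ only the reversed forms of Rule~\eqref{eq:positiveLearn-4} and of the rules of type~\eqref{eq:newstrand:lazy} consume a positive intruder fact, and each such step unifies the current instance of $t$ against an output pattern. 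I would take $k$ to be the least index for which $\tau_k(t)$ is not super-lazy with respect to $W_k$ in the sense of Definition~\ref{def:superlazy}, and set $St' = W_k$, $\tau = \tau_k$, with $\tau'$ the composition of $\rho_{k+1},\ldots,\rho_n$; this yields~(i). The existence of $k$ is the one subtle point here: I would argue that $k$ does not exceed the step consuming $\inI{t}$, since the unification performed there forces $t$ to match a concrete protocol or intruder output and hence to leave the trivially-producible class built from $IK_0$, the known variables and $\Omega$; in the degenerate case where the consumed instance is itself super-lazy (as for a name or the generator $g$), that consuming step together with the state element it introduces may simply be excised, producing a shorter path to an initial state contained in $St_\mathit{ini}$, so no generality is lost in assuming $k$ well defined.

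\medskip

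\noindent\emph{Re-deriving the suffix from $\widehat{St}$ (part (ii)).} I would form $\widehat{St} = \sigma_t(ss)\,\&\,\sigma_t(ik)\,\&\,\textnormal{strands}_{St'}(\sigma_t(t))$ with $\sigma_t = \tau\restrict{\var(t)}$, exactly as in Definition~\ref{def:resuscitation}. Rather than mirror narrowing step by step, I would construct a fully instantiated rewrite sequence out of $\widehat{St}$ and then lift it with the Completeness Theorem~\ref{thm:EscMeaMes-tcs06} (in its form for $\widetilde{R_{\caP}}$). Concretely, choose $\delta$ extending $\sigma_t$ so that $\delta$ applied to $\widehat{St}$ reproduces $\theta(St)$ with the single fact $\inI{\theta(t)}$ replaced by $\delta(\textnormal{strands}_{St'}(\sigma_t(t)))$. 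The assumed path lifts to a rewrite sequence $\theta(St)\to^{*}_{\widetilde{R_{\caP}}^{-1},E_\caP} St_\mathit{ini}$; in it the fact $\inI{\theta(t)}$ is learned, and the strands introduced to produce $\theta(t)$ are precisely the reset fresh-generating strands collected by Definition~\ref{def:strands:super-lazy-term}. Excising the steps that introduced those strands and starting instead from the pre-included copies in $\widehat{St}$ gives a rewrite sequence $\delta(\widehat{St})\to^{*}_{\widetilde{R_{\caP}}^{-1},E_\caP} St'_\mathit{ini}$ whose endpoint differs from $St_\mathit{ini}$ only by the learning artifacts of $t$, so that $St'_\mathit{ini}\subseteq_{E_\caP} St_\mathit{ini}$. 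Applying Theorem~\ref{thm:EscMeaMes-tcs06} to the non-initial state $\widehat{St}$ then produces a narrowing sequence $\widehat{St}\narrto^{*}_{\theta',\widetilde{R_{\caP}}^{-1},E_\caP} St'_\mathit{ini}$, and combining its substitution relation with $\theta\congr{E_\caP}\tau\circ\tau'$ and $\sigma_t=\tau\restrict{\var(t)}$ yields $\theta\congr{E_\caP}\theta'\circ\rho$ and $\rho(St'_\mathit{ini})\subseteq_{E_\caP} St_\mathit{ini}$.

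\medskip

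\noindent\emph{Main obstacle.} I expect the hard part to be the $\sort{Fresh}$ bookkeeping that makes the excision in part (ii) legitimate. One must check that $\textnormal{strands}_{St'}(\sigma_t(t))$ contains every strand the excised steps would have introduced, namely every strand of $St'$ that generates a $\sort{Fresh}$ value occurring in $\sigma_t(t)$ or recursively connected to one --- this is exactly the closure built by Definition~\ref{def:strands:super-lazy-term} --- and that resetting their bars to the rightmost position (Strand Reset) never blocks a later backwards step, since, as in the priority argument for Rule~\eqref{eq:negative:back}, moving a bar toward the end only enables, never disables, subsequent transitions. One must also keep the $\sort{Fresh}$ variables distinct so that no two nonces are accidentally identified, and verify that the only state elements present in $St_\mathit{ini}$ but absent from $St'_\mathit{ini}$ are those introduced solely to justify $\inI{t}$; establishing this, together with the composition bookkeeping for the substitutions under the convention $(\sigma\circ\theta)(X)=\theta(\sigma(X))$, is the crux of the argument.
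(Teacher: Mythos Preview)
Your approach is correct and matches the paper's overall strategy: cut the assumed narrowing sequence at the first point where the accumulated substitution destroys the super-lazy status of $t$, form the resuscitated state there, and appeal to narrowing completeness to recover a path to an initial state. The paper differs from you in one routing detail: rather than working directly from $St$ and excising the $t$-learning steps from a rewrite sequence, the paper first transfers the prefix to a narrowing sequence out of the \emph{ghost} state $\overline{St}$ (via Theorem~\ref{thm:hosc06}), reaching a state $St''$ that agrees with your $St'$ except for carrying $\textnormal{ghost}(\tau(t))$ in place of $\inI{\tau(t)}$; it then forms $\widehat{St}$ relative to $St''$ and $\tau_t$ and invokes Theorem~\ref{thm:hosc06} a second time. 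The paper's proof is considerably terser than yours --- it neither justifies the existence of the cut in part~(i) nor exhibits the rewrite sequence underlying the second completeness appeal --- so your excision argument and your discussion of the $\sort{Fresh}$ bookkeeping make explicit precisely what the paper leaves implicit. One small correction: the completeness result you should cite is the topmost narrowing completeness Theorem~\ref{thm:hosc06}, not Theorem~\ref{thm:EscMeaMes-tcs06}, which is stated for $R_\caP$ rather than for $\widetilde{R_\caP}$.
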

\begin{proof}
%
The sequence from $St$ to 
$St_\mathit{ini}$ can be decomposed into two fragments,
computing substitutions $\tau$, $\tau'$, respectively,
such that 
$\tau$ is the smallest part of $\theta$
that makes 
$\tau(t)$ not a super-lazy term.
That is,
there is a state $St'$ and substitutions $\tau$, $\tau'$
such that
$\tau(t)$ is not a super-lazy term,
$\theta=\tau\circ\tau'$,
$St 
\narrto^{*}_{\tau,\widetilde{R_{\caP}}^{-1},E_\caP} St'
\narrto^{*}_{\tau',\widetilde{R_{\caP}}^{-1},E_\caP} St_\mathit{ini}$,
and
the sequence
$St 
\narrto^{*}_{\tau,\widetilde{R_{\caP}}^{-1},E_\caP} St'$
can be viewed as
$St = St_0 
\narrto_{\tau_1,\widetilde{R_{\caP}}^{-1},E_\caP}
\cdots
\narrto_{\tau_k,\widetilde{R_{\caP}}^{-1},E_\caP} St_k = St'$
such that for all $i \in \{1,\ldots,k-1\}$,
$\tau_i(t)$ is a super-lazy term.
However, using the completeness results of narrowing,
Theorem~\ref{thm:hosc06},
there must be a narrowing sequence from $\overline{St}$ computing
such substitution $\tau$.
That is,
there is a state $St''$
such that
$\overline{St} 
\narrto^{*}_{\tau,\widetilde{R_{\caP}}^{-1},E_\caP} St''$
and $St''$ differs from $St'$ 
(modulo $E_\caP$-equivalence and variable renaming)
only in that $\inI{\tau(t)}$
is replaced by $ghost(\tau(t))$.
Let 
$\tau_t=\tau\restrict{\var(t)}$,
there exists a substitution $\tau''$ s.t.
$\tau \congr{E_\caP} \tau_t\circ\tau''$.
Let $\widehat{St}$ be the resuscitated version of $St$ w.r.t. state $St''$ and substitution $\tau_t$.
Then,
by narrowing completeness,
i.e., Theorem~\ref{thm:hosc06},
there exist a state $St'_\mathit{ini}$
and substitutions $\sigma,\rho$ such that
$\widehat{St} 
\narrto^{*}_{\sigma,\widetilde{R_{\caP}}^{-1},E_\caP} St'_\mathit{ini}$,
$\tau'' \circ \tau' \congr{E_\caP} \sigma\circ\rho$,
and 
$\rho(St'_\mathit{ini}) \congr{E_\caP} St_\mathit{ini}$.
\qed
\end{proof}

\subsubsection{Improving the Super-Lazy Intruder.}

When we detect a state $St$ with a super lazy term $t$,
we may want to analyze whether the variables of $t$ may be eventually
instantiated or not
before creating a ghost state. 
The following definition provides the key idea.

\begin{definition}[Void Super-Lazy Term]
Given a topmost rewrite theory 
	\linebreak
$\caR_\caP = (\Symbols_\caP,E_{\caP},\widetilde{R_{\caP}})$
representing protocol $\caP$,
and a state $St$ containing an intruder fact $\inI{t}$ such that $t$ 
is a super-lazy term,
if 
for each strand
$[m_1^\pm, \ldots, m_{j-1}^\pm \mid m_{j}^\pm, \ldots, m_k^\pm]$ in $St$
and each $i\in\{1,\ldots,j-1\}$,
$\var(t)\cap\var(m_{i})=\emptyset$,
and
for each term $\inI{w}$ in the intruder's knowledge, 
$\var(t)\cap\var(w)=\emptyset$,
then,
$t$ is called a \emph{void super-lazy term}.
\end{definition}

\begin{proposition}\label{prp:superlazy-improved}
Given a topmost rewrite theory 
$\caR_\caP = (\Symbols_\caP,E_{\caP},\widetilde{R_{\caP}})$
representing protocol $\caP$
and a state $St$ containing an intruder fact $\inI{t}$ such that $t$ 
is a void super-lazy term,
let $\overline{St}$ be the ghost version of $St$ w.r.t. 
the void super-lazy term $t$.
If 
there exist an initial state $St_\mathit{ini}$ 
and a substitution $\theta$
such that
$St 
\narrto^{*}_{\theta,\widetilde{R_{\caP}}^{-1},E_\caP} St_\mathit{ini}$,
then
there exist an initial state $St'_\mathit{ini}$
and 
substitutions $\sigma,\rho$
such that 
$\overline{St} 
\narrto^{*}_{\sigma,\widetilde{R_{\caP}}^{-1},E_\caP} St'_\mathit{ini}$,
$\theta \congr{E_\caP} \sigma \circ \rho$, and
$\rho(St'_\mathit{ini}) \subseteq_{E_\caP} St_\mathit{ini}$.
\end{proposition}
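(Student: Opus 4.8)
The plan is to exploit the defining feature of a \emph{void} super-lazy term: its variables cannot be bound by any backwards narrowing step that does not itself resolve $\inI{t}$, so once $\inI{t}$ is ghosted the fact $ghost(t)$ is completely inert, and the resuscitation of Definition~\ref{def:resuscitation} is never triggered. Concretely, I would start from the given witnessing sequence $St \narrto^{*}_{\theta,\widetilde{R_{\caP}}^{-1},E_\caP} St_\mathit{ini}$ and split its steps into two interacting-free groups: those that eventually eliminate $\inI{t}$ (the step rewriting $\inI{t}$ to $\nI{t}$ while introducing a strand producing $t$, together with all later steps devoted to that producer strand and its recursively introduced sub-strands) and all the remaining steps.

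The first thing to prove is that the \emph{remaining} steps never bind $\var(t)$. This is exactly where the void hypothesis does the work: among the rules of $\widetilde{R_{\caP}}$, only reversed \eqref{eq:positiveLearn-4}, the rules \eqref{eq:newstrand:lazy}, and reversed \eqref{eq:negative:back} can bind state variables, and they do so only through $E_\caP$-unification against an existing positive fact $\inI{w}$ or against a past message $m_i$ crossing the bar; by voidness $\var(t)$ is disjoint from every such $\inI{w}$ and from every past message $m_i$ of every strand. Using the same case analysis on $\widetilde{R_{\caP}}$ as in Lemmas~\ref{lem:subsumption-2-inI} and \ref{lem:subsumption-2-strand}, I would check that this disjointness is preserved, since newly introduced strands carry only fresh variables and the variables of the challenge they resolve, none of which meet $\var(t)$. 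Dually, the $t$-producing group confines all its new bindings and fresh variables to $\var(t)$ plus fresh producer variables: because $t$ is super-lazy, its producer sub-strands are intruder-operation strands whose input challenges are $\inI{t_i}$ for subterms $t_i$ of $t$, so the whole sub-computation lives inside $\var(t)\cup(\text{fresh producer variables})$, disjoint from everything the remaining steps touch.

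Given this disjointness, the two groups are independent top-level narrowing steps, so I would argue (justified by Theorem~\ref{thm:hosc06}) that the $t$-producing group may be postponed to the end of the sequence without changing the effect of the others. The prefix of remaining steps, applied to $\overline{St}$ where $\inI{t}$ has been replaced by the inert $ghost(t)$, is then a legal sequence $\overline{St} \narrto^{*}_{\sigma,\widetilde{R_{\caP}}^{-1},E_\caP} St'_\mathit{ini}$ with $\sigma\restrict{\var(t)}=\idsubst$; its endpoint is initial, since all bars are at the beginning and the only positive fact $\inI{u}$ that could obstruct this, namely $\inI{t}$, is now a (dropped) ghost. The postponed tail supplies the remaining bindings $\rho$ — including the binding of $\var(t)$ performed while producing $t$ — and the producer strands by which $St_\mathit{ini}$ exceeds $St'_\mathit{ini}$; hence $\theta \congr{E_\caP} \sigma\circ\rho$, and, since $St'_\mathit{ini}$ coincides with $St_\mathit{ini}$ except that the producer strands of $t$ are absent and $\nI{\theta(t)}$ is carried as the disregarded ghost of $t$, every counted state element of $\rho(St'_\mathit{ini})$ occurs in $St_\mathit{ini}$, i.e. $\rho(St'_\mathit{ini}) \subseteq_{E_\caP} St_\mathit{ini}$ in the sense of Definition~\ref{def:inclusion}.

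The main obstacle I anticipate is making the independence/commutation argument fully rigorous in the presence of $E_\caP$-unification and \sort{Fresh} variables: one must verify that postponing the $t$-producing steps really yields a valid narrowing sequence realizing the decomposition $\theta \congr{E_\caP}\sigma\circ\rho$, and that no producer sub-strand indexed by a fresh variable is silently required by a retained step — which it is not, precisely because those fresh variables, like all of $\var(t)$, are isolated by voidness. An alternative that avoids explicit commutation is to prove the statement by induction on the length of the sequence, carrying the invariant that the $St$-branch state and the $\overline{St}$-branch state differ only by $ghost(t)$ together with a set of already-processed $t$-producer strands; each step is then either replayed on the ghost branch or absorbed into that set, with the bookkeeping of $\sigma$ and $\rho$ updated locally. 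I would most likely develop the inductive version, since it confines the handling of $E_\caP$ and freshness to a single step.
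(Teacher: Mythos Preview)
Your proposal is plausible but takes a considerably more laborious route than the paper. You work directly with the narrowing sequence from $St$, partitioning its steps into a $t$-producing group and a remainder, arguing that these are variable-disjoint (using voidness and a preservation invariant), and then either commuting the $t$-group to the end or running an induction that replays the remainder on $\overline{St}$. All of this machinery---the step-by-step case analysis on $\widetilde{R_\caP}$, the preservation of disjointness, the commutation of top-level narrowing steps modulo $E_\caP$---is exactly what the paper sidesteps.

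The paper's argument is much shorter and essentially semantic. It never manipulates the narrowing sequence at all. Instead it factors the computed substitution as $\theta \congr{E_\caP} \theta_t\circ\theta'$ with $\theta_t=\theta\restrict{\var(t)}$, observes that because $t$ is a \emph{void} super-lazy term the instance $\theta'(\overline{St})$ already \emph{rewrites} (not narrows) to an initial state $St''_\mathit{ini}$ obtained from $St_\mathit{ini}$ by dropping the intruder strands that generate $t$, and then applies Theorem~\ref{thm:hosc06} once to lift this rewriting sequence to a narrowing sequence $\overline{St}\narrto^{*}_{\sigma,\widetilde{R_\caP}^{-1},E_\caP}St'_\mathit{ini}$ with the required $\sigma,\rho$. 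The inclusion $\rho(St'_\mathit{ini})\subseteq_{E_\caP}St_\mathit{ini}$ then follows via $St''_\mathit{ini}\subseteq_{E_\caP}St_\mathit{ini}$. In short: the paper trades your explicit independence/commutation argument for a single invocation of narrowing completeness on an instantiated ghost state. Your approach would give a more explicit correspondence between the two narrowing sequences, but at the cost of the very bookkeeping you flag as the main obstacle; the paper's approach simply avoids that obstacle.
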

\begin{proof}
Since $t$ is a super-lazy term, 
$St_\mathit{ini}$ contains a sequence of 
intruder strands of $\caS_\caP$ generating $t$.
Let $\theta_t=\theta\restrict{\var(t)}$,
there exists a substitution $\theta'$ s.t.
$\theta \congr{E_\caP} \theta_t\circ\theta'$.
Since $t$ is a void super-lazy term, 
there is a state $St''_\mathit{ini}$
such that
$\theta'(\overline{St}) \rewrites{\widetilde{R_{\caP}}^{-1},E_\caP}
St''_\mathit{ini}$.
Then, by narrowing completeness, i.e., Theorem~\ref{thm:hosc06},
there are an initial state $St'_\mathit{ini}$
and 
substitutions $\sigma,\rho$
such that 
$\overline{St} 
\narrto^{*}_{\sigma,\widetilde{R_{\caP}}^{-1},E_\caP} St'_\mathit{ini}$,
$\theta' \congr{E_\caP} \sigma \circ \rho$, and
$\rho(St'_\mathit{ini}) \subseteq_{E_\caP} St''_\mathit{ini}$.
Finally, $St''_\mathit{ini} \subseteq_{E_\caP} St_\mathit{ini}$,
since $St_\mathit{ini}$ simply has the strands generating $t$
that $St''_\mathit{ini}$ does not contain.
\qed
\end{proof}

\subsubsection{Interaction with Transition Subsumption.}\label{sec:interaction}

When a ghost state is reactivated, we see from the above
definition that such a reactivated state will be $\caP$-subsumed by 
the original state that raised the ghost expression. 
Therefore, the transition subsumption relation
$\blacktriangleright$ of 
Section~\ref{sec:folding} has to be slightly modified to avoid checking 
a resuscitated state against its predecessor ghost state. 
Now, let us formally state this problem.

\begin{definition}[Resuscitated Child]\label{def:resuscitated-child}
Given a topmost rewrite theory $\caR_\caP = (\Symbols_\caP,E_{\caP},\widetilde{R_{\caP}})$
representing protocol $\caP$
and two non-initial states $St$ and $St'$
such that $St$ contains an intruder fact $\inI{t}$
and $t$ is a super-lazy term, 
we 
say $St'$ is a \emph{resuscitated child} of $St$,
written $St \curvearrowright St'$,
if:
\begin{enumerate}

\item
given the ghost version $\overline{St}$  of $St$ w.r.t. 
the super-lazy term $t$,
then
there exist states $St_1,\ldots,St_k$,
substitutions $\tau_1,\ldots,\tau_k$,
and $i\in\{1,\ldots,k\}$
such that
$$
\overline{St} 
\narrto_{\tau_1,\widetilde{R_{\caP}}^{-1},E_\caP} St_1
\cdots
St_{i-1}
\narrto_{\tau_i,\widetilde{R_{\caP}}^{-1},E_\caP} St_i
\cdots
St_{k-1}
\narrto_{\tau_k,\widetilde{R_{\caP}}^{-1},E_\caP} St_k,
$$
$\tau_j(t)$ is a super-lazy term for $1\leq j\leq i-1$, 
and
$\tau_i(t)$ is not a super-lazy term,
and 

\item
given the reactivated version $\widetilde{St}$ of $St$ w.r.t. $St_i$ and $\tau=\tau_1\circ\cdots\circ\tau_i$
and
$\tau_t=\tau\restrict{\var(t)}$,
there exist substitutions $\tau'_1,\ldots,\tau'_k$ such that
$\tau_j=\tau_t\circ\tau'_j$ for $1\leq j\leq k$,
states $St'_1,\ldots,St'_k$,
and a narrowing sequence
$$
\widetilde{St} 
\narrto_{\tau'_1,\widetilde{R_{\caP}}^{-1},E_\caP} St'_1
\cdots
St'_{k-1}
\narrto_{\tau'_k,\widetilde{R_{\caP}}^{-1},E_\caP} St'_k
$$
\item
then
there is $j\in\{1,\ldots,k\}$ such that
$St' \congr{E_\caP} St'_j$.
\end{enumerate}
\end{definition}

\begin{proposition}\label{prp:problem-interaction}
Given a topmost rewrite theory $\caR_\caP = (\Symbols_\caP,E_{\caP},\widetilde{R_{\caP}})$
representing protocol $\caP$
and two non-initial states $St$ and $St'$
such that $St$ contains an intruder fact $\inI{t}$
and $t$ is a super-lazy term, 
if $St \curvearrowright St'$,
then 
$St \blacktriangleright St'$
and reachability completeness is lost.
\end{proposition}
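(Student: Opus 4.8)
The plan is to prove the two conjuncts separately, since $St \blacktriangleright St'$ is a statement about the relation $\blacktriangleright$ while ``reachability completeness is lost'' is a statement about the search procedure that uses it. For the first conjunct I would unfold the hypothesis $St \curvearrowright St'$ through Definition~\ref{def:resuscitated-child}, obtaining the ghost sequence $\overline{St} \narrto^{*} St_i \narrto^{*} St_k$, the reactivated state $\widetilde{St}$ built from $St_i$ and $\tau_t=\tau\restrict{\var(t)}$ (Definition~\ref{def:resuscitation}), the resuscitated sequence $\widetilde{St} \narrto_{\tau'_1,\widetilde{R_{\caP}}^{-1},E_\caP} St'_1 \cdots \narrto_{\tau'_k,\widetilde{R_{\caP}}^{-1},E_\caP} St'_k$ with $St'\congr{E_\caP}St'_j$ for some $j$, and the factorizations $\tau_j=\tau_t\circ\tau'_j$ relating the ghost and resuscitated step substitutions. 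The witness for $St \blacktriangleright St'$ will be the cumulative substitution $\theta$ of the ghost (equivalently resuscitated) prefix up to step $j$, pre-composed with $\tau_t$.

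The base case is $St \blacktriangleright \widetilde{St}$ with witness $\tau_t$. Writing $St = ss\,\&\,(\inI{t},ik)$, we have $\tau_t(St)=\tau_t(ss)\,\&\,(\inI{\tau_t(t)},\tau_t(ik))$, while $\widetilde{St}=\tau_t(ss)\,\&\,\tau_t(ik)\,\&\,\textnormal{strands}_{St_i}(\tau_t(t))$. Every intruder fact of $\tau_t(ik)$ and every non-initial strand of $\tau_t(ss)$ occurs in $\widetilde{St}$ with its bar unchanged, because resuscitation only instantiates these components and does not move their bars; the remaining fact $\inI{\tau_t(t)}$ is the one that must be confirmed present in $\widetilde{St}$ (as in Example~\ref{ex:resuscitated}) or regenerated from the inserted fresh-generating strands $\textnormal{strands}_{St_i}(\tau_t(t))$, which reconstruct exactly the intruder's knowledge of $\tau_t(t)$. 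Granting this, $\tau_t(St)\triangleright\widetilde{St}$. To push the subsumption to $St'_j$, I would exploit the one-to-one correspondence between resuscitated and ghost steps encoded by $\tau_j=\tau_t\circ\tau'_j$: each resuscitated step reproduces the matching ghost step after factoring out $\tau_t$, so the $ss$- and $ik$-parts evolve in lockstep while the generating strands keep $\inI{\tau_\cdot(t)}$ available, yielding $\theta(St)\triangleright St'_j$, i.e.\ $St\blacktriangleright St'$.

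For the second conjunct I would argue at the level of the search. By hypothesis $St$ carries the super-lazy fact $\inI{t}$, so the super-lazy intruder ghosts $St$; the search does not continue by narrowing $St$ directly, and its only continuation runs through $\overline{St}$ and, once $t$ is instantiated, through the resuscitated branch passing through $\widetilde{St}$ and $St'$. Now Theorem~\ref{thm:subsumption} justifies discarding a state only because its reachable initial states are recovered from the branch of the \emph{subsuming} node; but here the subsuming node is the ancestor $St$, and $St'$ lies on $St$'s own ghost-resuscitated continuation rather than on an independent branch. Since the transition subsumption of Section~\ref{sec:folding} discards any leaf that is $\caP$-subsumed by a previously generated node, the relation $St\blacktriangleright St'$ just established causes $St'$ to be discarded, severing the unique narrowing path toward the initial states reachable only after the instantiation $\tau_t(t)$. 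I would make this concrete with Example~\ref{ex:resuscitated}, whose initial state is reachable only once $K$ has been instantiated to $exp(X,n(a,r))$: discarding the resuscitated $St'$ makes the tool report no attack, contradicting the general completeness of Theorem~\ref{thm:EscMeaMes-tcs06}.

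The main obstacle is the propagation step of the first conjunct together with its bar-position bookkeeping: one must verify that the resuscitated narrowing steps regenerate $\inI{\tau_t(t)}$ from the inserted generating strands while leaving the bars of the $ss$-strands in the positions demanded by $\triangleright$, and this is precisely where the factorization $\tau_j=\tau_t\circ\tau'_j$ of Definition~\ref{def:resuscitated-child} carries the argument. A secondary difficulty is that ``reachability completeness is lost'' is a negative statement about the procedure, so it must be substantiated by a genuine scenario --- such as Example~\ref{ex:resuscitated} --- in which pruning $St'$ removes the only path to an initial state, rather than by merely observing that such pruning is possible.
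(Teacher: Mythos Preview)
Your two-conjunct decomposition matches the paper's, but your route to the first conjunct differs from the paper's and does not close the gap you yourself flag as the ``main obstacle''. The paper does \emph{not} try to show $\theta(St)\triangleright St'_j$ for the original, un-narrowed state $St$. Instead it uses the two parallel narrowing sequences that Definition~\ref{def:resuscitated-child} already supplies --- the ghost branch $\overline{St}\narrto_{\tau_1}\!St_1\narrto\cdots\narrto_{\tau_k}\!St_k$ (which is in the search tree) and the resuscitated branch $\widetilde{St}\narrto_{\tau'_1}\!St'_1\narrto\cdots\narrto_{\tau'_k}\!St'_k$ --- and argues directly that $St_j\blacktriangleright St'_j$ for every $j$: since $\overline{St}$ differs from $St$ only by trading $\inI{t}$ for $ghost(t)$, and $\widetilde{St}$ carries all strands and positive intruder facts of $St$ instantiated by $\tau_t$ plus the extra fresh-generating strands, each $St'_j$ contains every strand and positive fact of $St_j$ instantiated by $\tau_t$; the $ghost$ expression in $St_j$ is not a positive intruder fact and is simply ignored by $\triangleright$. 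Because both branches have been narrowed the same $j$ times with matching rules (that is what $\tau_j=\tau_t\circ\tau'_j$ encodes), the bars sit at the same positions on both sides, so the bar clause of $\triangleright$ is satisfied automatically. The subsuming node in the loss-of-completeness argument is then the previously generated $St_j$ on the ghost branch, and it is $St_j\blacktriangleright St'_j$ that triggers the discard of $St'_j$.

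Your direct push from $St\blacktriangleright\widetilde{St}$ to $St\blacktriangleright St'_j$ does not go through: after $j$ resuscitated steps the bars in the $ss$-strands of $St'_j$ have moved left, while $\theta(St)$ keeps the bars exactly where they were in $St$ --- a substitution never moves bars. Hence $\theta(St)\triangleright St'_j$ can fail on the strand-bar clause of $\triangleright$, and the factorization $\tau_j=\tau_t\circ\tau'_j$ does not repair this, since it synchronizes substitutions, not bar positions. The fix is precisely the paper's move: compare $St_j$ with $St'_j$ rather than $St$ with $St'_j$, so that both sides have undergone the same bar movements. This also dissolves your worry about regenerating $\inI{\tau_t(t)}$: in $St_j$ the term $t$ lives inside a $ghost$ expression, which is not a positive intruder fact and therefore need not be matched in $St'_j$ at all. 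Your treatment of the second conjunct is fine in spirit; just note that in the paper it is $St_j$, not $St$, that plays the role of the ``previously generated node'' causing the discard.
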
 
\begin{proof}
Since $\overline{St}$ is similar to $St$ but $\inI{t}$ has been replaced by $ghost(t)$, and $\widehat{St}$ contains all the strands and positive intruder facts of $St$ but instantiated with $\tau\restrict{\var(t)}$,
then
for the sequences
$$
\overline{St} 
\narrto_{\tau_1,\widetilde{R_{\caP}}^{-1},E_\caP} St_1
\cdots
St_{k-1}
\narrto_{\tau_k,\widetilde{R_{\caP}}^{-1},E_\caP} St_k
$$
and
$$
\widetilde{St} 
\narrto_{\tau'_1,\widetilde{R_{\caP}}^{-1},E_\caP} St'_1
\cdots
St'_{k-1}
\narrto_{\tau'_k,\widetilde{R_{\caP}}^{-1},E_\caP} St'_k
$$
we have that
$St_j \blacktriangleright St'_j$
for $j\in\{1,\ldots,k\}$,
since $St'_j$
contains all the strands and positive intruder facts of $St_j$ but instantiated with $\tau\restrict{\var(t)}$.
Reachability completeness is lost because 
if there is an initial state $St_\mathit{ini}$
and substitution $\tau'$
such that
$St 
\narrto^{*}_{\tau,\widetilde{R_{\caP}}^{-1},E_\caP} St'
\narrto^{*}_{\tau',\widetilde{R_{\caP}}^{-1},E_\caP} St_\mathit{ini}$,
then,
since 
$St$ is replaced 
by $\overline{St}$ during the backwards reachability analysis and later 
$\overline{St}$ is replaced by $\widehat{St}$,  
when Maude-NPA finds that $St_j \blacktriangleright St'_j$,
it removes $St'_j$ from the backwards reachability analysis,
(possibly) leaving no successor of $St$ leading to $St_\mathit{ini}$.
\qed
\end{proof}

The simplest way of ensuring whether or not $St_1 \curvearrowright St_2$ is to examine the relative positions of $St_1$ and $St_2$ in the search tree
as well as the narrowing steps between them in the form established by Definition~\ref{def:resuscitated-child}.  However, for reasons of efficiency, we want to
keep examinations of the search tree to a minimum, and restrict ourselves as much as possible to looking at information in the state itself.  Thus, we make
use of information that is already in the state, the message sequence first mentioned in Section 3.1.  We find, that after making minor modifications to this
message sequence to take account of resuscitated ghosts, a simple syntactic check on the sequence can provide a relation that approximates $\curvearrowright$.

In order to formally identify when a resuscitated state
must not be erroneously discarded by $\blacktriangleright$,
we extend protocol states to have 
the actual message exchange sequence between principal or intruder strands
and 
add a new expression
$resuscitated(m)$
to indicate when a state has been resuscitated.
The actual set of rewrite rules extended 
to compute the exchange sequence is as follows,
where $X$ is a variable denoting an exchange sequence:

\noindent
\begin{scriptsize}%
\begin{align}%
[L ~|~ M^-, L']\  \&\ SS\ \&\ (\inI{M},IK)
  \ \& \ (M^-,X)
  &\to 
  [L, M^- ~|~ L']\  \&\ SS\ \&\ (\inI{M},IK)
  \ \& \ X
  \notag
  \\
 [L ~|~ M^+, L']\  \&\ SS\ \&\ IK 
 \hspace{8.6ex}
  \ \& \ (M^+,X)
  &\to 
  [L, M^+ ~|~ L']\  \&\ SS\ \&\ IK
  \ \& \ X
  \notag
  \\
 [L ~|~ M^+, L']\  \&\ SS\ \&\ (\nI{M},IK) 
  \ \& \ (M^+,X)
  &\to 
  [L, M^+ ~|~ L']\  \&\ SS\ \&\ (\inI{M},IK)
  \ \& \ X
  \notag
\end{align}%
\vspace{-6ex}
\begin{align}%
\mbox{for each }[~ l_1,\ u^+,\ l_2~] \in \caS_{\caP}:
[~ l_1 ~|~ u^+, l_2~] \, \&\, SS\, \,\&\, (\nI{u},IK)
  \ \& \ (u^+,X)
\to
SS \,\&\, (\inI{u},IK)
  \ \& \ X
\notag
\end{align}%
\end{scriptsize}%
Completeness reachability is obviously preserved for this set of rules
and for the obvious extensions to $\overline{R_{\caP}}$ and
$\widetilde{R_\caP}$.
For instance, the resuscitated state of Example~\ref{ex:resuscitated} 
will be written as follows,
where the resuscitated message is the first item in the exchange sequence:

{\small
$$
\begin{array}{l}
[\ nil \mid exp(X,n(a,r))^-,\ e(exp(X,n(a,r), sec(a, r'')))^-,\ sec(a, r'')^+\ ]\ \& \\[1ex]
:: r,r'' ::\\[1ex]
[\ (a ; B' ; exp(g,n(a,r)))^+,\ (B' ; a ; X)^-,\  (e(exp(X,n(a,r)), sec(a,r'')))^+
\mid nil \ ]\ \&\\[1ex]
::r':: 
[\ (A ; B ; Y)^-, (B ; A ; exp(g,n(B,r')))^+ \mid 
(e(exp(Y,n(B,r')),sec(a,r'')))^-\ ]\ \&\\[1ex]
(\inI{e(exp(Y,n(B,r')),sec(a,r''))},\ 
\inI{exp(X,n(a,r))},\\[1ex]
\ \inI{e(exp(X,n(a,r)), sec(a, r'')))},\ 
\nI{sec(a,r'')})\ \&\\[1ex]
(resuscitated(exp(X, n(a, r))),\  
exp(X, n(a, r)))^-,\  
e(exp(X, n(a, r)), sec(a, r'')))^-,\\[1ex]  
\ (sec(a, r''))^+,\  
(exp(Y, n(b, r')))^-,\  
(sec(a, r''))^-,\  
(e(exp(Y, n(b, r')), sec(a, r'')))^+,\\[1ex]  
\ (e(exp(Y, n(b, r')), sec(a, r'')))^-
)
\end{array}
$$
}

In \cite{EscobarMeadowsMeseguerESORICS08}, we provided a very simple
rule 
for approximating Definition~\ref{def:resuscitated-child}.

\begin{definition}
Given a topmost rewrite theory $\caR_\caP = (\Symbols_\caP,E_{\caP},\widetilde{R_{\caP}})$
representing protocol $\caP$
and two non-initial states $St_{1}, St_{2}$, 
we write $St_{1} \dashrightarrow St_{2}$
if 
either 
$St_{1}$ does not contain an expression $\textnormal{ghost}(m)$
for a message term $m$
or
$St_{1}$ does contain an expression $\textnormal{ghost}(m)$
for a message term $m$
but
$St_2$ does not contain the expression $resuscitated(m)$.
\end{definition}

The following result establishes that $\dashrightarrow$ is an approximation of $\curvearrowright$. The proof is straightforward.

\begin{lemma}\label{lem:lazy-1}
Given a topmost rewrite theory $\caR_\caP = (\Symbols_\caP,E_{\caP},\widetilde{R_{\caP}})$
representing protocol $\caP$
and two non-initial states $St_{1}, St_{2}$, 
if $St_1 \curvearrowright St_2$,
then $St_1 \dashrightarrow St_2$.
\end{lemma}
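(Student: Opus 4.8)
The plan is to unfold the definition of $St_1 \dashrightarrow St_2$ and argue by a case split on whether the earlier state $St_1$ carries a ghost expression. Recall that $St_1 \dashrightarrow St_2$ holds precisely when either $St_1$ contains no expression of the form $\textnormal{ghost}(m)$, or $St_1$ contains some $\textnormal{ghost}(m)$ while $St_2$ does not contain the matching marker $resuscitated(m)$. So I would assume $St_1 \curvearrowright St_2$ and establish one of these two disjuncts. The easy case is immediate: if $St_1$ contains no expression of the form $\textnormal{ghost}(m)$, then the first disjunct holds and $St_1 \dashrightarrow St_2$ follows with nothing further to prove.

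The remaining (and only substantive) case is where $St_1$ does contain some $\textnormal{ghost}(m)$, and here I would show that $St_2$ cannot contain $resuscitated(m)$ for that same $m$. The key observation comes from Definition~\ref{def:resuscitated-child} together with Definition~\ref{def:resuscitation} and the extended rule set carrying the message-exchange sequence: in the relation $St_1 \curvearrowright St_2$ the term driving the resuscitation is the super-lazy term $t$, which occurs in $St_1$ as the positive intruder fact $\inI{t}$ (not as a ghost), and $St_2$ is obtained by narrowing from the reactivated state $\widetilde{St_1}$. The only $resuscitated$ marker created along this process is $resuscitated(\tau_t(t))$, with $\tau_t=\tau\restrict{\var(t)}$. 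Since the pre-existing ghost $\textnormal{ghost}(m)$ in $St_1$ is untouched by the ghosting of $t$ (which only replaces $\inI{t}$) and by the reactivation step (which, by Definition~\ref{def:resuscitation}, only instantiates and reinstates $t$ and adjoins its fresh-generating strands), the term $m$ is never resuscitated on this branch; hence $m$ is distinct modulo $E_\caP$ from the resuscitated term $\tau_t(t)$, so $resuscitated(m)$ does not occur in $St_2$. The second disjunct of $\dashrightarrow$ therefore holds, completing the argument.

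The main obstacle I anticipate is making precise the claim that a pre-existing ghost $\textnormal{ghost}(m)$ in $St_1$ never produces a spurious $resuscitated(m)$ marker in $St_2$, i.e. ruling out the coincidence $m \congr{E_\caP} \tau_t(t)$. This rests on the invariant that a term carried as a ghost has not yet been resuscitated along the current branch: resuscitation removes the ghost expression (by the ghost-state and resuscitation definitions) at the same moment it introduces the matching $resuscitated$ marker, so $\textnormal{ghost}(m)$ and $resuscitated(m)$ are mutually exclusive in any single state and, a fortiori, the marker for $t$ cannot be confused with a marker for $m$. Once this invariant is stated explicitly, the second case is immediate and the overall proof is, as claimed, routine.
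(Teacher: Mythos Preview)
The paper gives no proof beyond the assertion that it is ``straightforward,'' so your case analysis is already more than the paper provides. Your Case~1 is correct and is almost certainly what the authors have in mind: in Definition~\ref{def:resuscitated-child} the state $St_1$ is required to contain the positive fact $\inI{t}$, not $\textnormal{ghost}(t)$---it is the \emph{pre-ghosting} state---so in the basic scenario where no earlier ghosting has occurred, $St_1$ carries no ghost expression at all and the first disjunct of $\dashrightarrow$ fires immediately. That is presumably the entire content of ``straightforward.''

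Your Case~2 (a pre-existing $\textnormal{ghost}(m)$ already sitting in $St_1$) is a genuine corner case that the paper simply does not address. Your argument there is reasonable but, as you yourself flag, leans on an invariant---mutual exclusion of $\textnormal{ghost}(m)$ and $resuscitated(m)$ in a single state, and hence $m \not\congr{E_\caP} \tau_t(t)$---that is never stated or proved in the paper. Note also a looseness in the surrounding definitions: the $St_1$ of $\curvearrowright$ carries $\inI{t}$, whereas the state actually kept in the search tree (and against which subsumption is checked) is the ghosted $\overline{St_1}$; the paper silently conflates these. So the lemma is best read as intended only for the simple situation your Case~1 dispatches, and your attempt to close Case~2 rigorously goes beyond what the paper itself justifies.
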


Now, we can provide a better transition subsumption relation.

\begin{definition}[$\caP$-subsumption relation II]\label{def:lazy-1}
Given a topmost rewrite theory $\caR_\caP = (\Symbols_\caP,E_{\caP},\widetilde{R_{\caP}})$
representing protocol $\caP$
and two non-initial states $St_{1}, St_{2}$, 
we write $St_{1} \blacktriangleright_\textit{II} St_{2}$
and say that $St_{2}$ is \emph{$\caP$-subsumed} by $St_{1}$
if there is a substitution $\theta$ s.t. 
$\theta(St_{1}) \triangleright St_{2}$
and
$\theta(St_{1}) \not\dashrightarrow St_2$.
\end{definition}

\noindent
Reachability completeness is straightforward from
Lemma~\ref{lem:lazy-1}
and
Proposition~\ref{prp:problem-interaction},
since 
$St_{1} \not\dashrightarrow St_2$
implies
$St_{1} \not\curvearrowright St_2$.

Though this method solves the problem, it disables almost completely  the transition subsumption for those states after a resuscitation,
since $\dashrightarrow$ is a bad approximation of $\curvearrowright$.
Here, we provide a more concise definition of the interaction
between the transition subsumption and the super-lazy intruder
reduction techniques.

We characterize those states after a resuscitation
that are truly linked to the parent state.
First, we identify those states that are directly resuscitated versions
of a former state.
Intuitively, by comparing the exchange sequences of the two states,
we can see whether
the exchange sequence of the former 
is $(L_1,L_2,M_1^-,L_3)$
and it has a ghost expression $\textnormal{ghost}(M_1)$,
whereas
the exchange sequence of the resuscitated version
is $(L_1,resuscitated(M_1),L_2,M_1^-,L_3)$.

\begin{definition}
Given a topmost rewrite theory $\caR_\caP = (\Symbols_\caP,E_{\caP},\widetilde{R_{\caP}})$
representing protocol $\caP$
and two non-initial states $St_{1}, St_{2}$, 
we say that $St_2$ is a \emph{direct resuscitated version} of $St_1$,
written $S_1 \twoheadrightarrow St_2$,
if 
there are messages $M_1$ and $M_2$ 
and
a substitution $\rho$ such that
\begin{enumerate}
\item
state $St_1$ has a ghost of the form $ghost(M_1)$,
\item
the exchange sequence of state $St_1$ is of the form
$$(L_1,L_2,M_1^-,L_3)$$
\item
the exchange sequence of state $St_2$ is of the form
$$(L'_1,resuscitated(M_2),L'_2,M_2^-,L'_3),$$
\item
and 
$\rho(L_1,L_2,M_1^-,L_3) \congr{E_\caP} (L'_1,L'_2,M_2^-,L'_3).$
\end{enumerate}
\end{definition}

\noindent
Relation $\twoheadrightarrow$ is closer to $\curvearrowright$.

\begin{lemma}\label{lem:lazy-2}
Given a topmost rewrite theory $\caR_\caP = (\Symbols_\caP,E_{\caP},\widetilde{R_{\caP}})$
representing protocol $\caP$
and two non-initial states $St_{1}, St_{2}$, 
if $St_1 \twoheadrightarrow St_2$,
then $St_1 \curvearrowright St_2$.
\end{lemma}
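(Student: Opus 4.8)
The plan is to exhibit, from the witnesses of $St_1 \twoheadrightarrow St_2$, all the ingredients demanded by Definition~\ref{def:resuscitated-child}, thereby concluding $St_1 \curvearrowright St_2$. Fix the messages $M_1,M_2$, the substitution $\rho$, and the exchange sequences $(L_1,L_2,M_1^-,L_3)$ of $St_1$ and $(L'_1,resuscitated(M_2),L'_2,M_2^-,L'_3)$ of $St_2$ provided by $St_1 \twoheadrightarrow St_2$. Since $St_1$ carries $ghost(M_1)$, the term $t:=M_1$ is super-lazy and $\overline{St_1}$ is defined, with $M_2\congr{E_\caP}\rho(M_1)$ the instance at which the ghost term first ceases to be super-lazy. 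Thus it suffices to produce (1) a narrowing run $\overline{St_1}\narrto^* St_i$ along which $\tau_j(t)$ stays super-lazy until the final step where $\tau_i(t)=M_2$ is not, (2) the reactivation $\widetilde{St_1}$ of $St_1$ w.r.t.\ $St_i$ and $\tau_t=\tau\restrict{\var(t)}$ (Definition~\ref{def:resuscitation}), and (3) a run out of $\widetilde{St_1}$ one of whose states is $E_\caP$-equal to $St_2$.

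The key observation is that each backward narrowing step with the exchange-augmented rules of Section~\ref{sec:interaction} prepends exactly one entry to the exchange sequence, and the resuscitation procedure inserts exactly one $resuscitated(\cdot)$ marker; hence an exchange sequence is a faithful, step-by-step record of the narrowing history that produced it, which I would read off directly. Condition (4) of $\twoheadrightarrow$, namely $\rho(L_1,L_2,M_1^-,L_3)\congr{E_\caP}(L'_1,L'_2,M_2^-,L'_3)$, says that the exchange sequence of $St_2$, once its single $resuscitated(M_2)$ marker is deleted, is exactly the $\rho$-instance of that of $St_1$. Reading these records back as narrowing histories therefore exhibits the history of $St_2$ as the $\rho$-instance of the history of $St_1$ with one resuscitation event spliced in. Identifying $\rho$ with the ghost instantiation $\tau_t$, the portion of $St_1$'s history up to the point where the ghost term becomes $M_2$ supplies the run $\overline{St_1}\narrto^* St_i$ of clause (1), the marker witnesses the reactivation $\widetilde{St_1}$ of clause (2), and the remaining $\rho$-instantiated steps supply the post-resuscitation run $\widetilde{St_1}\narrto_{\tau'_1} St'_1\cdots\narrto_{\tau'_k}St'_k$. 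Invoking Theorem~\ref{thm:hosc06} realizes each reconstructed rewrite step as a narrowing step carrying the composite substitutions $\tau=\tau_1\circ\cdots\circ\tau_i$ and $\tau'_1,\ldots,\tau'_k$ required by Definition~\ref{def:resuscitated-child}. Finally, since $St_2$'s full exchange sequence coincides with the terminal one of this reconstructed post-resuscitation run, $St_2\congr{E_\caP}St'_j$ for the corresponding index $j$, which is clause (3).

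I expect the main obstacle to be the bookkeeping of substitutions rather than of messages: one must verify the factorization $\tau_j=\tau_t\circ\tau'_j$ of clause (2), i.e.\ that the bindings of the ghost variables $\var(t)$ are cleanly separated from the instantiation carried along the post-resuscitation run. Condition (4) of $\twoheadrightarrow$ does the essential work here, forcing the two exchange sequences to agree up to $\rho=\tau_t$, but turning that agreement of recorded messages into the precise decomposition of the composite narrowing substitutions is the delicate point of the argument. Everything else is routine matching of the reconstructed rule applications against those available from $\overline{St_1}$ and $\widetilde{St_1}$, together with the observation that the shape of the reactivated state in Definition~\ref{def:resuscitation} is exactly what the resuscitation recorded in $St_2$'s sequence produces.
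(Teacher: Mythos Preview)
The paper does not give a proof of this lemma; it is stated without argument between the remark that ``$\twoheadrightarrow$ is closer to $\curvearrowright$'' and the observation that the converse implication fails. So there is nothing to compare your attempt against directly.

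Your proposal captures the intended argument well: the exchange sequence is a step-by-step log of the backwards run, the single $resuscitated(\cdot)$ marker witnesses exactly one reactivation event, and condition~(4) of $\twoheadrightarrow$ forces the log of $St_2$ (with the marker deleted) to be the $\rho$-instance of the log of $St_1$. From this you reconstruct the two runs required by Definition~\ref{def:resuscitated-child}. That is the right shape.

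One point deserves more care than you give it. Your phrase ``read off directly'' treats the exchange sequence as if it determines the narrowing history, but the sequence records only the processed message at each step, not which of the several $M^+$ rules fired nor the full unifier chosen. This is harmless for the lemma because you only need \emph{existence} of suitable runs, not uniqueness, and because both $St_1$ and $St_2$ are tacitly assumed to be nodes of the Maude-NPA search tree (so each already comes equipped with an actual generating run whose log is its exchange sequence). You should make this operational assumption explicit rather than rely on a reconstruction-from-log argument: once you know $St_2$ was produced in the tree, its $resuscitated(M_2)$ marker must have been inserted by an actual reactivation of some ghost state whose log is the suffix $(L'_2,M_2^-,L'_3)$; condition~(4) then identifies that ghost state with $\overline{St_1}$ up to $\rho$, and the prefix $L'_1$ supplies the post-resuscitation steps. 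Your appeal to Theorem~\ref{thm:hosc06} is then unnecessary, since the runs already exist in the tree rather than needing to be lifted from rewrites.

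Your identification of the substitution factorization $\tau_j=\tau_t\circ\tau'_j$ as the delicate point is accurate; the cleanest way to discharge it is again operational, noting that the resuscitation procedure (Definition~\ref{def:resuscitation}) applies only $\tau_t=\tau\restrict{\var(t)}$ to $St_1$ and that subsequent steps out of $\widetilde{St_1}$ compute the remaining $\tau'_j$.
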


%

However, 
$St_1 \curvearrowright St_2$
does not imply
$St_1 \twoheadrightarrow St_2$
and
we have to go even further.
Relation $St_1 \twoheadrightarrow St_2$ takes into account only 
whether $St_2$ is a resuscitated version of $St_1$, but
does not consider what happens beyond the state that produced the instantiation that reactivated the ghost state.
Intuitively, now we compare the exchange sequences of the two states
to see whether
the exchange sequence of the first 
is $(L_1,L_2,L_3,M_1^-,L_4)$
and it has a ghost expression $\textnormal{ghost}(M_1)$,
whereas
the exchange sequence of the second
is $(L_1,M_1^+,L_2,resuscitated(M_1),L_3,M_1^-,L_4)$.
Indeed, a recursive definition can be given here that becomes extremely useful when several resuscitations have happened in a concrete state.

\begin{definition}
Given a topmost rewrite theory $\caR_\caP = (\Symbols_\caP,E_{\caP},\widetilde{R_{\caP}})$
representing protocol $\caP$
and two non-initial states $St_{1}, St_{2}$, 
we say that $St_2$ is a \emph{resuscitated version} of $St_1$,
written $S_1 \twoheadrightarrow^+ St_2$,
if 
$S_1 \twoheadrightarrow St_2$
or
there are messages $M_1$ and $M_2$, 
a substitution $\rho$,
and sequences $L'_1,L''_1$
 such that:
\begin{enumerate}
\item
state $St_1$ has a ghost of the form $ghost(M_1)$,
\item
the exchange sequence of state $St_1$ is of the form
$$(L_1,L_2,L_3,M_1^-,L_4)$$
\item 
the exchange sequence of state $St_2$ is of the form
$$(L'_1,L''_1,M_2^+,L'_2,resuscitated(M_2),L'_3,M_2^-,L'_4)$$
\item
$\rho(L_2,L_3,M_1^-,L_4) \congr{E_\caP} (L'_2,L'_3,M_2^-,L'_4)$
\item
$L''_1$ is the longest sequence such that
each message $m^\pm$ in $L''_1$ has message $\rho(M_1)$ as a subterm
\item
and
either
\begin{enumerate}
\item
$\rho(L_1) \congr{E_\caP} L'_1$
or
\item
$St'_{1} \twoheadrightarrow^+ St'_2$ 
where 
$St'_1$ is $St_1$ without the $ghost(M_1)$ expression
and
$St'_2$ is $St_2$ with the shorter 
exchange sequence 
$(L_1',L'_2,L'_3,M_2^-,L'_4)$.
\end{enumerate}
\end{enumerate}
\end{definition}

The following result establishes that $\twoheadrightarrow^+$ is a better approximation of $\curvearrowright$
than $\dashrightarrow$. The proof is straightforward.

\begin{lemma}\label{lem:lazy-3}
Given a topmost rewrite theory $\caR_\caP = (\Symbols_\caP,E_{\caP},\widetilde{R_{\caP}})$
representing protocol $\caP$
and two non-initial states $St_{1}, St_{2}$, 
if $St_1 \curvearrowright St_2$,
then $St_1 \twoheadrightarrow^+ St_2$.
\end{lemma}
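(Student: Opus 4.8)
The plan is to unfold the definition of $St_1 \curvearrowright St_2$ and then read off, from the narrowing data it supplies, an exchange-sequence decomposition matching the recursive pattern required by $\twoheadrightarrow^+$; this is precisely the converse of Lemma~\ref{lem:lazy-2}, which is why it needs the richer, recursive relation rather than the plain $\twoheadrightarrow$. By Definition~\ref{def:resuscitated-child} there is a super-lazy term $t$ with a $\textnormal{ghost}(t)$ expression in $St_1$, a narrowing sequence $\overline{St_1} \narrto_{\tau_1} U_1 \cdots \narrto_{\tau_i} U_i$ in which $\tau_j(t)$ is super-lazy for $j<i$ and $\tau_i(t)$ is not, a reactivated version $\widetilde{St_1}$ of $St_1$ w.r.t.\ $U_i$ and $\tau=\tau_1\circ\cdots\circ\tau_i$, a continuation $\widetilde{St_1} \narrto_{\tau'_1} V_1 \cdots \narrto_{\tau'_k} V_k$, and an index $j$ with $St_2 \congr{E_\caP} V_j$. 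Writing $M_1=t$, $\rho=\tau\restrict{\var(t)}$ and $M_2=\rho(M_1)$, the goal is to produce the sequences $L_1,\ldots,L_4$, $L'_1,\ldots,L'_4$ and $L''_1$ demanded by the definition of $\twoheadrightarrow^+$.

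First I would locate the marker $M_1^-$ in the exchange sequence of $St_1$. Since $\textnormal{ghost}(M_1)$ arose by ghosting a fact $\inI{M_1}$ that was itself created, reading the rules backwards, when a negative occurrence $M_1^-$ was moved into the intruder knowledge by Rule~\eqref{eq:negative:back}, the exchange sequence of $St_1$ contains such an $M_1^-$ and hence splits as $(L_1,L_2,M_1^-,L_3)$, giving clauses (1) and (2). Next I would track how the extended rewrite rules append entries to the exchange sequence along $\overline{St_1}\narrto^{*} U_i$ and through the reactivation: each narrowing step prepends one message, and the reactivation inserts $resuscitated(M_2)$ at the position of the former ghost, re-inserts $M_2^-$, and applies $\rho$ to the retained part of the state. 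Thus the exchange sequence of $St_2 \congr{E_\caP} V_j$ is the instance under $\rho$ of $(L_1,L_2,M_1^-,L_3)$ with $resuscitated(M_2)$ inserted together with the extra entries recorded while $M_1$ was being instantiated, among which the triggering output $M_2^+$ occurs before the marker; this supplies clauses (3)--(5), with $L''_1$ taken as the maximal block of entries carrying $\rho(M_1)$ as a subterm.

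The case analysis then selects the disjunct of $\twoheadrightarrow^+$. If the instantiation coincided with the reactivation so that no pending output must be accounted for before the marker, the instance of $(L_1,L_2,M_1^-,L_3)$ with the inserted marker matches the first disjunct directly and $St_1 \twoheadrightarrow St_2$ holds. Otherwise the block $L''_1$ and the output $M_2^+$ genuinely appear, matching clauses (1)--(5) of the second disjunct, and clause (6) is settled by a further split: when the remaining prefix already agrees, $\rho(L_1) \congr{E_\caP} L'_1$ gives (6a); when instead a second, earlier ghost is still pending, stripping $\textnormal{ghost}(M_1)$ and the block $(L''_1,M_2^+)$ yields states $St'_1,St'_2$ that again stand in the relation $\curvearrowright$, so (6b) follows by the induction hypothesis, the induction being on the number of nested ghosts/resuscitations present in $St_1$.

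The main obstacle I anticipate is exactly this bookkeeping of the exchange sequence under the extended rules. One must verify that reactivation inserts precisely one $resuscitated(M_2)$ marker, shifts the surviving $M_1$ occurrences uniformly to $M_2=\rho(M_1)$, and preserves the relative order of all other entries; and one must show that the block $L''_1$ of entries containing $\rho(M_1)$ is correctly isolated, so that clause (5) (``$L''_1$ longest such'') holds and the recursion in clause (6b) is well founded, e.g.\ strictly decreasing in the number of nested ghosts. Once the evolution of the exchange sequence under $\widetilde{R_{\caP}}$ is pinned down, matching it to the two disjuncts of $\twoheadrightarrow^+$ is routine, which is why the statement can be called straightforward.
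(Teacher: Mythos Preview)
The paper does not give a proof of this lemma at all; it simply asserts that ``the proof is straightforward.'' Your proposal is therefore not so much to be compared against the paper's argument as to be read as an attempt to spell out what the authors presumably had in mind: unfold Definition~\ref{def:resuscitated-child}, follow how each backwards narrowing step and the reactivation act on the exchange sequence under the extended rules of Section~\ref{sec:interaction}, and match the resulting shape against the clauses of the definition of $\twoheadrightarrow^+$. That is exactly the natural route, and it is what the paper's one-line dismissal is gesturing at.

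Two small points worth tightening. First, when you write that the exchange sequence of $St_1$ splits as $(L_1,L_2,M_1^-,L_3)$ ``giving clauses (1) and (2),'' note that clause~(2) of the $\twoheadrightarrow^+$ definition actually demands a four-part split $(L_1,L_2,L_3,M_1^-,L_4)$; you are implicitly using the three-part pattern from the simpler relation $\twoheadrightarrow$ and should refine it before matching the recursive case. Second, your induction is on ``the number of nested ghosts/resuscitations present in $St_1$,'' but the recursive clause~(6b) shortens the exchange sequence of $St_2$ and strips one ghost from $St_1$; making explicit which measure strictly decreases (e.g.\ the number of $resuscitated(\cdot)$ markers in the exchange sequence of $St_2$, or equivalently the number of ghost expressions in $St_1$) will make the well-foundedness argument clean. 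With those adjustments your outline is sound and is the expected elaboration of the paper's omitted proof.
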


Now, we can provide a better transition subsumption relation.
\begin{definition}[$\caP$-subsumption relation III]\label{def:lazy-3}
Given a topmost rewrite theory $\caR_\caP = (\Symbols_\caP,E_{\caP},\widetilde{R_{\caP}})$
representing protocol $\caP$
and two non-initial states $St_{1}, St_{2}$, 
we write $St_{1} \blacktriangleright_\textit{III} St_{2}$
and say that $St_{2}$ is \emph{$\caP$-subsumed} by $St_{1}$
if there is a substitution $\theta$ s.t. 
$\theta(St_{1}) \triangleright St_{2}$
and $St_1 \not\twoheadrightarrow^+ St_2$.
\end{definition}

Finally, reachability completeness is straightforward from
Lemma~\ref{lem:lazy-3}
and
Proposition~\ref{prp:problem-interaction},
since 
$St_{1} \not\twoheadrightarrow^+ St_2$
implies
$St_{1} \not\curvearrowright St_2$.

\section{Experimental Evaluation}\label{sec:experiments}

%
%
%
%
%
%
%
%
%
%
%
%
%
%
%

\begin{table}[t]
{
\scriptsize

\centering

\begin{tabular}{@{}|c|rrrrr|rrrrr|c|@{}}

\hline

Protocol 
& \multicolumn{5}{c|}{none}
& \multicolumn{5}{c|}{Grammars} & \%
\\
\hline

{\sf NSPK}
& 5 &19 &136 &642 &4021
& 4 &12 &49 &185 &758
& 81
\\
\hline

{\sf NSL}
& 5 &19 &136 &642 &4019
& 4 &12 &50 &190 &804
& 79
\\
\hline

{\sf SecReT06} 
& 1 &6 &22 &119 &346
& 1 &2 &6 &15 &36
& 89
\\
\hline

{\sf SecReT07}
& 6 &20 &140 &635 &4854
& 6 &17 &111 &493 &3823
& 21
\\
\hline

{\sf DH}
& 1 &14 &38 &151 &816
& 1 &6 &14 &37 &105
& 87
\\

\hline

\end{tabular}


}

\vspace{1ex}

{
\scriptsize

\centering

\begin{tabular}{@{}|c|rrrrr|rrrrr|c|@{}}

\hline

Protocol 
& \multicolumn{5}{c|}{none}
& \multicolumn{5}{c|}{Input First} & \%
\\
\hline

{\sf NSPK}
& 5 &19 &136 &642 &4021
& 11 &123 &1669 &26432 & N/A
& 0
\\
\hline

{\sf NSL}
& 5 &19 &136 &642 &4019
& 11 &123 &1666 &26291 &N/A
& 0
\\
\hline

{\sf SecReT06} 
& 1 &6 &22 &119 &346
& 11 &133 &1977 &32098 &N/A
& 0
\\
\hline

{\sf SecReT07}
& 6 &20 &140 &635 &4854
& 11 &127 &3402 &N/A &N/A
& 0
\\
\hline

{\sf DH}
& 1 &14 &38 &151 &816
& 14 &135 &1991 &44157 &N/A
& 0
\\

\hline

\end{tabular}


}

\vspace{1ex}

{
\scriptsize

\centering

\begin{tabular}{@{}|c|rrrrr|rrrrr|c|@{}}

\hline

Protocol 
& \multicolumn{5}{c|}{none}
& \multicolumn{5}{c|}{Inconsistency} & \%
\\
\hline

{\sf NSPK}
& 5 &19 &136 &642 &4021
& 5 &18 &95 &310 &650
& 83
\\
\hline

{\sf NSL}
& 5 &19 &136 &642 &4019
& 5 &18 &95 &310 &650
& 83
\\
\hline

{\sf SecReT06} 
& 1 &6 &22 &119 &346
& 1 &6 &22 &114 &326
& 5
\\
\hline

{\sf SecReT07}
& 6 &20 &140 &635 &4854
& 6 &18 &107 &439 &3335
& 31
\\
\hline

{\sf DH}
& 1 &14 &38 &151 &816
& 1 &12 &12 &56 &128
& 84
\\

\hline

\end{tabular}


}

\vspace{1ex}

{
\scriptsize

\centering

\begin{tabular}{@{}|c|rrrrr|rrrrr|c|@{}}

\hline

Protocol 
& \multicolumn{5}{c|}{none}
& \multicolumn{5}{c|}{Transition Subsumption} & \%
\\
\hline

{\sf NSPK}
& 5 &19 &136 &642 &4021
& 5 &15 &61 &107 &237
& 94
\\
\hline

{\sf NSL}
& 5 &19 &136 &642 &4019
& 5 &15 &61 &107 &237
& 94
\\
\hline

{\sf SecReT06} 
& 1 &6 &22 &119 &346
& 1 &6 &15 &39 &78
& 77
\\
\hline

{\sf SecReT07}
& 6 &20 &140 &635 &4854
& 6 &15 &61 &165 &506
& 89
\\
\hline

{\sf DH}
& 1 &14 &38 &151 &816
& 1 &14 &26 &102 &291
& 64
\\

\hline

\end{tabular}


}

\vspace{1ex}

{
\scriptsize

\centering

\begin{tabular}{@{}|c|rrrrr|rrrrr|c|@{}}

\hline

Protocol 
& \multicolumn{5}{c|}{none}
& \multicolumn{5}{c|}{Super-lazy Intruder} & \%
\\
\hline

{\sf NSPK}
& 5 &19 &136 &642 &4021
& 5 &19 &136 &641 &3951
& 1
\\
\hline

{\sf NSL}
& 5 &19 &136 &642 &4019
& 5 &19 &136 &641 &3949
& 2
\\
\hline

{\sf SecReT06} 
& 1 &6 &22 &119 &346
& 1 &6 &22 &119 &340
& 2
\\
\hline

{\sf SecReT07}
& 6 &20 &140 &635 &4854
& 6 &16 &44 &134 &424
& 91
\\
\hline

{\sf DH}
& 1 &14 &38 &151 &816
& 1 &14 &38 &138 &525
& 35
\\

\hline

\end{tabular}


}

\vspace{1ex}

{
\scriptsize

\centering

\begin{tabular}{@{}|c|rrrrr|rrrrr|c|@{}}

\hline

Protocol 
& \multicolumn{5}{c|}{none}
& \multicolumn{5}{c|}{All optimizations} & \%
\\
\hline

{\sf NSPK}
& 5 &19 &136 &642 &4021
& 4 &6 &4 &2 &1
& 99
\\
\hline

{\sf NSL}
& 5 &19 &136 &642 &4019
& 4 &7 &6 &2 &0
& 99
\\
\hline

{\sf SecReT06} 
& 1 &6 &22 &119 &346
& 2 &3 &2 &- &-
& 99
\\
\hline

{\sf SecReT07}
& 6 &20 &140 &635 &4854
& 5 &1 &1 &1 &-
& 99
\\
\hline

{\sf DH}
& 1 &14 &38 &151 &816
& 4 &6 &10 &9 &12
& 99
\\

\hline

\end{tabular}

\caption{Number of states for 1,2,3, and 4 backwards narrowing steps
comparing each optimization of Sections
\ref{sec:grammars},\ref{sec:input-first},\ref{sec:inconsistent},\ref{sec:folding}, and \ref{sec:super-lazy-intruder}.}

\label{table-all}
}
\end{table}

\begin{table}[t]
{
\scriptsize

\centering

\begin{tabular}{@{}|c|p{5.5cm}|@{}}

\hline
Protocol & Finite State Space Achieved by:\\
\hline
{\sf NSPK} & Grammars and Subsumption\\
\hline
{\sf NSL} & Grammars and Subsumption\\
\hline
{\sf SecReT06} & Subsumption or (Grammars and Lazy)\\
\hline
{\sf SecReT07} & Subsumption and Lazy\\
\hline
{\sf DH} & Grammars and Subsumption\\
\hline

\end{tabular}
\caption{Finite state space achieved by reduction techniques}
\label{tableFinite}
}
\end{table}

In Table 
\ref{table-all},
we summarize the experimental evaluation of
the impact of the different state space reduction techniques 
for various example protocols searching up to depth $4$.
We measure several numerical values for the techniques:
(i) number of states at each backwards narrowing step,
and
(ii) whether the state space is finite or not.
The experiments have been performed on a MacBook with 2 Gb RAM
using Maude 2.6.
All protocol specifications are included in the official Maude-NPA distribution\footnote{Available at \url{http://maude.cs.uiuc.edu/tools/Maude-NPA}.}.
The protocols are the following:
(i) NSPK, the standard Needham-Schroeder protocol,
(ii) NSL, the standard Needham-Schroeder protocol with Lowe's fix
(which is secure and our tool can prove it),
(iii) SecReT06, a protocol with an attack using type confusion
and a bounded version of associativity that
we presented in \cite{escobar-meadows-meseguer-secret06},
(iv) SecReT07, a short version of the Diffie-Hellman protocol 
that
we presented in \cite{escobar-hendrix-meadows-meseguer-secret07},
and
(v) DH, the Diffie-Hellman protocol of Example~\ref{ExDH}.
Note that the label ``-'' means that the reachability analysis finished some levels before and the label ``N/A'' means that the execution was stopped after a reasonably large execution time.

The overall percentage of state-space reduction for each protocol
and an average ($99\%$) suggest that our combined techniques are remarkably
effective (the reduced number of states is on average
only $1\%$ or less of the original 
number of states).
The state reduction achieved by
consuming input messages first is difficult to analyze,
since the reduction shown in 
Table~\ref{table-all} for this optimization (labelled as ``Input First'')
is $0$.
The reason is that 
it can reduce the number of states in protocols 
that contain several input messages in the strands,  as in the NSPK protocol, 
but in general it simply reduces the length of the narrowing sequences
and therefore more states can be generated at an earlier depth of the narrowing tree compared to the case where the optimization is not used.
Table \ref{tableFinite} summarizes the different techniques
yielding a finite space for each protocol. 
The use of grammars and the transition subsumption are clearly the
most useful techniques in general.
Indeed, all examples have a \emph{finite search space} 
thanks to the combined use of the different state space reduction techniques.
Note that grammars are insufficient 
to obtain a finite space for
the SecReT07 example, while subsumption and the super lazy intruder are essential in this case.

\section{Concluding Remarks}\label{sec:conclusions}

The Maude-NPA can analyze the security of cryptographic protocols,
modulo given algebraic properties of the protocol's cryptographic
functions in executions with an unbounded number of sessions and with
no approximations or data abstractions.  In this full generality,
protocol security properties are well-known to be undecidable.  The
Maude-NPA uses backwards narrowing-based search from a symbolic
description of a set of attack states by means of patterns to try to reach an
initial state of the protocol.  If  an attack state
is reachable from an initial state, the Maude-NPA's complete narrowing 
methods are guaranteed to prove it.  But if the protocol is
secure, the backwards search may be infinite and never terminate.

It is therefore very important, both for efficiency and to achieve full
verification whenever possible when a protocol is secure, to use
\emph{state-space reduction techniques} that: (i) can drastically cut down
the number of states to be explored; and (ii) have in practice a good
chance to make the, generally infinite, search space finite without
compromising the completeness of the analysis; that is, so that if a protocol is indeed
secure, failure to find an attack in such a finite state space guarantees
the protocol's security 
for that attack 
relative to the assumptions about the intruder actions and the algebraic properties.
We have presented a
number of state-space reduction techniques used in combination by the
Maude-NPA for exactly these purposes.
We have given precise characterizations of
theses techniques and have shown that they preserve completeness, so that
if no attack is found and the state space is finite, full verification
of the given security property is achieved.

Using several representative examples we have also given an
experimental evaluation of these  techniques.
Our experiments support the conclusion that, when used in combination,
these techniques: (i) typically provide drastic state space reductions;
and (ii) they can often yield a \emph{finite} state space, so that
whether the desired security property holds or not can in fact be
decided automatically, in spite of the general undecidability of such problems.

\section*{Acknowledgements}
We are very thankful to Sonia Santiago for her help
on evaluating the different benchmarks.

\bibliography{tex,unification}
\bibliographystyle{elsarticle-harv}

\end{document}